\let\C\@undefined
\let\G\@undefined
\let\xhookrightarrow\@undefined
\newcommand\anonornot[2]{#1}
\newtheorem{proposition}{Proposition}
\newtheorem{theorem}{Theorem}
\newtheorem{lemma}{Lemma}
\newtheorem*{fact}{Fact}
\newtheorem*{convention}{Convention}
\newtheorem*{notation}{Notation}
\theoremstyle{remark}
\newtheorem{remark}[theorem]{Remark}
\newcommand{\CLASS}{\ensuremath{\mathcal{K}}}
\newcommand\etal{~\textit{et al}.\xspace}
\newcommand\set[1]{\left\{{#1}\right\}}
\newcommand\tuple[1]{\left({#1}\right)}
\newcommand\setcomp[2]{\left\{\ {#1}\ \left|\ {#2}\ \right.\right\}}
\newcommand\sizeof[1]{\left|{#1}\right|}
\newcommand\pds{\mathcal{P}}
\newcommand\controls{\mathcal{Q}}
\newcommand\talphabet{\Sigma}
\newcommand\salphabet{\Gamma}
\newcommand\oalphabet{\Gamma}
\newcommand\pdsrules{\Delta}
\newcommand\sconfig[2]{\left({#1}, {#2}\right)}
\newcommand\vecU{\vec{u}}
\newcommand\vecV{\vec{v}}
\newcommand\vecW{\vec{w}}
\newcommand{\varsetV}{\mathcal{V}}
\newcommand{\varsetW}{\mathcal{W}}
\newcommand{\defn}[1]{\textit{#1}}
\newcommand{\N}{\ensuremath{\mathbb{N}}}
\newcommand{\OMIT}[1]{}
\newcommand{\CM}{\ensuremath{\mathcal{M}}}
\newcommand{\TM}{\ensuremath{\mathcal{M}}}
\newcommand{\trs}{\ensuremath{\mathcal{R}}}
\newcommand{\trsclass}{\ensuremath{\text{\textsc{Trs}}_0}}
\newcommand{\Gtrsclass}{\ensuremath{\text{\textsc{Trs}}}}
\newcommand{\tree}{\ensuremath{T}}
\newcommand{\trees}{\ensuremath{\text{\textsc{Tree}}}}
\newcommand{\treedomain}{\ensuremath{D}}
\newcommand{\treelabeling}{\ensuremath{\lambda}}
\newcommand{\treelabels}{\ensuremath{\Sigma}}
\newcommand{\addchild}[1]{\hbox{\tt AddChild}(#1)}
\newcommand{\addclass}[1]{\hbox{\tt AddClass}(#1)}
\newcommand{\addleftkin}[1]{\hbox{\tt AddLeftKin}(#1)}
\newcommand{\addrightkin}[1]{\hbox{\tt AddRightKin}(#1)}
\newcommand{\removeclass}[1]{\ensuremath{\hbox{\tt RemoveClass}(#1)}}
\newcommand{\removenode}{\ensuremath{\hbox{\tt RemoveNode}}}
\newcommand{\post}{\ensuremath{post}}
\newcommand{\blank}{\ensuremath{\sqcup}}
\newcommand{\trleq}{\preceq}
\newcommand{\Rtrsclass}{\ensuremath{\text{\textsc{Trs}}_0'}}
\newcommand{\Root}{\text{\texttt{root}}}
\newcommand{\Team}{\text{\texttt{team}}}
\newcommand{\PlayerOne}{\text{\texttt{P1}}}
\newcommand{\PlayerTwo}{\text{\texttt{P2}}}
\newcommand{\Success}{\text{\texttt{success}}}
\newcommand{\Pop}{\text{\texttt{pop}}}
\newcommand{\lra}{\leftrightarrow}
\newcommand{\LRA}{\Leftrightarrow}
\newcommand{\Path}{\pi}
\newcommand{\CONF}{\text{\texttt{CONF}}}
\newcommand\dir[1]{{\langle #1 \rangle}}
\newcommand\inc{{\sf inc}}
\newcommand\dec{{\sf dec}}
\newcommand{\myclass}{\underline}
\newcommand{\EnoughTime}[1]{}
\newif\ifdraft\draftfalse
\newcommand\al[1]{{\color{blue}
[#1 - \textbf{Anthony}]}}
\newcommand\lo[1]{{\color{purple}
[#1 - \textbf{Luke}]}}
\newcommand\mh[1]{{\color{orange} [#1 - \textbf{Matt}]}}
\newcommand\alchanged[1]{{\color{blue}{#1}}}
\newcommand\mhchanged[1]{{\color{red}{#1}}}
\newcommand\todo[1]{}
\newcommand\al[1]{}
\newcommand\lo[1]{}
\newcommand\mh[1]{}
\newcommand\alchanged[1]{#1}
\newcommand\mhchanged[1]{#1}
\newcommand\shortlong[2]{#2}
\newcommand\acmeasychair[2]{#2}
\begin{document}
\pagestyle{empty}

\title{Detecting Redundant CSS Rules in HTML5 Applications: A Tree Rewriting Approach}
\titlerunning{Detecting Redundant CSS Rules in HTML5}
\author{Matthew Hague\inst{1} \and
        Anthony Widjaja Lin\inst{2} \and
        C.-H. Luke Ong\inst{3}}

\institute{
  Royal Holloway, University of London \and
  Yale-NUS College \and
  University of Oxford
}

\authorrunning{M. Hague, A. W. Lin, and C.-H. L. Ong}

\maketitle

HTML5 applications normally have a large set of CSS (Cascading Style Sheets) rules
for data display. Each CSS rule consists of a node selector and a declaration block
(which assigns values to selected nodes' display attributes).
As web applications evolve, maintaining CSS files can easily become
problematic. Some CSS rules will be
replaced by new ones, but
these obsolete (hence redundant) CSS rules often remain in the
applications. Not only does this ``bloat'' the applications
\mhchanged{
    -- increasing the bandwidth requirement --
}
but it also
significantly increases web browsers' processing time.
Most works on detecting redundant CSS rules in HTML5 applications do not
consider the dynamic behaviors of HTML5 (specified in JavaScript); in fact, the
only proposed method
that takes these into account is dynamic analysis, which cannot soundly prove
redundancy of CSS rules. In this paper,
we introduce an abstraction of HTML5 applications based on monotonic tree-rewriting
and study its ``redundancy problem''.
We establish the precise complexity of the problem and various subproblems of
practical importance (ranging from $\P$ to $\EXP$). In particular,
our algorithm relies on an efficient reduction to an analysis of symbolic pushdown
systems (for which highly optimised solvers are available), which yields a
fast method for checking redundancy in practice. We implemented
our algorithm and demonstrated its efficacy in detecting redundant CSS rules in
HTML5 applications.

\section{Introduction}

HTML5 is the latest revision of the HTML standard of the World Wide Web Consortium
(W3C), which has become a standard markup language of the Internet.
HTML5 provides a uniform framework for designing a web application:
(1) data content is given as a standard HTML tree, (2) rules for data display
are given in Cascading Style Sheets (CSS), and (3) dynamic behaviors are
specified through JavaScript.

An HTML5 application normally contains a large set of CSS rules
for data display, each consisting of a \emph{(node) selector} given in an
XPath-like query language and a \emph{declaration block} which assigns values to
selected nodes' display attributes.
\alchanged{However, many of these styling rules are often redundant (in the
sense of unreachable code), which ``bloat'' the application.
As a web application evolves, some rules will be replaced by new rules and
developers often forget to remove obsolete rules. Another cause of redundant
styling rules is the common use of HTML5 boilerplate (e.g. WordPress) since
they include many rules that the application will not need.}
A recent case study \cite{MesbahM12} shows that in several
industrial web applications \alchanged{on average} 60\% of the CSS rules are
redundant.
These bloated applications are not only harder to maintain, but they also
\mhchanged{
    increase the bandwidth requirement of the website and
}
significantly increase web browsers' processing time. In fact,
a recent study \cite{MB10} reports that when web browsers are loading popular pages
around 30\% of the CPU time is spent on CSS selectors (18\%) and parsing (11\%).
[These numbers are calculated \emph{without} even including the
extra 31\% uncategorised operations of the total CPU time, which could include
operations from these two categories.]
This suggests
the importance of detecting and removing redundant CSS rules in an HTML5
application.
Indeed, a sound and automatic redundancy checker would allow
bloated CSS stylesheets to be streamlined during development, and generic
stylesheets to be minimised before deployment.

There has been a lot of work on optimising CSS (e.g.~\cite{MTM14,MB10,MesbahM12,GenevesLQ12,BGL14}), which include merging
``duplicated'' CSS rules, refactoring CSS declaration blocks, and simplifying
CSS selectors, to name a few. However, most of these works analyse the set of CSS
rules \emph{in
isolation}. In fact, the only available methods
\alchanged{(e.g. Cilla~\cite{MesbahM12} and UnCSS~\cite{UnCSS})}
that take into account the dynamic
nature of HTML5 introduced by JavaScript are \alchanged{based on} simple \emph{dynamic
analysis}
(a.k.a. testing), which cannot soundly prove redundancy of CSS rules since such
techniques cannot in general test all possible behaviors of the HTML5
application.
For example, from the benchmarks of Mesbah and Mirshokraie~\cite{MesbahM12}
there are some non-redundant
CSS rules that their tool Cilla falsely identifies as redundant, e.g.,
due to
\mhchanged{%
    the use of JavaScript to compensate for
}%
browser-specific behavior under certain HTML5 tags like \texttt{<input/>}
(see Section \ref{sec:experiments} for more details).  Removing such
rules can distort the presentation of HTML5 applications, which is undesirable.

    \paragraph{Static Analysis of JavaScript}

    A different approach to identifying redundant CSS rules by using \emph{static
    analysis} for HTML5.  Since JavaScript is a Turing-complete programming
    language, the best one can hope for is approximating the behaviors of HTML5
    applications.  Static analysis of JavaScript code is a challenging
    goal, especially in the presence of libraries like jQuery.
    The current state of the art is well surveyed by Andreasen and
    M{\o}ller~\cite{AM14}, with the main tools in the field being
    WALA~\cite{SSDT13,SDCST12} and TAJS~\cite{AM14,JMT09,JMM11}. These tools
    (and others)
    provide traditional static analysis frameworks encompassing features such as
    points-to~\cite{JC09,SDCST12} and determinacy analysis~\cite{AM14,SSDT13}, type
    inference~\cite{jQuery-ecoop13,JMT09} and security
    properties~\cite{Gatekeeper,Actarus}.  The modelling of the HTML DOM is
    generally treated as part of the heap abstraction~\cite{JMM11,Actarus} and
    thus the tree structure is not precisely tracked.

    For the purpose of soundly identifying redundant CSS rules, we need a
    technique for computing a symbolic representation of an
    \emph{overapproximation} of the set of all reachable HTML trees that is
    sufficiently precise for real-world applications.  Currently there is no
    \emph{clean} abstract model that captures \emph{common} dynamics of the HTML
    (DOM) tree caused by the JavaScript component of an HTML5 application and at
    the same time is \emph{amenable to algorithmic analysis}. Such a model is not
    only important from a theoretical viewpoint, but it can also serve as a useful
    \emph{intermediate language} for the analysis of HTML5 applications which
    among others can be used to identify redundant CSS rules.

\paragraph{Tree rewriting as an intermediate language.}
The tree-rewriting paradigm --- which is commonly used in databases
(e.g.  \cite{MBPS05,FGN08,ABM04,GMW10,GMSZ08,ASV09}) and verification
(e.g.  \cite{Hague14,Lin12,Loding06,LS07,rtmc,Lin12b}) --- offers a clean
theoretical framework
for modelling the dynamics of tree updates and usually lends itself to
fully-algorithmic analysis. This makes tree-rewriting a suitable framework in which
to model the dynamics of tree updates commonly performed by HTML5 applications.
Surveying real-world HTML5 applications (including
Nivo-Slider~\cite{nivo-slider} and real-world examples from the benchmarks in
Mesbah and Mirshokraie~\cite{MesbahM12}),
we were surprised
to learn that one-step tree updates used in these applications are extremely
simple, despite the complexity of the JavaScript code from the point of view of
static analysers.  That
said, we found that these updates are
\emph{not} restricted to modifying only
certain regions of the HTML tree. As a result, models such as \emph{ground tree
rewrite systems} \cite{Loding06} and their extensions
\cite{LS07,Lin12,Hague14,Mayr00,GL14} (where \emph{only} the bottom part of the
tree may be modified) are not appropriate.
However, systems with rules that may rewrite nodes in \emph{any} region of
a tree are problematic since they render the simplest problem of reachability
undecidable. Recently, owing to the study of \emph{active XML}, some
restrictions that admit decidability of verification (e.g.
\cite{ABM04,GMSZ08,GMW10,ASV09}) have been obtained. 
However, these models
have very high complexity (ranging from double exponential time to nonelementary),
which makes practical implementation difficult.

\paragraph{Contributions.}
The main contribution of the paper is to give a simple and clean tree-rewriting
model which strikes a good balance between: (1) expressivity
in capturing the dynamics of tree updates commonly performed in HTML5 applications
(esp. insofar as detecting redundant CSS rules is concerned),
and (2) decidability and complexity of rule redundancy analysis
(i.e. whether a given rewrite rule
can ever be fired in a reachable tree).
We show that the complexity of the problem is $\EXP$-complete\footnote{These
complexity classes are defined below and we describe the roles they play in
our investigation.}, though under various
practical restrictions the complexity becomes $\PSPACE$ or even $\P$.
This is substantially better than the complexity of the more powerful tree rewriting
models studied in the context of active XML, which is at least double-exponential
time.
Moreover, our algorithm relies on
an efficient reduction to a reachability analysis in \emph{symbolic pushdown
systems} for which highly optimised solvers (e.g. Bebop \cite{bebop},
Getafix \cite{getafix}, and Moped \cite{moped}) are available.

We have implemented our reduction, together with a proof-of-concept translation
tool from HTML5 to our tree rewriting model.
\mhchanged{
    Our translation by no means captures the full feature-set of
    JavaScript and is simply a means of testing the underlying model and
    analysis we introduce\footnote{
        Handling JavaScript in its full generality is a difficult
        problem~\cite{AM14}, which is beyond the scope of this paper.
    }.%
}
We specifically focus on modelling standard features of \emph{jQuery}
\cite{jQuery} --- a simple JavaScript library that makes HTML
document traversal, manipulation, event handling, and animation easy from a web
application developer's viewpoint.  Since its use is so widespread in HTML5
applications nowadays (e.g., used in more than half of the top hundred thousand
websites \cite{jQuery-widespread}) some authors \cite{jQuery-ecoop13} have
advocated a study of jQuery as a language in its own right.  Our experiments
demonstrate the efficacy of our techniques in detecting redundant CSS rules in
HTML5 applications. Furthermore, unlike dynamic analysis, our techniques will
not falsely report CSS rules that \emph{may} be invoked as redundant (at least
within the fragment of HTML5 applications that our prototypical implementation
can handle).  \alchanged{We demonstrate this on a number of non-trivial examples
(including a specific example from the benchmarks of Mesbah and
Mirshokraie~\cite{MesbahM12} and an HTML application using the image
slider package Nivo-Slider \cite{nivo-slider}).}

\paragraph{Connection with existing works on static analysis of JavaScript.}
    As surveyed by Andreasen and M{\o}ller~\cite{AM14}, the static analysis of
    JavaScript in the presence of the jQuery library --- which is essential
    for analysing HTML5 applications --- is currently a formidable task for
    existing static analysers.  We consider our work to be complementary to
    these works: first, our tree-rewriting model may form part of a static analysis
    abstraction, and second, static analysis will be essential in translating HTML5
    applications into our tree rewriting model by extracting accurate tree update
    operations.  In our implementation, we provide an ad-hoc extraction of tree
    update operations that allows us to demonstrate the applicability of our
    approach.  Creating a robust static analysis that achieves this is an
    interesting and worthwhile research challenge, which might benefit from the
    recent remarkable effort by Bodin\etal~\cite{Maffeis} of capturing the full
    JavaScript semantics and verifying it with Coq.

\paragraph{Organisation.}
We give a quick overview of HTML5 applications via a simple example in Section
\ref{sec:html5-overview}. We then introduce our tree rewriting model in Section
\ref{sec:trs}. Since the
general model is undecidable, we introduce a monotonic abstraction in Section
\ref{sec:monotonic}. We provide an efficient reduction from an analysis of the
monotonic abstraction to symbolic pushdown systems in Section \ref{sec:upper}.
Experiments are reported in Section \ref{sec:experiments}.  We conclude with future
work in Section \ref{sec:conclusion}.
Missing proofs and further technical details can be found in the \shortlong{full version \cite{full}}{appendix}.

\paragraph{Notes on computational complexity.}
In this paper, we study not only decidability but also the \emph{complexity}
of computational problems. We believe that pinpointing the precise complexity of
verification problems is not only of fundamental importance, but also it often
suggests algorithmic techniques that are most suitable for attacking the
problem in practice. In this paper, we deal with the following computational
complexity classes (see \cite{Sipser} for more details): $\P$ (problems solvable in
polynomial-time), $\PSPACE$ (problems solvable in polynomial space), and
$\EXP$ (problems solvable in exponential time).
Verification problems that have
complexity $\PSPACE$ and $\EXP$ or beyond --- see
\cite{schwoon-thesis,MONA} for a few examples --- have substantially benefited from
techniques like symbolic model checking \cite{McMillan}. The redundancy problem
that we study in this paper is another instance of a hard computational problem that
can be efficiently solved by BDD-based symbolic techniques in practice.

\section{HTML5: a quick overview}
\label{sec:html5-overview}

In this section, we provide a brief overview of a simple HTML application.
We assume basic familiarity with the static elements of HTML5, i.e., HTML documents
(a.k.a. HTML DOM document objects) and CSS rules (e.g. see \cite{w3schools}). We
will discuss their formal
models in Section \ref{sec:trs}.
Our example (also available \anonornot{\mhchanged{at the
URL~\cite{simple-html-anon}}}{at the URL \cite{simple-html}}) is a small modification of an example
taken from an online tutorial \cite{sanwebe-addremove}, which is given in Figure
\ref{fig:sanwebe-example}.  To better understand the application, we suggest the
reader open it with a web browser and interact with it.

\begin{figure}
\begin{center}
    \acmeasychair{\input{html5-example-acm}}
                 {\input{html5-example-easychair}}
    \caption{\label{fig:sanwebe-example}A simple HTML5 application (see
    \anonornot{\mhchanged{\cite{simple-html-anon}}}
              {\cite{simple-html}}).}
\end{center}
\end{figure}

\mh{I couldn't get latex escaping working with labels in the source code, so if
the line numbers change we have to update them manually!}

In this example the
page displays a list of input text boxes contained in a \verb+div+ with class
\verb+input_wrap+.  The user can add more input boxes by
clicking the ``add field'' button, and can remove a text box by clicking its
neighbouring ``remove'' button. The script, however, imposes a limit (i.e. 10)
on the number of text boxes that can be added. If the user attempts to add
another text box when this limit is reached, the \verb+div+ with ID \verb+limit+
displays the text ``Limits reached'' in red.

This dynamic behavior is specified within the second
\verb+<script/>+ tag starting at Line 11 (the first simply loads the jQuery library).
To understand the script, we will provide a quick overview of jQuery calls
(see \cite{jQuery} for more detail). A simple jQuery call may take the form
\begin{minted}{javascript}
         $(selector).action(...);
\end{minted}
where `\$' denotes that it is a jQuery call, \verb+selector+ is a CSS selector, and
\verb+action+ is a rule for modifying the subtree rooted at this node.
For example, in Figure \ref{fig:sanwebe-example}, Line 29, we have
\begin{minted}{javascript}
         $('#limit').addClass('warn');
\end{minted}
The CSS selector \verb+#limit+ identifies the unique node in the tree with ID
\verb+limit+, while the \verb+addClass()+ call adds the class \verb+warn+ to this node.
The CSS rule
\begin{minted}{css}
        .warn { color: red }
\end{minted}
appearing in the head of the document at Line 2 will now match the node, and thus its
contents will be displayed in red.

Another simple example of a jQuery call in Figure \ref{fig:sanwebe-example} is
at Line 37
\begin{minted}{javascript}
         $('.warn').removeClass('warn');
\end{minted}
The selector \verb+.warn+ matches \emph{all}\footnote{Unlike node IDs, a single
class might be associated with multiple nodes} nodes in the tree with class
\verb+warn+.   The call to \verb+removeClass()+ removes the class \verb+warn+ from
these nodes. Observe that when this call is invoked, the CSS rule above will no
longer be matched.

Some jQuery calls may contain an event listener. E.g. at Line 16 we have
\begin{minted}{javascript}
        $('.button').click(...);
\end{minted}
in Figure \ref{fig:sanwebe-example}. This specifies that the function in
`\verb+...+' should fire when a node with class \verb+button+ is clicked.
Similarly at Line 33,
\begin{minted}{javascript}
$('.input_wrap').on('click', '.delete', ...);
\end{minted}
adds a click listener to any node within the \verb+input_wrap+ div
that has the class \verb+delete+.

In general, jQuery calls might form chains.  E.g. at Line 34 we have
\begin{minted}{javascript}
         $(this).parent('div').remove();
\end{minted}
In this line, the call \verb+$(this)+ selects the node which has been clicked.
The call to \verb+parent()+ and then \verb+remove()+ moves one step up the tree and
if it finds a \verb+div+ element, removes the entire subtree (which is of the form
\texttt{<div><input/><a></a></div>}) from the document.

In addition to the action \verb+remove()+ which erases an entire subtree from the
document, Figure \ref{fig:sanwebe-example} also contains other actions that
potentially modify the shape of the HTML tree. The first such action is
\verb+append(string1)+ (at Line 19), which simply appends \verb+string1+ at the \emph{end} of the
string inside the selected node tag. Of course, \verb+string1+ might represent
an HTML tree; in our example, it is a tree with three nodes. So, in effect
\verb+append()+ adds this tree as the right-most child of the selected node.
The second such action is \verb+html(string1)+ (e.g. at Line 25), which first erases the string
inside the selected node tag and then appends it with \verb+string1+. In effect,
this erases all the descendants of the selected node and adds a \emph{forest}
represented by \verb+string1+.

\mhchanged{
    \begin{remark}
    \label{rem:redundant-rule}
        An example where the CSS rule in Figure \ref{fig:sanwebe-example}
        becomes redundant is when the limit
        on the number of boxes is removed from the application (in effect, removing \texttt{x}),
        but the CSS is not updated to reflect the change (e.g. see
        \anonornot{\mhchanged{\cite{simple-html-nolimit-anon}}}{\cite{simple-html-nolimit}}).
    \end{remark}
}

In general, CSS selectors are non-trivial.  For example
\begin{minted}{css}
        .a.b .c { color: red }
\end{minted}
matches all nodes with class \verb+c+ and some ancestor containing both
classes \verb+a+ and \verb+b+ (the space indicates \verb+c+ appears on a
descendant).  Thus, detecting redundant CSS rules requires a
good knowledge of the kind of trees constructed by the application.  In
practice, redundant CSS rules easily arise when one modifies a sufficiently
complex HTML5 application (the size of the top 1000 websites has recently
exceeded 1600K Bytes
\cite{average-web-page}).
Some popular web pages are known to have
\mhchanged{an average of}
60\% redundant CSS rules, as
suggested by recent case studies \cite{MesbahM12}.


\newcommand\charfn[1]{{[#1]}}
\newcommand\PROP{\hbox{\sc Prop}_{\CLASS}}

\section{A tree-rewriting approach}
\label{sec:trs}

In this section, we present our tree-rewriting model. Our design philosophy is
to put a special emphasis on model simplicity and fully-algorithmic analysis with
good complexity, while retaining adequate expressivity in modelling common
tree updates in HTML5 applications (insofar as detecting redundant CSS rules is
concerned). We will start by giving an informal description of our approach and
then proceed to our formal model.

\subsection{An informal description of the approach}

\paragraph{Data representation}
The data model of HTML5 applications is the standard HTML (DOM) tree.
    In designing our tree rewriting model, we will adopt a data representation
    consisting of a finite set $\CLASS$ of \emph{classes}, and an unordered,
    unranked tree with the set $2^\CLASS$ of node labels.  An unordered tree
    does not have a sibling ordering, and an unranked tree does not fix the
    number of children of its nodes.
\alchanged{
Since jQuery and CSS selectors may reason about
adjacent node siblings, unordered trees are in general only an
\emph{overapproximation} of HTML trees. As we shall see later, the consequence
of this approximation is that some CSS rules that we identify in our analysis
as non-redundant
might turn out to be redundant when sibling ordering is accounted for, though
all CSS rules that we identify as redundant will \emph{definitely} be redundant
even with the sibling ordering (see Remark \ref{rm:ordered} in Section
\ref{sec:monotonic}).
Although it is possible in theory to extend our techniques to ordered
unranked trees, we choose to use unordered trees in our model for the purpose of
simplicity.
}
Not only do unordered trees give a clean data model, but they turn
out to be sufficient for analysing redundancy of CSS rules in most HTML5
applications.
\mhchanged{
    In the examples we studied, no false positives were reported as a consequence of the unordered approximation.
}%
The choice of tree labels is motivated by CSS and HTML5. Nodes in
an HTML document are tagged by HTML elements (e.g.  \texttt{div} or \texttt{a})
and associated with a set of classes, which can be added/removed by HTML5
scripts.  Node IDs and data attributes are also often assigned to specific
nodes, but they tend to remain unmodified throughout the execution of the
application and so can conveniently be treated as classes.

\paragraph{An ``event-driven'' abstraction}
Our tree-rewriting model is an ``event-driven'' abstraction of the script component
of HTML5 applications.
The abstraction consists of a (finite) set of tree-rewrite rules that can be
fired \emph{any} time in \emph{any} order (so long as they are enabled). In
this abstraction, one can imagine that each rewrite rule is associated with an
external event listener (e.g. listening for a mouse click, hover, etc.). Since
these external events cannot be controlled by the system, it is standard to
treat them (e.g. see \cite{reactive-systems-book}) as \emph{nondeterministic}
components, i.e., that they can occur concurrently and in any order\footnote{
    Note, although, in our model, each rewrite rule is executed atomically, an
    event that leads to two or more tree updates will be modelled by several
    rewrite rules.  Our analysis will be a ``path-insensitive''
    over-approximation in that no ordering or connection is maintained between
    these individual update rules.  Thus, an event leading to several tree
    updates is not assumed to be handled atomically.  Indeed, the order of the
    updates is also not maintained.  In our experiments this over-approximation
    did not lead to false positives in the analysis.}
Incidentally, the case for event-driven abstractions has been made in the
context of transformations of XML data \cite{BPW02}.

A tree-rewrite rule $\sigma$ in our rewrite systems is a tuple $(g,\chi)$ consisting
of a node selector $g$ (a.k.a. \emph{guard}) and a rewrite operation $\chi$.

To get a feel for our approach, we will construct an event-driven abstraction for
the script component of the HTML5 example in Figure \ref{fig:sanwebe-example}.
For simplicity, we will now use jQuery calls as tree-rewrite rules. 
We will formalise them later.

The event-driven abstraction for the example in Figure \ref{fig:sanwebe-example}
contains four rewrite rules as follows:
\begin{minted}{javascript}
(1) $('#limit').addClass('warn');
(2) $('.warn').removeClass('warn');
(3) $('.input_wrap').append('<div>
      <input/><a class="delete"></a></div>');
(4) $('.input_wrap').find('.delete').
      parent('div').remove();
\end{minted}
Note that we removed irrelevant attributes (e.g. \verb+href+) and text contents
since they do not affect our analysis of redundant CSS rules. Rules (1)--(3)
were extracted directly from the script. However, the extraction of Rule (4) is
more involved.  First, the calls to \verb+parent()+ and \verb+remove()+ come directly
from the script.  Second, the other calls ---
which
select all elements with class \verb+delete+ that are descendants of a node with class
\verb+input_wrap+ --- derive from the semantics of \verb+on()+.  The connection of
the two parts arrives because the jQuery selection is passed to the event
handler via the \verb+this+ variable.  This connection may be inferred by
a data-flow analysis that is sensitive to the behaviour of jQuery.

\paragraph{Detecting CSS Redundancy}
It can be shown that the set $S_1$ of all reachable HTML trees in the example
in Figure \ref{fig:sanwebe-example} is a \emph{subset} of the set $S_2$ of all
HTML trees that can be reached by applying Rules (1)--(4) to the
initial HTML document.  We may detect whether
\begin{minted}{css}
    .warn { color: red }
\end{minted}
is redundant by checking whether its selector may match some part of a tree in
$S_2$.  If not, then since $S_1 \subseteq S_2$ we can conclude that the rule is
\emph{definitely} redundant.  In contrast, if the rule can be matched in $S_2$, we
\emph{cannot} conclude that the rule is redundant in the original application.

Let us test our abstraction. First, by applying Rule (1) to the initial HTML
tree, we confirm that \verb+warn+ can appear in a tree in $S_2$ and hence the
CSS rule \emph{may} be fired.  We now revisit the scenario in
\mhchanged{
    Remark~\ref{rem:redundant-rule}
}
in Section \ref{sec:html5-overview} where the limit on the number of boxes is
removed, but the CSS is not updated. In this case, the new event-driven
abstraction for the modified script will not contain Rule (1) and the CSS rule
can be seen to be redundant in $S_2$.  This necessarily implies that the rule is
\emph{definitely} redundant in $S_1$.

Thus, we guarantee that redundancies will not be falsely identified, but may fail
to identify some redundancies in the original application.

\subsection{Notations for trees}
Before defining our formal model, we briefly fix our notations for describing
trees. In this paper we use unordered, unranked trees.
A \defn{tree domain} is a nonempty 
finite subset $\treedomain$ of $\N^*$ 
(i.e. the set of all strings over the alphabet $\N = \{0, 1, \cdots \}$)
satisfying prefix-closure,
i.e., $w \cdot i \in \treedomain$ with
$i \in \N$ implies $w \in \treedomain$. Note that the natural linear order
of $\N$ is immaterial in this definition, i.e., we could use any
countably infinite set in place of $\N$.

A \defn{(labeled) tree} over the nonempty finite set (a.k.a. alphabet)
$\treelabels$ is a tuple $\tree = (\treedomain,\treelabeling)$
where $\treedomain$ is a tree domain and
$\treelabeling$ is a mapping (a.k.a. \defn{node labeling}) from $\treedomain$
to $\treelabels$.
We use standard terminologies for trees, e.g., parents,
children, ancestors, descendants, and siblings. The \defn{level of a node $v
\in\treedomain$} in $\tree$ is $|v|$. Likewise, the \defn{height
of the tree $\tree$} is $\max\{|v| : v \in \treedomain\}$.
Let $\trees(\treelabels)$ denote the set of 
trees over $\treelabels$. For every $k \in \N$, we define
$\trees_k(\treelabels)$ to be the set of trees of height $k$.

If $\tree = (\treedomain,\treelabeling)$ and $v \in \treedomain$,
the \defn{subtree of $\tree$ rooted at $v$} is the tree
$\tree_{\mid_v} = (\treedomain',\treelabeling')$, where
$\treedomain' := \{ w \in \N^* : vw \in \treedomain \}$ and
$\treelabeling'(w) := \treelabeling(vw)$.

\mhchanged{
    We remark, for example, that in our definitions, the trees
    $\tree_1 = \tuple{\set{\varepsilon, 1}, \treelabeling_1}$
    and
    $\tree_2 = \tuple{\set{\varepsilon, 2}, \treelabeling_2}$
    with
    $\treelabeling_1(\varepsilon) = \treelabeling_2(\varepsilon)$
    and
    $\treelabeling_1(1) = \treelabeling_2(2)$
    define distinct trees, although both trees contain a root node with a single
    child with the same labels.  It is easy to see that our guards discussed in
    the following sections cannot distinguish trees up to isomorphism.  In
    Section~\ref{sec:monotonic} we discuss morphisms between trees in the
    context of a monotonicity property.
}

\mh{removed term representation since afaik we don't use it.}

\subsection{The formal model}
We now formally define our tree-rewriting model $\Gtrsclass$ for HTML5 tree updates.
A \defn{rewrite system} $\trs$ in $\Gtrsclass$ is a (finite) set of \defn{rewrite
rules}.  Each rule $\sigma$ is
a tuple $(g,\chi)$ of a guard $g$ and a (rewrite) operation $\chi$. Let us define
the notion of guards and rewrite operations in turn.

Our language for guards is simply modal logic with special types of modalities.
It is a subset of \emph{Tree Temporal Logic}, which is a formal model of the
query language XPath for XML data \cite{Sch92,Marx05,libkin-survey}.
More formally,
a \defn{guard} over the node labeling $\treelabels = 2^{\CLASS}$ with $\CLASS = \{c_1, \cdots, c_n \}$ can be defined by the following grammar:
\[
    g ::= \top\ |\ c\ |\ g \wedge g\ |\ g \vee g\ |\ \neg g\ |\ \langle
        d\rangle g
\]
where
$c$ ranges over $\CLASS$ and $d$ ranges over
$\{\uparrow,\uparrow^\ast,\downarrow,\downarrow^\ast\}$,
standing for parent, ancestor, child, and descendant respectively.
\mhchanged{
    Note, we will also use $\langle
    \downarrow^+ \rangle g$ as shorthand for the formula $\langle \downarrow^\ast
    \rangle\langle \downarrow \rangle g$ and similarly for $\langle \uparrow^+
    \rangle g$.
}%
The guard $g$ is said to be \defn{positive} if there is no occurrence of $\neg$ in
$g$.
Given a tree $\tree = (\treedomain,\treelabeling)$ and a node
$v \in \treedomain$, we define whether $v$ \defn{matches} a guard $g$
(written $v, T \models g$)
\mhchanged{
    below.
}
Intuitively, we interpret $v,T \models c$
(for a class $c \in \CLASS$) as $c \in \treelabeling(v)$, and each modality $\langle
d\rangle$ \mhchanged{
    (where $d \in \{\uparrow,\uparrow^\ast,\downarrow,\downarrow^\ast\}$) in
    accordance with the arrow orientation.
}

\mhchanged{
    Given a tree $\tree = (\treedomain,\treelabeling)$ and a node
    $v \in \treedomain$, we define whether $v$ of $T$ \defn{matches} a guard $g$
    (written $v, T \models g$) by induction over the following rules:
    \begin{itemize}
    \item
        $v,\tree \models \top$.
    \item
        $v,\tree \models c$ if $c \in \treelabeling(v)$.
    \item
        $v,\tree \models g \wedge g'$ if $v,\tree \models g$ and
        $v,\tree \models g'$.
    \item
        $v,\tree \models g \vee g'$ if $v,\tree \models g$ or
        $v,\tree \models g'$.
    \item
        $v,\tree \models \neg g$ if it is not the case that $v,\tree
        \models g$.
    \item
        $v, \tree \models \langle \uparrow \rangle g$ if
        $v = w.i$ (for some $i \in \N$), and $w, \tree \models g$.
    \item
        $v, \tree \models \langle \uparrow^\ast \rangle g$ if
        $v = w.w'$ (for some $w' \in \N^\ast$), and $w, T \models g$.
    \item
        $v, \tree \models \langle \downarrow \rangle g$ if
        there exists a node $v.i \in \treedomain$ (for some $i \in \N$) such
        that $v.i, \tree \models g$.
    \item
        $v, \tree \models \langle \downarrow^\ast \rangle g$ if
        there exists a node $v.w \in \treedomain$ (for some $w \in \N^\ast$) such
        that $v.w, T \models g$.
    \end{itemize}
}
See the section below on encoding jQuery rules for some examples.

We say that
$g$ is \emph{matched} in $T$ if $v, T \models g$ for some node $v$ in $T$.
Likewise, we say that $g$ is \emph{matched} in a set $S$ of $\treelabels$-trees
if it is matched in some $T \in S$.  In the sequel, we sometimes omit mention of
the tree $T$ from $v, T \models g$ whenever there is no possibility of
confusion.

Having defined the notion of guards, we now define our rewrite operations,
which can be one of the following: (1) $\addchild{X}$, (2) $\addclass{X}$,
(3) $\removeclass{X}$, and (4) $\removenode$, where $X \subseteq \CLASS$.
Intuitively, the semantics of Operations (2)--(4) coincides with
the semantics of the jQuery actions \texttt{addClass(.)}, \texttt{removeClass(.)},
and \texttt{remove()}, respectively. Similarly,
the semantics of $\addchild{X}$ coincides
with the semantics of the jQuery action \texttt{append(str)}
in the case when
\texttt{str} represents a single node associated with classes $X$.
By adding extra classes, appending a larger subtree can be easily simulated by
several steps of $\addchild{X}$ operations.  This is demonstrated in the next section.

We now formally define the semantics of
these rewrite operations.
Given two trees $\tree = (\treedomain,\treelabeling)$ and $\tree' =
(\treedomain',\treelabeling')$, we say that $\tree$ \defn{rewrites} to $\tree'$ via
\mhchanged{
    $\sigma = \tuple{g, \chi}$
}
(written $T \to_\sigma T'$) if there exists a node $v \in \treedomain$
such that $v \models g$ and
\begin{itemize}
\item if $\chi = \addclass{X}$ then $\treedomain' = \treedomain$ and $\treelabeling' := \treelabeling[v \mapsto X \cup \treelabeling(v)]$.
\footnote{Given a map $f : A \to B$, $a' \in A$ and $b' \in B$, we write $f[a' \mapsto b']$ to mean the map $(f \setminus \{(a', f(a'))\}) \cup \{(a', b')\}$}

\item if $\chi = \addchild{X}$ then $\treedomain' = \treedomain \cup \{v.i\}$
and $\treelabeling' := \treelabeling[v.i \mapsto X]$
\mhchanged{
    and $v.i \notin \treedomain$
}

\item if $\chi = \removeclass{X}$ then $\treedomain' = \treedomain$ and $\treelabeling' := \treelabeling[v \mapsto \treelabeling(v) \setminus X]$

\item if $\chi = \removenode$ and $v$ is \emph{not} the root node,
    $\treedomain' := \treedomain \setminus \{ v.w : w \in \N^*\}$
    and $\treelabeling'$ is the restriction of $\treelabeling$ to
    $\treedomain'$.
\end{itemize}

\mhchanged{
    Note that the system cannot execute a $\removenode$ operation on the root
    node of a tree.  I.e. there is no transition $T \to_\sigma T'$ if $\sigma$
    would erase the root node of $T$.
}

Given a rewrite system $\trs$ over $\treelabels$-trees, we define
$\to_\trs$ to be the union of $\to_\sigma$, for all $\sigma \in \trs$.
For every $k \in \N$, we define $\to_{\trs,k}$ to be the restriction of
$\to_\trs$ to $\trees_k(\treelabels)$. Given a
set $\mathcal{C}$ of $\treelabels$-trees, we write $\post_{\trs}^*(\mathcal{C})$
(resp. $\post_{\trs,k}^*(\mathcal{C})$)
to be the set of trees $\tree'$
satisfying $\tree \to_\trs^* \tree'$ (resp. $\tree \to_{\trs,k}^* \tree'$)
for some tree $\tree \in \mathcal{C}$.

\paragraph{Encoding jQuery Rewrite Rules}
\label{sec:trs-jquery}
Let us translate the four ``jQuery rewrite rules'' for the application in
Figure \ref{fig:sanwebe-example} into our formalism. The first rule translates
directly to the rule $(\text{\texttt{\#limit}},\addclass{\{\text{\texttt{.warn}}\}})$,
while the second rule translates to $(\text{\texttt{.warn}},
\removeclass{\{\text{\texttt{.warn}}\}})$. The fourth rule identifies \texttt{div} nodes that have some child with class \texttt{delete} that in turn has some ancestor with class \texttt{input\_wrap}.  Thus, it translates to
\[
    (\text{\texttt{div}} \wedge \langle \downarrow \rangle(
    \text{\texttt{.delete}} \wedge {\langle \uparrow^+ \rangle}
    \text{\texttt{.input\_wrap}} ), \removenode).
\]
Finally, the third rule requires the construction of a new sub-tree.  We achieve
this through several rules and a new class $\texttt{tmp}$.  We first add the new
$\texttt{div}$ element as a child node, and use the class $\texttt{tmp}$ to mark
this new node:
\[
    (\texttt{.input\_wrap}, \addchild{\{\texttt{div}, \texttt{tmp}\}}) \ .
\]
Then, we add the children of the \texttt{div} node with the two rules
\[
    \begin{array}{ll}
        & (\texttt{tmp}, \addchild{\{\texttt{input}\}})  \\

        \text{and} \quad & (\texttt{tmp}, \addchild{\{\texttt{a},
        \texttt{.delete}\}}) \ .
    \end{array}
\]
\mhchanged{%
    We show in the
    \shortlong{full version}{appendix}.
    how to encode a large number of jQuery tree traversals into our guard language.
    In general, some of these traversals have to be approximated.
    For example the
}%
\protect\verb|.next()|
\mhchanged{
    operation can be approximated by the modalities
    $\langle \uparrow \rangle \langle \downarrow \rangle$
    which select a sibling of the current node.
}

\paragraph{The redundancy problem}
The \defn{redundancy problem for $\Gtrsclass$} is the problem that, given a
rewrite system $\trs$ over $\treelabels$-trees, a finite nonempty set $S$ of
guards over $\treelabels$, 
and an initial
$\treelabels$-labeled tree $\tree_0$,
compute the subset $S' \subseteq S$ of guards that are not matched in
$\post_{\trs}^*(\tree_0)$.
The decision version of the
redundancy problem for $\Gtrsclass$ is simply to check if the aforementioned set
$S'$ is empty. Similarly, for each $k \in \N$, we define the
\defn{$k$-redundancy problem for $\Gtrsclass$} to be the restriction of the
redundancy problem for $\Gtrsclass$ to trees of height $k$ (i.e. we use
$\post_{\trs,k}^*$ instead of $\post_{\trs}^*$).

The problem of identifying redundant CSS node selectors in a CSS
file can be reduced to the problem of the redundancy problem for $\Gtrsclass$.
This is because CSS node selectors can easily be translated into our guard language
(e.g. using the translation given in \cite{GenevesLQ12}).
    Observe that the converse is false.  E.g., $\langle \downarrow \rangle a
    \land \langle \downarrow \rangle b$ cannot be expressed as a CSS selector
    since $a$ and $b$ may appear on different children of the matched node. The
    guard in the fourth rule of our running example is also not expressible as a
    CSS selector.
    \alchanged{
However, this increased expressivity is required to model tree traversals
in jQuery.
}
Note that the redundancy problem for $\Gtrsclass$ could also have potential
applications
beyond
detecting redundant CSS rules, e.g., detecting redundant jQuery calls in HTML5.

Despite the simplicity of our rewrite rules, it turns out that the
redundancy problem is in general undecidable (even restricted to trees
of height at most two); see the \shortlong{full version}{appendix}.
\begin{proposition}
The 1-redundancy problem for $\trs$ is undecidable.
\label{prop:trs-undec}
\end{proposition}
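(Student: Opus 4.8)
**The plan is to reduce from a known undecidable problem for counter machines or Turing machines, encoding the computation into a tree of height at most two.**

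The claim to prove is that the $1$-redundancy problem is undecidable, i.e. that already for trees of height $k=1$ (a root with a single layer of leaf children) we cannot decide whether a given guard is ever matched in $\post_{\trs,1}^*(\tree_0)$. My first step would be to identify a source of undecidability that is naturally expressible with this very restricted data: a tree of height one is essentially a root node together with an unbounded multiset of leaf labels drawn from $2^{\CLASS}$. This is exactly the shape of data manipulated by a \emph{two-counter (Minsky) machine}, where the root can carry the finite control state and the leaves can encode counter contents. So the plan is to simulate a two-counter machine $M$, whose halting problem is undecidable, and to set up the rewrite system so that the target guard is matched if and only if $M$ halts.

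\textbf{Encoding.} I would let the root label carry the current control state $q$ of $M$ (as a class $c_q$), and represent the two counters by leaf children: a leaf labelled with class $a$ represents a unit of counter $1$, a leaf labelled with class $b$ a unit of counter $2$. Incrementing counter $1$ is then the rule $(\,c_q,\ \addchild{\{a\}}\,)$ applied at the root, simultaneously updating the root state; decrementing is realised by $\removenode$ on a leaf matching $a$ whose parent carries $c_q$ (using a modality $\langle\uparrow\rangle$ in the guard to read the control state and $\langle\downarrow\rangle$ to locate a counter leaf). The genuinely difficult feature of Minsky machines is the \emph{zero-test}: we must branch on whether a counter is empty. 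Here the guard language helps directly: the root satisfies ``counter $1$ is zero'' precisely when $\neg\langle\downarrow\rangle a$ holds, so a zero-test transition becomes a rule guarded by $c_q \wedge \neg\langle\downarrow\rangle a$. Since guards may use negation and the child/parent modalities, both the increment/decrement bookkeeping and the zero-tests are expressible at height one. The distinguished final guard is simply $c_{q_{\mathrm{halt}}}$.

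\textbf{Main obstacle.} The hard part will be \emph{faithfulness} of the simulation, and it splits into two issues. First, each state class must be updated atomically with the counter operation: a rewrite operation in this model either changes labels \emph{or} changes the tree shape, and it acts at a single matched node. Updating the root state while adding/removing a leaf therefore cannot be done in one atomic rule, so I must engineer intermediate ``transient'' classes (as the paper already does with \texttt{tmp} when appending a subtree) and design the guards so that no spurious target-matching computation can sneak through a half-finished transition. Second, and more seriously, the model's rewrite rules can fire \emph{any time, in any order, whenever enabled}, with no control over interleaving; a naive encoding would allow the adversary to fire increment rules out of step with the intended control flow and thereby reach $c_{q_{\mathrm{halt}}}$ even when $M$ does not halt. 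The key technical work is thus to make the simulation \emph{robust against arbitrary scheduling}: I would encode a strict phase discipline into the class labels (e.g. a ``ready/locked'' marker on the root, plus transient markers on at most one leaf at a time) so that from any reachable configuration the only way to make progress toward the halting state is to complete exactly one legitimate $M$-transition, and any partially applied or out-of-order rule either leads to a dead configuration or can be completed only to a genuine successor configuration. Establishing this invariant --- that reachable trees correspond exactly to reachable configurations of $M$ (modulo transient intermediate states), and in particular that $c_{q_{\mathrm{halt}}}$ is matched iff $M$ halts --- is the crux of the argument; the counter-machine combinatorics and the guard semantics are then routine.
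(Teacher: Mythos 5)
Your proposal follows essentially the same route as the paper's own proof: a reduction from an undecidable two-counter machine problem in which the root stores the control state, leaf children encode counter units, zero-tests are expressed as $\neg\langle\downarrow\rangle Z$, and faithfulness under arbitrary rule scheduling is enforced by transient marker classes and ``no pending job'' guards (the paper's $\mathit{new}$/$\mathit{del}$ flags and the conjuncts $\theta_1\wedge\theta_2$ play exactly the role of your proposed phase discipline). The only cosmetic difference is that the paper decrements by stripping the counter class via $\removeclass{}$ rather than deleting the leaf with $\removenode$, which does not change the argument.
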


\newcommand{\subsumedby}{\preceq}
\newcommand{\pred}{\hbox{\rm pred}\,}
\newcommand\sem[1]{{[\![}#1{]\!]}}

\section{A monotonic abstraction}
\label{sec:monotonic}
The undecidability proof of Proposition \ref{prop:trs-undec} in fact
relies fundamentally on the power of negation in the guards. A natural question, therefore, is
what happens in the case of positive guards. Not only is this an
interesting theoretical question, such a restriction often suffices in practice.
This is partly because the use of negations in CSS and jQuery selectors (i.e.
\verb+:not(...)+) is rather limited in practice. In particular, there was no
use of negations in CSS selectors
found in the benchmark in \cite{MesbahM12} containing 15 live web applications.
In practice, negations are almost always limited to negating atomic formulas, i.e.,
$\neg c$ (for a class $c \in \CLASS$) which can be overapproximated by
$\top$ often without losing too much precision.

\mhchanged{
    Note, in general it is not possible to express $\neg c$ via the formula
    $\bigvee_{c' \in \mathcal{K}\setminus\set{c}} c'$ since nodes may be
    labelled by multiple classes.  That is, labelling a node by $c'$ does not
    prevent the node also being labelled by $c$.  However, when $c$ is an HTML
    tag name (e.g. \texttt{div} or \texttt{img}), we can assert $\neg c$ by
    checking whether the node is labelled by some other tag name, since a node
    can only have one tag (e.g. a node cannot be both a \texttt{div} and
    an \texttt{img}).
}

A main result of the paper is that the ``monotonic''
abstraction that is obtained by restricting to positive guards
gives us decidability with a good complexity.
In this section, we prove the
resulting tree rewriting class is
``monotonic'' in a technical sense of the word, and summarise the main
technical results of the paper.

\begin{notation}
Let us denote by $\trsclass$ the set of rewrite systems with
positive guards. The guard databases in the input to redundancy and
$k$-redudancy problems for $\trsclass$ will only contain positive
guards as well. In the sequel, unless otherwise stated, a ``guard'' is
understood to mean a positive guard.
\end{notation}

\subsection{Formalising and proving ``monotonicity''}
Recall that a binary relation $R \subseteq S \times S$ is a \emph{preorder} if
it is transitive, i.e., if $(x,y) \in R$ and $(y,z) \in R$, then $(x,z) \in R$.
We start with a definition of a preorder $\trleq$ over
$\trees(\treelabels)$, where $\treelabels = 2^\CLASS$. Given two
$\treelabels$-trees $\tree = (\treedomain,\treelabeling)$ and
$\tree' = (\treedomain',\treelabeling')$, we write
$\tree \trleq \tree'$ if there exists an \defn{embedding} from $\tree'$
to $\tree$, i.e., a function $f: \treedomain \to
\treedomain'$ such that:
\begin{description}
\item[(H1)] $f(\epsilon) = \epsilon$
\item[(H2)] For each $v \in \treedomain$, $\treelabeling(v) \subseteq
\treelabeling'(f(v))$
\item[(H3)] For each $v.a \in \treedomain$ where $v \in \N^*$ and $a \in \N$,
    we have $f(v.a) = f(v).b$ for some $b \in \N$.
\end{description}
Note that this is equivalent to the standard notion of \emph{homomorphisms} from
database theory (e.g. see \cite{AHV94}) when each class $c \in \CLASS$ is treated
as a unary relation.
The following is a basic property of $\trleq$, whose proof
is easy and is left to the reader.
\begin{fact}
$\trleq$ is a preorder on $\trees(\treelabels)$.
\end{fact}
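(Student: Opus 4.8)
The plan is to verify the defining properties of a preorder directly from the embedding-based definition of $\trleq$. Since $\tree \trleq \tree'$ asserts the existence of a map $f : \treedomain \to \treedomain'$ satisfying (H1)--(H3), the whole proof reduces to exhibiting explicit witnessing maps and checking the three conditions: the identity map for reflexivity, and the composition of two embeddings for transitivity.

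I would first dispatch reflexivity (should one read ``preorder'' in the standard reflexive-and-transitive sense, beyond the transitivity stated in the excerpt): for $\tree \trleq \tree$ take $f$ to be the identity on $\treedomain$. Then (H1) holds as $f(\epsilon) = \epsilon$, (H2) holds as $\treelabeling(v) \subseteq \treelabeling(v)$, and (H3) holds since $f(v.a) = v.a = f(v).a$, so one may take $b = a$.

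The substantive step is transitivity, which is the property named in the definition. Suppose $\tree \trleq \tree'$ and $\tree' \trleq \tree''$, witnessed by embeddings $f : \treedomain \to \treedomain'$ and $g : \treedomain' \to \treedomain''$. I would claim that $h := g \circ f$ witnesses $\tree \trleq \tree''$ and verify (H1)--(H3) in turn. (H1): $h(\epsilon) = g(f(\epsilon)) = g(\epsilon) = \epsilon$. (H2): for $v \in \treedomain$, chaining inclusions gives $\treelabeling(v) \subseteq \treelabeling'(f(v)) \subseteq \treelabeling''(g(f(v))) = \treelabeling''(h(v))$, where the second inclusion is (H2) for $g$ instantiated at the node $f(v) \in \treedomain'$. (H3): given $v.a \in \treedomain$, (H3) for $f$ gives $f(v.a) = f(v).b$ for some $b \in \N$, and then (H3) for $g$ applied to $f(v).b \in \treedomain'$ gives $g(f(v).b) = g(f(v)).c$ for some $c \in \N$, so $h(v.a) = g(f(v).b) = g(f(v)).c = h(v).c$.

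There is no genuine obstacle: the statement is elementary (the excerpt itself leaves it to the reader), and the only point requiring a little care is that the conditions for $g$ must be invoked at nodes that genuinely lie in $\treedomain'$ — namely $f(v)$ and $f(v).b$ — which is automatic because $f$ maps into $\treedomain'$ and $g$ is total on $\treedomain'$. This is precisely the observation that embeddings compose, which closes the argument.
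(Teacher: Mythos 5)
Your proof is correct, and it is exactly the argument the paper intends: the Fact is left to the reader precisely because the verification is this standard one, taking the identity map for reflexivity and the composition of two witnessing embeddings for transitivity, with \textbf{(H1)}--\textbf{(H3)} checked just as you do. You also handle the small wrinkle appropriately: the paper's stated definition of preorder asks only for transitivity, and your additional reflexivity check (via the identity embedding) covers the standard reading at no extra cost.
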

The following lemma shows that embeddings preserve positive
guards.
\begin{lemma}\label{lem:preserve-vDash}
Given trees $\tree = (\treedomain,\treelabeling)$ and $\tree' = (\treedomain',
\treelabeling') \in \trees(\treelabels)$ satisfying $\tree \trleq \tree'$ with
a witnessing embedding $f: \treedomain \to \treedomain'$, and a
positive guard $g$ over
$\treelabels$, then if $v,\tree \models g$ then $f(v),\tree' \models g$.
\label{lm:trleq}
\end{lemma}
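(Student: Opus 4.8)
The plan is to prove the statement by structural induction on the positive guard $g$, following exactly the clauses in the definition of $v,\tree \models g$. The embedding is a map $f : \treedomain \to \treedomain'$ satisfying (H1)--(H3), and the claim is that $v,\tree \models g$ implies $f(v),\tree' \models g$. The base cases are $g = \top$ (trivial, since every node satisfies $\top$) and $g = c$ for a class $c \in \CLASS$: here $v,\tree \models c$ means $c \in \treelabeling(v)$, and condition (H2) gives $\treelabeling(v) \subseteq \treelabeling'(f(v))$, so $c \in \treelabeling'(f(v))$, i.e. $f(v),\tree' \models c$. The Boolean cases $g \wedge g'$ and $g \vee g'$ follow immediately from the induction hypotheses applied to the subformulas; crucially, because $g$ is \emph{positive} there is no $\neg$ case to handle, which is exactly where the argument would otherwise break (the preorder $\trleq$ only guarantees labels can grow, not shrink).

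The heart of the proof is the four modal cases, and these split naturally into the ``upward'' modalities $\langle \uparrow \rangle, \langle \uparrow^\ast \rangle$ and the ``downward'' modalities $\langle \downarrow \rangle, \langle \downarrow^\ast \rangle$. For the upward cases I would exploit condition (H3), which says $f$ sends a child $v.a$ to a child $f(v).b$ of $f(v)$; hence $f$ maps the parent of any node to the parent of its image, and by iteration maps any ancestor to an ancestor of the image. Concretely, if $v,\tree \models \langle \uparrow \rangle g'$ then $v = w.i$ and $w,\tree \models g'$; applying (H3) with the node $v = w.i$ gives $f(w.i) = f(w).b$, so $f(w)$ is the parent of $f(v)$, and the induction hypothesis yields $f(w),\tree' \models g'$, whence $f(v),\tree' \models \langle \uparrow \rangle g'$. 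The $\langle \uparrow^\ast \rangle$ case is the same argument iterated along the path from $v$ up to the relevant ancestor.

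The downward cases are more subtle and are where I expect the main obstacle to lie, because the witnessing node lives in the \emph{source} tree $\tree$ while we must produce a witness in the \emph{target} tree $\tree'$, and $f$ runs from $\treedomain$ into $\treedomain'$ rather than the other way. Suppose $v,\tree \models \langle \downarrow \rangle g'$, so some child $v.i \in \treedomain$ satisfies $v.i,\tree \models g'$. By (H3), $f(v.i) = f(v).b$ is a child of $f(v)$ in $\treedomain'$, and by the induction hypothesis $f(v.i),\tree' \models g'$; therefore $f(v)$ has a child satisfying $g'$, giving $f(v),\tree' \models \langle \downarrow \rangle g'$. The key point that makes this work is that $f$ is total on $\treedomain$ and preserves the child relation, so any descendant witness in $\tree$ is carried forward to a genuine descendant witness in $\tree'$; no injectivity or surjectivity of $f$ is needed. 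For $\langle \downarrow^\ast \rangle$ I would first observe, by an easy induction on the length of $w$ using (H3) repeatedly, that $f(v.w)$ is always a descendant of $f(v)$ (i.e. $f(v.w) = f(v).w'$ for some $w' \in \N^\ast$), and then the argument mirrors the $\langle \downarrow \rangle$ case. The whole induction thus reduces to carefully checking that (H3) transports the parent/child edges in both directions along $f$, and that positivity removes the only problematic clause.
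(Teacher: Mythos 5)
Your proposal is correct and follows essentially the same route as the paper: structural induction on the positive guard $g$, using \textbf{(H2)} for the atomic case and \textbf{(H3)} to transport parent/child witnesses across $f$, with the starred modalities handled by iterating \textbf{(H3)} along the path (the paper phrases this as an inner induction on path length, you as an auxiliary claim that $f$ maps descendants to descendants, which is the same argument). Your observations that positivity is exactly what eliminates the problematic negation case and that no injectivity of $f$ is needed also match the paper's use of the lemma.
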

We relegate to the \shortlong{full version}{appendix} the proof of Lemma \ref{lm:trleq}, which is similar to
(part of) the proof of the homomorphism theorem for conjunctive queries (e.g. see
\cite[Theorem 6.2.3]{AHV94}). This lemma yields the following monotonicity property
of $\trsclass$.
\begin{lemma}[Monotonicity]
For each $\sigma \in \trs$, if $\tree_1 \trleq \tree_2$ and $\tree_1 \to_\sigma
\tree'_1$, then either $\tree'_1 \trleq \tree_2$ or $\tree_2 \to_\sigma
\tree_2'$ for some $\tree_2'$ satisfying $\tree'_1 \trleq \tree_2'$.
\label{lem:monotonic}
\end{lemma}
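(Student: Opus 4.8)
The plan is to prove the property by a single case analysis on the shape of the operation in $\sigma = \tuple{g,\chi}$, carrying the embedding witnessing $\tree_1 \trleq \tree_2$ through the rewrite step. Write $\tree_1 = (\treedomain_1,\treelabeling_1)$ and $\tree_2 = (\treedomain_2,\treelabeling_2)$, and fix an embedding $f : \treedomain_1 \to \treedomain_2$ witnessing $\tree_1 \trleq \tree_2$. Since $\tree_1 \to_\sigma \tree_1'$, the rule fires at some node $v \in \treedomain_1$ with $v,\tree_1 \models g$. The first key step is to observe that the guard is automatically enabled at the image node: because $g$ is positive, Lemma~\ref{lm:trleq} applied to $f$ gives $f(v),\tree_2 \models g$, so $\sigma$ may be fired at $f(v)$ in $\tree_2$ whenever the argument calls for it.

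The argument then splits into the \emph{destructive} and the \emph{constructive} operations, and this split is precisely what the disjunction in the statement reflects. For $\chi = \removeclass{X}$, firing at $v$ only shrinks the label of $v$, so $\treelabeling_1'(w) \subseteq \treelabeling_1(w) \subseteq \treelabeling_2(f(w))$ continues to hold for every $w$; hence the same $f$ witnesses $\tree_1' \trleq \tree_2$ and the first disjunct holds. For $\chi = \removenode$ the new domain $\treedomain_1'$ is $\treedomain_1$ with the subtree below $v$ deleted; this set is again prefix-closed, the root survives (as $v \neq \varepsilon$), and the restriction $f|_{\treedomain_1'}$ still satisfies (H1)--(H3), so once more $\tree_1' \trleq \tree_2$ and the first disjunct holds. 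In both of these cases I do not need to move $\tree_2$ at all.

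For the constructive operations I instead fire $\sigma$ at $f(v)$ in $\tree_2$ and exhibit the resulting $\tree_2'$ together with an extension of $f$. If $\chi = \addclass{X}$, both steps leave the domains unchanged and add $X$ to $v$ (respectively $f(v)$); the same map $f$ works, and the only condition to recheck is (H2) at $v$ and at any other node sharing its image under $f$, where $\treelabeling_1'(v) = X \cup \treelabeling_1(v) \subseteq X \cup \treelabeling_2(f(v)) = \treelabeling_2'(f(v))$. If $\chi = \addchild{X}$, firing at $v$ creates a fresh child $v.i$ labelled $X$ and firing at $f(v)$ creates a fresh child $f(v).j$ labelled $X$; I extend $f$ to $f'$ by setting $f'(v.i) = f(v).j$ and verify (H1)--(H3), where (H3) for the new node reads $f'(v.i) = f(v).j = f'(v).j$. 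In both constructive cases $\tree_1' \trleq \tree_2'$, which gives the second disjunct.

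The main thing to get right is not any one calculation but the bookkeeping around $f$ possibly failing to be injective, together with the freshness of the new child. I must confirm that when several nodes of $\tree_1$ collapse onto one node of $\tree_2$, enlarging that node's label (in $\addclass{X}$) or hanging a fresh child off it (in $\addchild{X}$) never breaks (H2) or (H3) for the other preimages; for $\addchild{X}$ this relies on choosing $j$ so that $f(v).j \notin \treedomain_2$, so that setting its label leaves $\treelabeling_2'(f(w)) = \treelabeling_2(f(w))$ for every old node $w$. Once these points are checked, the four operation types exhaust all possibilities and establish the monotonicity property.
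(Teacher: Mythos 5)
Your proof is correct and follows essentially the same route as the paper's own argument: transfer the guard to $f(v)$ via Lemma~\ref{lm:trleq}, then split the four operations so that $\removeclass{X}$ and $\removenode$ keep (a restriction of) $f$ and give the first disjunct, while $\addclass{X}$ and $\addchild{X}$ fire $\sigma$ at $f(v)$ and reuse or extend $f$ to give the second. Your explicit attention to the non-injectivity of $f$ and to choosing a genuinely fresh child index $j$ with $f(v).j \notin \treedomain_2$ is in fact slightly more careful than the paper's choice of ``the number of children of $f(v)$'', but it is the same proof.
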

\mhchanged{
    Intuitively, the property states that any rewriting step of the ``smaller''
    tree either does not expand the tree beyond the ``bigger tree'', or, the
    step can be simulated by the ``bigger'' tree while still preserving the
    embedding relation.
}
The proof of the lemma is easy (by considering all four possible rewrite
operations), and is relegated to the \shortlong{full version}{appendix}.

One consequence of this monotonicity property is that, when dealing with the
redundancy problem, we can safely ignore rewrite rules that use one of the rewrite
operations $\removenode$ or $\removeclass{X}$. This is formalised in the following
lemma, whose proof is given in the \shortlong{full version}{appendix}.
\begin{lemma}\label{lem:removenode}
Given a rewrite system $\trs$ over $\treelabels$-trees, a guard database $S$,
and an initial tree $T_0$, let $\trs^{-}$ be the set of $\trs$-rules less
those that use either $\removenode$ or $\removeclass{X}$. Then, for each $g \in S$, $g$
is matched in $\post_{\trs}^*(\tree_0)$
iff $g$ is matched in $\post_{\trs^-}^*(\tree_0)$.
\end{lemma}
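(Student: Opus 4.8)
The plan is to prove the two directions separately. The ($\Leftarrow$) direction is immediate: since $\trs^- \subseteq \trs$, every $\trs^-$-rewrite step is also a $\trs$-rewrite step, hence $\post_{\trs^-}^*(\tree_0) \subseteq \post_{\trs}^*(\tree_0)$, and so any guard matched in the former is matched in the latter.

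For the ($\Rightarrow$) direction, suppose $g$ is matched in $\post_{\trs}^*(\tree_0)$, witnessed by a run $\tree_0 = \tree^{(0)} \to_{\sigma_1} \tree^{(1)} \to \cdots \to_{\sigma_m} \tree^{(m)} = \tree$ and a node $v$ with $v, \tree \models g$. The goal is to build a $\trs^-$-run ending in a tree lying $\trleq$-above $\tree$, and then transfer the match upward using Lemma \ref{lm:trleq}. Concretely, I would prove by induction on $i$ the invariant that there exists a tree $U^{(i)} \in \post_{\trs^-}^*(\tree_0)$ with $\tree^{(i)} \trleq U^{(i)}$. The base case $i=0$ takes $U^{(0)} = \tree_0$ and uses the identity embedding to get $\tree_0 \trleq \tree_0$. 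For the inductive step I would split on the operation used by $\sigma_{i+1}$. If it is a $\removenode$ or $\removeclass{X}$ rule, the rewrite only shrinks the tree: one checks directly that $\tree^{(i+1)} \trleq \tree^{(i)}$ (the identity embedding with smaller labels for $\removeclass{X}$, and inclusion of the surviving subdomain for $\removenode$), so transitivity of $\trleq$ gives $\tree^{(i+1)} \trleq U^{(i)}$ and I set $U^{(i+1)} = U^{(i)}$, taking no step in the big run. If instead $\sigma_{i+1}$ uses $\addclass{X}$ or $\addchild{X}$, then $\sigma_{i+1} \in \trs^-$, and I apply Monotonicity (Lemma \ref{lem:monotonic}) with $\tree^{(i)} \trleq U^{(i)}$ and $\tree^{(i)} \to_{\sigma_{i+1}} \tree^{(i+1)}$: either $\tree^{(i+1)} \trleq U^{(i)}$ (again set $U^{(i+1)} = U^{(i)}$), or there is $U'$ with $U^{(i)} \to_{\sigma_{i+1}} U'$ and $\tree^{(i+1)} \trleq U'$, in which case $U^{(i+1)} = U'$ stays $\trs^-$-reachable since $\sigma_{i+1} \in \trs^-$. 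Either way the invariant survives. At $i=m$ we obtain $\tree \trleq U^{(m)}$ with $U^{(m)} \in \post_{\trs^-}^*(\tree_0)$; Lemma \ref{lm:trleq} then yields $f(v), U^{(m)} \models g$ for the witnessing embedding $f$, so $g$ is matched in $\post_{\trs^-}^*(\tree_0)$.

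The one point requiring care --- and the reason monotonicity alone does not suffice --- is the treatment of the shrinking operations. Applying Lemma \ref{lem:monotonic} to a $\removenode$ or $\removeclass{X}$ step could force a matching shrinking step on the larger tree, which would take us outside $\post_{\trs^-}$. The fix, which is the crux of the argument, is to avoid invoking monotonicity for those operations altogether and instead observe that they produce a $\trleq$-smaller tree, so they can simply be omitted from the constructed run while the embedding invariant is preserved by transitivity of $\trleq$.
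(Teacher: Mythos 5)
Your proposal is correct and follows essentially the same route as the paper's proof: the same trivial ($\Leftarrow$) direction, the same induction along the run maintaining a $\trs^-$-reachable tree that $\trleq$-dominates the current tree, skipping $\removenode$/$\removeclass{X}$ steps because they produce $\trleq$-smaller trees, simulating the remaining steps via Lemma~\ref{lem:monotonic}, and transferring the final match with Lemma~\ref{lm:trleq}. Your closing remark about why monotonicity must not be invoked on the shrinking operations is precisely the (implicit) crux of the paper's argument as well.
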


\begin{convention}
In the sequel, we assume that there are only two possible
rewrite operations, namely, $\addchild{X}$, and $\addclass{X}$.
\end{convention}

\begin{remark}
    \label{rm:ordered}
    When choosing unordered trees for our CSS redundancy analysis
    in Section \ref{sec:trs} (i.e. instead of ordered trees),
    we remarked that we have added a layer of \emph{sound} approximation
    to our analysis. \alchanged{We will now explain why this is a sound
    approximation.}
    We could consider the extension of our
    guard language with the left-sibling and right-sibling operators $\langle
    \leftarrow \rangle$ and $\langle\rightarrow\rangle$ (which would still
    be contained in Tree Temporal Logic, which as we already mentioned is a
    formal model of XPath \cite{Sch92,Marx05,libkin-survey}). The semantics
    of formulas of the form $\langle \leftarrow \rangle g$ and
    $\langle \rightarrow \rangle g$ (with respect ordered trees) can be defined
    in the same way as our guard language. Given an ordered tree $T$, let
    $T'$ be the unordered version of $T$ obtained by ignoring the sibling
    ordering from $T$. Given a formula $g$ with left/right sibling operators,
    we could define its ``unordered approximation'' $g'$ by replacing every
    occurrence of $\langle \leftarrow \rangle$ and $\langle\rightarrow\rangle$
    by $\langle \uparrow\rangle\langle \downarrow\rangle$. For positive guards
    $g$, it is easy to show by induction on $g$ that $v,T \models g$ implies
    $v,T' \models g'$. By the same token, we could also consider an extension
    of our tree rewriting to ordered trees that allows adding an immediate
    left/right sibling, e.g., by the operators $\addleftkin{X}$ and
    $\addrightkin{X}$ (these are akin to the \texttt{.before()} and
    \texttt{.after()} jQuery methods). Given a tree rewriting $\trs$ in this
    extended rewrite system, we could construct an approximated rewriting
    $\trs'$ in $\trsclass$ as follows: (1) for every rewrite rule of the form
    $(g,\addleftkin{X})$ or
    $(g,\addrightkin{X})$ in this extended tree rewriting, add
    the rewrite rule $(\langle \downarrow \rangle g',
    \addchild{X})$ in $\trs'$, and (2) for every other rewrite rule $(g,\chi)$
    in $\trs$, add the rewrite rule $(g',\chi)$ in $\trs$. A consequence of
    this approximation is that a guard $g$ can be matched in a reachable tree
    $post_{\trs}^*(T_0)$ implies that the unordered approximation $g'$ can
    be matched in a reachable tree $post_{\trs'}^*(T_0')$. That is, if $g'$
    is redundant in $\trs'$, then $g$ is redundant in $\trs$, i.e., that $g$
    can be safely removed.
\end{remark}

\subsection{Summary of technical results}
We have completely identified the computational complexity of the redundancy
and $k$-redundancy problem for $\trsclass$. Our first result is:
\begin{theorem}
The redundancy problem for $\trsclass$ is $\EXP$-complete.
\label{th:exp}
\end{theorem}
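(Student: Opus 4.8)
The plan is to prove the two directions separately: membership in $\EXP$ and $\EXP$-hardness.

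For the upper bound I would first invoke the Convention (justified by Lemma \ref{lem:removenode}), so that only $\addclass{X}$ and $\addchild{X}$ rules remain and every reachable tree only grows, in the sense that nodes and class-labels are only ever added. The idea is to reduce matchability of a guard to a reachability query over a symbolic pushdown system whose stack encodes a root-to-node path. Fix the set $\text{Sub}$ of all subformulas occurring in $\trs$ and in the guard database $S$; its size is polynomial in the input. The crucial abstraction is the \emph{type} of a node, namely the subset of $\text{Sub}$ that it satisfies; there are at most $2^{|\text{Sub}|}$ types, which is where the single exponential comes from. I would let the stack record, for each node on the current path, its type, so that the upward modalities $\langle\uparrow\rangle g$ and $\langle\uparrow^\ast\rangle g$ are read directly off the stack content, while the downward modalities $\langle\downarrow\rangle g$ and $\langle\downarrow^\ast\rangle g$ are certified against the set of types that are realizable below a node. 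Monotonicity (Lemma \ref{lem:monotonic}) together with the fact that embeddings preserve positive guards (Lemma \ref{lm:trleq}) justifies restricting attention to canonical, maximally saturated trees: since matchability is upward closed, it suffices to compute, by a least fixpoint over the exponentially many types, which types are achievable at a node given a realizable downward closure and an $\uparrow$-context supplied by the stack. The pushdown system pushes a fresh type when an $\addchild{X}$ rule descends into a new child and updates the top type when an $\addclass{X}$ rule fires, and the target guard is matched exactly when some reachable configuration exposes a node whose type contains it. Because reachability for (symbolic) pushdown systems is solvable in time polynomial in the size of the system, and our system is of exponential size, this yields an $\EXP$ algorithm; this is precisely the reduction carried out in detail in Section \ref{sec:upper}.

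For the lower bound I would reduce from the reachability problem for symbolic (succinctly presented) pushdown systems, which is $\EXP$-complete and which is the natural source of hardness here: positive guards give purely \emph{existential} reachability (they cannot express universal branching, since a statement such as ``all children are accepting'' would require negation), and a tree's root-to-leaf path is precisely a pushdown stack. Given a pushdown system whose control states and stack symbols are $n$-bit vectors with transitions defined by a polynomial-size circuit, I would use $n$ classes to encode a state/symbol on a node, simulate a push by an $\addchild{X}$ that descends one level deeper and an update of the local bits by $\addclass{X}$ rules whose positive guards read the relevant bits, and simulate a pop/return by letting a parent inspect the completed state of its child through the child modality $\langle\downarrow\rangle$ and copy the resulting bits back with $\addclass{X}$ (this encodes the pop relation $\langle q,\gamma\rangle \to^\ast \langle q',\varepsilon\rangle$ via the subtree of the pushing node). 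The target guard asks whether the root can ever carry the accepting return state, so it is matched if and only if the pushdown system reaches its goal, establishing $\EXP$-hardness.

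The main obstacle is making this lower-bound simulation faithful under the unordered, monotone, fire-any-time-in-any-order semantics. Because $\removenode$ and $\removeclass{X}$ are unavailable and rules may fire in any order, I must prevent spurious firings from fabricating an accepting run: the encoding has to ensure that once a bit-class is set it can only be consistent with a single intended step, which I would enforce by interleaving explicit ``phase'' classes along the path so that a node's bits may be read by its parent only after that node's subcomputation has genuinely completed, and by making each rule's positive guard test enough of these phase markers that any firing sequence reaching the accepting guard can be rearranged into a genuine push/pop-matched run of the pushdown system. Getting this call/return matching right on the tree path, while encoding the exponential state space bit-by-bit with only positive guards, is the delicate part; the upper bound, by contrast, is largely a matter of packaging the type abstraction justified by Lemmas \ref{lem:monotonic} and \ref{lm:trleq} into the pushdown reachability query.
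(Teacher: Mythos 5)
Your upper bound is, modulo packaging, the paper's own: eliminate $\removenode$ and $\removeclass{X}$ via Lemma~\ref{lem:removenode}, introduce classes for subformulas (your ``types''), and reduce to reachability in an exponentially large (symbolic) pushdown system whose stack holds a root-to-node branch; this is exactly Theorem~\ref{th:reduction} combined with Proposition~\ref{prop:spds-complexity}, and your sketch of it is fine. The genuine gap is in your lower bound, and it sits precisely at the point you yourself flag as ``delicate'': the simulation of pops. In your encoding an $n$-bit return state must flow \emph{upward} into the parent frame, after which the parent must continue from that new state; under the monotone semantics this cannot be made faithful. First, the parent already carries the bits of its earlier state, and since classes are never removed, any positive guard that ``reads the current state'' matches stale bits, or a mixture of old and new bits, just as well as the intended ones. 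Your phase markers do not police this, because phase classes also only accumulate: asserting ``we are in phase $k$ and \emph{not} beyond it'' requires negation, which is unavailable in $\trsclass$. Second, a frame can be re-entered after a pop exponentially many times, so tagging each return with a fresh phase would need exponentially many classes. Third, several completed children (different nondeterministic sub-runs) can coexist as siblings, and a bit-wise positive guard of the form $\bigwedge_i \brac{\langle\downarrow\rangle \cdots}$ may take bit $i$ from one child and bit $j$ from another, fabricating a return state reached by no run; blocking this mixing with positive guards essentially forces one rule per state, i.e., exponentially many rules. Determinising the source sPDS removes only the third problem, not the first two, so the approach does not go through as proposed.

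The paper's hardness proof is engineered exactly around this obstruction: it reduces from acceptance of \emph{alternating} linear-bounded Turing machines (APSPACE $=$ $\EXP$), not from succinct pushdown reachability. There, every piece of multi-bit data (a whole polynomial-size tape configuration) is stored in a single node and flows \emph{downward}: a child's configuration is uniquely determined by its parent's configuration together with the guessed transition $\sigma$ and side label $L/R$, so write-once monotone labelling is unambiguous. The only information flowing upward is the single, write-once class $\Success$, for which monotone accumulation is harmless. This choice also refutes the premise on which you based your choice of source problem: positive guards \emph{can} capture the needed universal branching. The ATM's $\wedge$-transitions are handled soundly by the positive guard $\langle\downarrow\rangle\set{\sigma,\Success,L}\wedge\langle\downarrow\rangle\set{\sigma,\Success,R}$, precisely because each branch's configuration is deterministic once $\sigma$ and the side are fixed, so ``some $L$-child succeeded and some $R$-child succeeded'' is equivalent to ``both designated successors succeed''. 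If you want to salvage your route, you would have to restructure the pushdown simulation so that all bit-vector data is written once and propagated only downward, with at most a constant amount of monotone information returning upward --- at which point you have essentially rediscovered the paper's alternating-machine construction.
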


Our upper bound was obtained via an efficient reduction to an analysis of
symbolic pushdown systems (see Section \ref{sec:upper}), for which there are
highly optimised tools (e.g. Bebop
\cite{bebop}, Getafix \cite{getafix}, and Moped \cite{moped}). We have
implemented
our reduction and demonstrate its viability in detecting redundant CSS rules in
HTML5 applications (see Section \ref{sec:experiments}). The proof of the lower
bound in Theorem \ref{th:exp} is provided in the \shortlong{full version}{appendix}.
In the case of $k$-redundancy problem, a better complexity can be obtained.
\begin{theorem}
The $k$-redundancy problem $\trsclass$ is:
\begin{enumerate}[(i)]
\item $\PSPACE$-complete if $k$ is part of the input in unary.
\item solvable in $\P$-time --- $n^{O(k)}$ --- for each fixed parameter $k$, but is
    $\W[1]$-hard.
\end{enumerate}
\label{th:pspace}
\end{theorem}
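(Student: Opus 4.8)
Let me plan a proof of the two-part statement.

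By the Convention following Lemma~\ref{lem:removenode}, I may assume the rewrite system uses only $\addchild{X}$ and $\addclass{X}$ operations, since these are the only operations that can create new guard matches. The plan is to exploit the fact that, at fixed height $k$, the relevant state space collapses to a bounded object.

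\medskip
\noindent\textbf{Part (ii): the $n^{O(k)}$ upper bound and $\W[1]$-hardness.}

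The key observation is that when all trees have height at most $k$, I can overapproximate reachability by tracking, for each level $\ell \in \{0,\dots,k\}$, the \emph{set} of label-sets that occur among the nodes at that level, together with ancestor/descendant information. Because all rewrite operations are monotonic (Lemma~\ref{lem:monotonic}) and only \emph{add} classes or children, the set of node types that can ever appear at each level is monotonically increasing and bounded. More precisely, I would define a ``saturation'' procedure: maintain a structure recording which combinations of (level, label-set, vertical context) are reachable, and repeatedly apply rules $\addclass{X}$ and $\addchild{X}$ until a fixpoint is reached. The number of distinct such combinations at a single node is bounded by $2^{|\CLASS|}$ per level, but since a guard of the form $\langle\downarrow\rangle g$ only ever needs to witness existence of a matching descendant, I can represent reachability succinctly by a profile of size polynomial in $n$ for each fixed $k$, giving the $n^{O(k)}$ bound. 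Formally, I would argue that a guard $g$ is matched in $\post_{\trs,k}^*(\tree_0)$ iff it is matched in some ``canonical saturated tree'' whose branching is bounded and whose height is $k$, and that this canonical structure can be computed in time $n^{O(k)}$ by iterating the saturation to a fixpoint. The $\W[1]$-hardness I would establish by a parameterized reduction from a $\W[1]$-complete problem such as $k$-Clique or Independent Set, encoding the $k$ chosen vertices as a path/branch of depth parametrized by $k$ and using guards to enforce the adjacency (or non-adjacency) constraints via the modalities $\langle\downarrow\rangle$ and $\langle\downarrow^\ast\rangle$; the parameter of the source problem maps to the height $k$.

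\medskip
\noindent\textbf{Part (i): $\PSPACE$-completeness when $k$ is given in unary.}

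For membership in $\PSPACE$, I would give a nondeterministic polynomial-space procedure and invoke Savitch's theorem. Guessing a witnessing run that matches $g$ can be done by an on-the-fly simulation: rather than storing the entire reachable tree (which may be exponentially large in $k$), I would store only a single root-to-leaf path of length $k$ together with the minimal sibling and ancestor/descendant witnesses needed to certify the guards that must hold along that path. Since each node's label is a subset of $\CLASS$ (so $O(|\CLASS|)$ bits) and the path has length at most $k$ (polynomial when $k$ is unary), the configuration is polynomial-sized; nondeterministically firing rules and checking guard satisfaction along the stored path stays in $\NPSPACE = \PSPACE$. The main subtlety is handling $\langle\downarrow^\ast\rangle$ and $\langle\downarrow\rangle$ guards that require a witnessing subtree off the stored path --- here I would use monotonicity (Lemma~\ref{lm:trleq}) to argue that it suffices to separately verify reachability of each required witness type, recursing in polynomial space. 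For $\PSPACE$-hardness, I would reduce from a standard $\PSPACE$-complete problem, the natural candidate being the reachability/membership problem for a linear-bounded automaton or the iterated one-step reachability of a bounded-width system; the tree of height $k$ (with $k$ polynomial) serves as a space-bounded tape, with classes encoding tape symbols and rewrite rules simulating transitions.

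\medskip
\noindent\textbf{Main obstacle.}

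The hardest part is the $\PSPACE$ upper bound in Part~(i). The difficulty is that even at fixed unary height, a saturated tree may have exponentially many nodes (branching up to $2^{|\CLASS|}$ per level, compounded over $k$ levels), so a naive simulation blows up. The crux is to show that guard matching can be certified using only polynomially much information --- essentially a single path plus recursively-verified existence claims for off-path witnesses --- and to argue, via the monotonicity lemma, that these existence claims can be checked independently without maintaining the global tree. Getting this decomposition correct, particularly the interaction between $\langle\uparrow\rangle\langle\downarrow\rangle$ (sibling-style) guards and the $\addchild{X}$ operation that grows branches, is where the real work lies.
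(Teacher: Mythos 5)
Your part (ii) and your $\PSPACE$ upper bound follow the paper's route. The paper's $n^{O(k)}$ algorithm is precisely your ``canonical saturated tree'': it computes, by a fixpoint, a reachable tree $\tree_F$ of height at most $k$ with branching at most $\sizeof{\tree_0}+|\trs|$ into which every reachable tree embeds (Lemma~\ref{lem:monotonic}), so a guard is matchable within height $k$ iff it is matched in $\tree_F$. For $\PSPACE$ membership the paper does not argue on trees directly but reuses Theorem~\ref{th:reduction}: $k$-redundancy reduces in polynomial time to the bounded bit-toggling problem for symbolic pushdown systems, whose configurations (a control state plus a stack of height at most $k$, with $k$ in unary) have polynomial size, so nondeterministically guessing rules gives a polynomial-space procedure (Proposition~\ref{prop:spds-complexity}); since the stack holds exactly one root-to-leaf branch and monotonicity lets discarded subtrees be regrown, this is essentially your ``store one path, certify off-path witnesses via monotonicity'' idea, but packaged so that the delicate witness bookkeeping is discharged once and for all by the simplification to simple guards and the class-adding subroutine. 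Your $\W[1]$-hardness route (a parameterized reduction from $k$-Clique, with the chosen vertices along a branch of height $k$ and adjacency enforced by guards) genuinely differs from the paper, which reduces from the short acceptance problem for nondeterministic Turing machines; a clique-based reduction can be made to work with positive guards because adjacency can be hard-coded edge by edge into the rules, but your Independent Set variant would have to hard-code the complement graph's edges, since non-adjacency as a guard requires negation, which $\trsclass$ forbids.

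The genuine gap is the $\PSPACE$-hardness reduction in part (i). You propose that ``the tree of height $k$ \dots serves as a space-bounded tape, with classes encoding tape symbols and rewrite rules simulating transitions.'' In $\trsclass$ this cannot work. By Lemma~\ref{lem:removenode}, the operations $\removeclass{X}$ and $\removenode$ are irrelevant to redundancy, so for the purposes of any reduction node labels only ever grow: a tape cell can never be overwritten, stale symbols persist alongside new ones, and spurious runs mixing symbols from different time steps make the simulation unsound. The alternative of writing each successive configuration into a fresh child node --- which is exactly what the paper's $\W[1]$-hardness reduction does, and why it is sound there --- is unavailable in part (i): a linear-bounded machine may run for $2^{\Omega(n)}$ steps, while the tree height $k$ is given in unary and is therefore only polynomial. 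The paper flags precisely this obstacle (``since the runs of a linear bounded Turing machine may be of exponential length, the encoding in the previous section no longer works'') and overcomes it with a counting trick absent from your sketch: the tree has polynomial height $M$ but exponentially many branches; each branch spells a binary index $j \in [0,2^M)$; the node at the end of branch $j$ stores the $j$th configuration of the (deterministic) machine as a set of classes, one class per tape-cell content; and positive guards performing a binary-increment walk (up past the trailing bits to the lowest set bit, then down the complementary pattern) allow the rules to compute configuration $j+1$ from configuration $j$ cell by cell. Determinism of the machine is what keeps this write-once, monotone encoding consistent. Without this idea --- laying an exponentially long computation out across the breadth of a polynomial-height tree --- your part (i) lower bound does not go through.
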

Recall that $\PSPACE \subseteq \EXP$. The second item of Theorem \ref{th:pspace}
contains the complexity class $\W[1]$ from \emph{parameterised complexity
theory}
(e.g. see \cite{grohe-book}), which provides a theory for answering whether
a computational problem with multiple input parameters is \emph{efficiently
solvable} when certain parameters are \emph{fixed}.
\mhchanged{
    In the case of the
    $k$-redundancy problem for $\trsclass$
    a problem instance contains $k \in \N$ and $\trs \in \trsclass$ as the input
    parameters.
    The problem can be solved in time $n^{O(k)}$, where $n$ is the size of $\trs$.
    We would like to know whether the parameter $k$ can be removed
    from the exponent of $n$
    in the time-complexity.  That is,%
}
whether the problem is solvable in time
$f(k) n^c$ for some computable function $f: \N \to \N$ and a constant $c \in
\N$ (a.k.a. \emph{fixed-parameter-tractable (FPT) algorithms}).
    Observe that, asymptotically, $f(k)n^c$ is smaller than $n^{O(k)}$ \
    \alchanged{for
    \emph{every} fixed value of $k$}.
By showing that
the problem is $\W[1]$-hard, we have in effect shown that the parameter $k$
cannot be removed from the exponent of $n$, i.e., that our $n^{O(k)}$-time
algorithm is, in a sense, optimal. For space reasons, we relegate the proofs of
Theorem \ref{th:pspace} to the \shortlong{full version}{appendix}.
%

%
\OMIT{
(i.e. problem instances now are partitioned by parameter $k \in \N$, which in
our case denotes the height of the input tree) that contains problems that
are solvable in time $n^{k^{O(1)}}$, but are in general \emph{unlikely} to be
solvable in by \emph{fixed-parameter-tractable algorithms}, i.e., in time
$f(k) n^c$ for some computable function $f: \N \to \N$ and a constant $c \in
\N$.
In other words, $\W[1]$-hard problems are solvable in polynomial-time
for each \emph{fixed} parameter $k$, though the degree of the polynomial has to
\emph{depend} on the parameter $k$. Problems that are $\W[1]$-hard include

In the context of Theorem \ref{th:pspace},
the $\W[1]$-hardness of $k$-redundancy problem suggests that our
$n^{O(k)}$-time algorithm is essentially optimal for each
fixed height $k$.
}

\begin{remark}
    Decidability for the $k$-redundancy problem for $\trsclass$ is
    immediate from the theory of well-structured transition systems
    (e.g. see \cite{AbdullaCJT96,FinkelS01}).
    We have shown in Lemma \ref{lem:monotonic} that the tree embedding relation
    $\trleq$ is monotonic. It can be shown that $\trleq$ is also
    \emph{well-founded} on trees of height $k$ (e.g. see \cite{GMSZ08}),
    i.e., there is no infinite descending chain $T_1 \succ T_2 \succ \cdots$
    for trees $T_1,\ldots$ of height $k$. The theory of well-structured
    transition systems (e.g. see \cite{AbdullaCJT96,FinkelS01}) would imply
    decidability for the $k$-redundancy problem. Unfortunately, this only gives
    a nonelementary upper bound complexity (i.e. an unbounded tower of
    exponential) for the $k$-redundancy and does yield decidability for the
    general redundancy problem.
\end{remark}

\section{The algorithm}
\label{sec:upper}

In this section, we provide an efficient reduction to an analysis of symbolic
pushdown systems, which will give an algorithm with an exponential-time (resp.
polynomial-space) worst-case upper bound for the redundancy (resp.
$k$-redundancy) problem for $\trsclass$.
To this end, we first provide a preliminary background on symbolic pushdown
systems. We will then provide a roadmap of our reduction to symbolic pushdown
systems, which will consist of a sequence of three polynomial-time reductions
described in the last three subsections.

\subsection{Pushdown systems: a preliminary}
Before describing our reduction, we will first provide a preliminary background on
pushdown systems and their extensions to symbolic pushdown systems.

Pushdown systems are standard (nondeterministic) pushdown automata without
input labels. Input labels are irrelevant since
one mostly asks about their transition graphs (in our case, reachability).
More formally,
a \defn{pushdown system (PDS)} is a tuple
\[
    \pds = (\controls,\salphabet,\pdsrules)
\]
where
\begin{itemize}
\item
    $\controls$ is a finite set of \defn{control states},
\item
    $\salphabet$ is a finite set of \defn{stack symbols}, and
\item
    $\pdsrules$ is a finite subset of $(\controls \times \salphabet) \times
    (\controls \times \salphabet^*)$ such that if $((q,a),(q',w)) \in \pdsrules$
    then $|w| \leq 2$.
\end{itemize}
This PDS generates a transition relation $\to_\pds
\subseteq (\controls \times \salphabet^*) \times (\controls \times \salphabet^*)$
as follows: $(q,v) \to_\pds (q',v')$ if there exists a rule
$((q,a),(q',w)) \in \pdsrules$ such that $v = ua$ and $v' = uw$ for some
word $u \in \salphabet^*$.

Symbolic pushdown systems are pushdown systems that are succinctly represented
by boolean formulas. They are equivalent to \emph{(recursive) boolean programs}.
More precisely,
a \defn{symbolic pushdown system (sPDS)} is a tuple
\[
    (\varsetV,\varsetW,\pdsrules)
\]
where
\begin{itemize}
\item
    $\varsetV = \{x_1,\ldots,x_n\}$ and $\varsetW = \{y_1,\ldots,y_m\}$ are
    two disjoint sets of boolean variables, and
\item
    $\pdsrules$ is a finite set of pairs $(i,\varphi)$ of
    number $i \in \{0,1,2\}$ and boolean formula $\varphi$ over
    the set of variables $\varsetV \cup \varsetW \cup \varsetV' \cup \varsetW'$,
    where
    \begin{itemize}
    \item
        $\varsetV' := \{x_1',\ldots,x_n'\}$, and
    \item
        $\varsetW' := \bigcup_{j=1}^i \varsetW_j$ with $\varsetW_j :=
        \{y_1^j,\ldots,y_m^j\}$.
    \end{itemize}
\end{itemize}
This sPDS generates a (exponentially bigger) pushdown system
$\pds = (\controls,\salphabet,\pdsrules')$, where
$\controls = \{0,1\}^n$, $\salphabet = \{0,1\}^m$, and
$((q,a),(q',w)) \in \pdsrules'$ iff there exists a pair $(i,\varphi) \in
\pdsrules$ satisfying $i = |w|$, and $\varphi$ is satisfied by the assignment
that assigns\footnote{Meaning that if $q = (q_1,\ldots,q_n)$, then $q_i$
is assigned to $x_i$} $q$ to $\varsetV$, $a$ to $\varsetW$, $q'$ to $\varsetV'$,
and $w$ to $\varsetW'$ (i.e. assigning the $j$th letter of $w$ to $\varsetW_j$).

The \defn{bit-toggling problem for sPDS} is a simple reachability problem over
symbolic pushdown systems.
Intuitively, we want to decide if we
can toggle on the variable $y_i$ from the initial configuration.
More precisely, given an sPDS $\pds = (\varsetV,\varsetW,\pdsrules)$ with
$\varsetV = \{x_1,\ldots,x_n\}$ and $\varsetW = \{y_1,\ldots,y_m\}$,
a variable $y_i \in \varsetW$, and an
initial configuration $I_0 = ((b_1,\ldots,b_n),(b_1',\ldots,b_m'))
\in \{0,1\}^n \times \{0,1\}^m$, decide if $I_0 \to_\pds^*
(q,a)$ for some $q \in \{0,1\}^n$ and $a = (b_1'',\ldots,b_m'')
\in \{0,1\}^m$ with $b_i'' = 1$.

The \defn{bounded bit-toggling problem} is the same as the
bit-toggling problem but the stack height of the pushdown system cannot
exceed some given input parameter $h \in \N$ (given in unary).
\begin{proposition}
The bit-toggling (resp. bounded bit-toggling) problem for sPDS is solvable in
$\EXP$ (resp. $\PSPACE$).
\label{prop:spds-complexity}
\end{proposition}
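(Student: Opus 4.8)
The plan is to establish the upper bounds by giving a standard saturation-based reachability procedure for pushdown systems, and then account for the succinct symbolic representation. First I would recall the classical result (\`a la B\"uchi, Finkel--Willems--Wolper, or the saturation algorithm of Bouajjani--Esparza--Maler) that for an \emph{explicit} PDS $\pds = (\controls,\salphabet,\pdsrules)$, the set of configurations backward- or forward-reachable from a regular set is again regular and computable in time polynomial in $|\controls|$, $|\salphabet|$, and $|\pdsrules|$. The bit-toggling problem is then a simple reachability query: decide whether from $I_0$ one can reach any configuration whose top-of-stack symbol has bit $i$ set. This is expressible as reachability into a regular target set, so the explicit-PDS algorithm decides it in time polynomial in the size of the generated PDS.

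The key observation is the size blow-up induced by the symbolic representation. An sPDS $(\varsetV,\varsetW,\pdsrules)$ with $|\varsetV| = n$ and $|\varsetW| = m$ generates a PDS with $|\controls| = 2^n$ control states and $|\salphabet| = 2^m$ stack symbols, and the explicit rule set $\pdsrules'$ may contain up to $|\controls|^2 \cdot |\salphabet|^3$ rules (since $|w| \le 2$), which is exponential in the sPDS description. Running the polynomial-time explicit algorithm on this exponentially larger object yields an overall $\EXP$ procedure, establishing the first claim. The only care needed is that each explicit rule of $\pdsrules'$ can be generated on demand by checking satisfaction of the corresponding formula $\varphi$, which is cheap; so no super-exponential cost is incurred.

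For the bounded case, the plan is to exploit the stack-height bound $h$, given in unary, to replace the saturation machinery with a direct exploration over \emph{bounded configurations}. A configuration with stack height at most $h$ is a pair in $\controls \times \salphabet^{\le h}$, which can be stored using $n + hm$ bits, i.e. polynomial space in the sPDS size and in $h$. I would then argue that the bounded bit-toggling problem reduces to reachability in the graph on such bounded configurations, and invoke Savitch's theorem (or a nondeterministic guess-and-check reachability argument, $\NPSPACE = \PSPACE$): a nondeterministic machine guesses a path step by step, at each step storing only the current bounded configuration and applying one sPDS rule by checking the relevant formula $\varphi$, accepting when it reaches a configuration with $y_i$ toggled on. Since each stored object is polynomial in size and each transition check is polynomial, this runs in nondeterministic polynomial space, hence in $\PSPACE$.

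The main obstacle I expect is not the reachability algorithm itself but the bookkeeping of the succinctness: one must be careful that the symbolic-to-explicit unfolding is accounted for correctly so that ``polynomial in the explicit PDS'' genuinely becomes ``exponential in the sPDS'' and no worse, and, in the bounded case, that the height bound $h$ being in unary is what keeps a single configuration polynomial-sized (a binary-encoded $h$ would destroy the $\PSPACE$ bound). Verifying that each on-the-fly formula evaluation stays within the claimed resource bounds, and that the bounded exploration correctly respects the height constraint $|w| \le 2$ per rule together with the global cap $h$, is the delicate part; the rest is a routine application of known pushdown reachability and space-bounded reachability results.
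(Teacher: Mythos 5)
Your proposal is correct and matches the paper's own proof in both parts: for $\EXP$ you expand the sPDS into its exponential-sized explicit PDS and invoke the standard polynomial-time pushdown reachability algorithm, and for $\PSPACE$ you run a nondeterministic guess-and-check walk over height-bounded configurations (polynomial-sized because $h$ is unary), evaluating the symbolic rules on the fly and appealing to $\NPSPACE = \PSPACE$. The only cosmetic difference is that you phrase the $\EXP$ target as a regular set of configurations while the paper reduces to control-state reachability; these are interchangeable here.
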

The proof of this is standard (e.g. see \cite{schwoon-thesis}), which for
completeness we provide in the \shortlong{full version}{appendix}.

Despite the relatively high complexity mentioned in Proposition
\ref{prop:spds-complexity}, nowadays there are highly optimised sPDS and
boolean program solvers
(e.g. Moped \cite{schwoon-thesis,moped}, Getafix \cite{getafix}, and Bebop
\cite{bebop})  that can solve sPDS bit-toggling
problem efficiently using BDD (Binary Decision Diagram) representation of
boolean formulas. In fact, the boolean formulas that we produce in our
polynomial-time reductions below have straightforward small representations as
BDDs.

\OMIT{
Most
As
we will show in the \shortlong{full version}{appendix}, using a BDD (Binary
Decision Diagram) representation of boolean formulas is adequate for our purposes.
BDD representations are crucial since they will allow us to tap into
efficient sPDS solvers which exploit BDD representations
}

\subsection{Intuition/Roadmap of the reduction}
The following theorem formalises our reduction claim.
\begin{theorem}
    The redundancy (resp. $k$-redundancy) problem for $\trsclass$ is
    polynomial-time reducible to the bit-toggling (resp. bounded bit-toggling)
    problem for sPDS.
    \label{th:reduction}
\end{theorem}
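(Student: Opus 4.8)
The plan is to exploit the monotonicity results already established (Lemmas \ref{lem:removenode} and \ref{lem:monotonic}) so that the only rewrite operations I must simulate are $\addchild{X}$ and $\addclass{X}$, as guaranteed by the Convention. Under these operations the tree only \emph{grows}: nodes and classes are added but never removed. The central observation is that to witness that a guard $g \in S$ is matched in $\post_{\trs}^*(\tree_0)$, I do not need to build the whole reachable tree; by monotonicity it suffices to build \emph{some} reachable tree containing a node satisfying $g$, and I can construct such a tree incrementally along a single root-to-leaf ``spine'' together with the children needed to witness the various $\langle \downarrow \rangle$ and $\langle \downarrow^\ast \rangle$ subformulas. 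This is precisely what a pushdown stack can track: the stack will encode the current branch being explored (one stack symbol per level), with the boolean bits at each symbol recording the set of classes currently present at that node, while the finite control records information about the ancestors (to discharge $\langle \uparrow \rangle$ and $\langle \uparrow^\ast \rangle$ obligations).

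First I would fix, for the given $\trs$, $\tree_0$ and guard database $S$, the relevant finite state space: each node is summarised by its class-set (a vector in $\{0,1\}^{|\CLASS|}$) together with a bounded amount of bookkeeping needed to evaluate the subformulas of guards in $\trs \cup S$ at that node. Since evaluating a positive guard at a node depends on the node's own label, on its ancestors (for $\langle \uparrow \rangle, \langle \uparrow^\ast \rangle$), and on its descendants (for $\langle \downarrow \rangle, \langle \downarrow^\ast \rangle$), I would design the sPDS so that a stack symbol carries the node's class-set plus a record of which ancestor-subformulas hold (passed \emph{down} the stack when a child is pushed) and which descendant-subformulas have been realised (passed \emph{up} when a child is popped back to its parent). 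The sPDS transition relation then has rules that (i) push a new child, corresponding to an $\addchild{X}$ rule whose guard is enabled at the current node, (ii) augment the current node's class-set, corresponding to an $\addclass{X}$ rule, and (iii) pop back to the parent once a subtree has been fully explored, propagating the realised descendant-obligations upward. The variables $\varsetV$ of the sPDS encode the ancestor-context (finite control), while $\varsetW$ encodes the per-node class-set and obligation bits. I would introduce one distinguished bit $y_i$ that is toggled exactly when a node satisfying the target guard $g$ is reached, so that the bit-toggling problem asks precisely whether $g$ is matchable, giving one sPDS per guard in $S$ (or a single sPDS with one toggling bit per guard).

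The correctness argument splits into the two standard directions. For soundness, any accepting run of the sPDS that toggles $y_i$ can be read off as an explicit sequence of $\addchild{X}$ and $\addclass{X}$ rewrites producing a tree in $\post_{\trs}^*(\tree_0)$ whose node, certified by the control-state bookkeeping, satisfies $g$. For completeness, given a witnessing tree $T \in \post_{\trs}^*(\tree_0)$ with $v, T \models g$, monotonicity (Lemma \ref{lem:monotonic}) lets me assume $T$ is built using only additive operations, and I can linearise its construction into a depth-first traversal that the sPDS simulates: the pushdown stack naturally follows the root-to-$v$ branch and the finitely many side-branches needed to discharge the descendant-modalities appearing in the guards. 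Crucially, for the $k$-redundancy problem the height bound $k$ translates directly into a stack-height bound $h = k{+}1$, so the reduction lands in the \emph{bounded} bit-toggling problem, matching the $\PSPACE$ claim. I expect the main obstacle to be the bookkeeping for the $\langle \downarrow^\ast \rangle$ (descendant) modalities: unlike ancestor-modalities, whose truth is determined before a child is explored and can simply be inherited down the stack, descendant-obligations are only discovered after exploring a possibly distant subtree and must be correctly aggregated and returned up the stack upon popping, all while keeping the per-symbol bit-width polynomial so that the overall reduction remains polynomial-time. Managing this propagation faithfully --- and arguing that a single depth-first pass through finitely many branches suffices to realise all simultaneously required descendant-witnesses (using that the tree is only growing, so separate witnesses never conflict) --- is the technical heart of the construction.
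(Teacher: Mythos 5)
Your overall architecture---a pushdown whose stack holds the branch currently being explored, with ancestor information passed down on pushes and realised descendant information aggregated upward on pops, all justified by monotonicity---is essentially the mechanism the paper uses in its last step, the reduction from its ``class-adding'' problem to bit-toggling (Lemma \ref{lm:reduce}); folding the guard bookkeeping into the stack symbols, rather than first compiling guards into subformula-classes as the paper does separately (Lemma \ref{lm:simple}), is a legitimate variation. But your plan has a genuine gap in how it starts: you reduce directly from $(\trs,S,\tree_0)$ to a single sPDS, yet $\tree_0$ is an arbitrary tree, and classes can flow between its pre-existing branches (a class derived deep in one subtree of $\tree_0$ can, via a $\langle \downarrow \rangle$-guarded rule at a common ancestor, enable additions in a sibling subtree). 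A stack can carry only one branch, and you never say how the simulation re-enters and re-derives the other, already-existing parts of $\tree_0$. The paper avoids this problem by \emph{not} giving a many-one reduction at all: Theorem \ref{th:reduction} is realised as a polynomial-time algorithm with oracle calls---a saturation loop over the nodes of $\tree_0$ (Lemma \ref{lm:reduce-class-adding})---where each oracle call concerns only a \emph{single-node} tree plus a parent-context assumption function, and only that restricted problem is compiled into an sPDS.

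The second gap is in your completeness argument. You propose to take a witnessing run and ``linearise its construction into a depth-first traversal'', but such a linearisation does not exist in general. Rule applications at a parent and at one of its children can alternate unboundedly: a class added to the child (via an $\langle \uparrow \rangle$ guard) enables a $\langle \downarrow \rangle$-guarded $\addclass{X}$ rule at the parent, whose effect in turn enables a further $\langle \uparrow \rangle$-guarded rule at the same child, and so on; no reordering makes the child's rule applications contiguous, so a simulation in which each node is pushed, explored once, and popped cannot reproduce the run. What actually makes the construction complete---and what the paper's technical Lemma \ref{lm:spds-correctness} proves by a careful induction---is that after popping a child the sPDS may push a \emph{fresh} child and re-derive the old child's classes from scratch, which monotonicity (Lemma \ref{lem:monotonic}) permits because the parent's recorded label has only grown in the meantime. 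Thus a single conceptual node of the witnessing tree is simulated by a derivation-dependent (not guard-bounded) number of push--pop episodes, which is strictly more than the ``finitely many side-branches'' visited once in a single pass that your sketch relies on. Your instinct that monotonicity is the crux is right, but the duplication/re-derivation argument, not DFS linearisation, is the step you would still have to supply.
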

Together with Proposition \ref{prop:spds-complexity}, Theorem \ref{th:reduction}
implies an $\EXP$ (resp. $\PSPACE$) upper bound for the redundancy (resp.
$k$-redundancy) problem for sPDS.  Moreover, as discussed in the
\shortlong{full version}{appendix}, it is straightforward to construct from our
reduction a counterexample path in the rewrite system witnessing the non-redundancy
of a given guard.

The actual reduction in Theorem \ref{th:reduction} involves several intermediate
polynomial-time reductions. Here is a roadmap.
\begin{itemize}
\item
    We first show that it suffices to consider ``simple'' rewrite systems.
    These systems are simple in the sense that guards only test direct parents or children of the nodes.
    Furthermore, these systems have the property that we only need to check redundancy at the root node.
    This is given in Section~\ref{sec:simplifying-rewrite}.

\item
    \mhchanged{
        We then show that the simplified problem can be solved by a ``saturation'' algorithm that uses a subroutine to check whether a given class can be added to a given node via a sequence of rewrite rules.
        This subroutine solves what we call the ``class-adding problem'':
        a simple reachability problem that checks whether a class can be added to a given node via a series of rewrite rules that do not change any other existing nodes in the tree (but may add new nodes).
        That is, the node is considered as the only node in a single node tree,
        possibly with some contextual (immutable) information about its parent.
        This is given in Section~\ref{sec:reduction-class-adding}.
    }

\item
    Finally, we show that the class-adding problem is efficiently reducible
    to the bit-toggling problem for sPDS.
    This is given in Section~\ref{sec:reduction-bit-toggling}.
\end{itemize}
\mhchanged{
    The case of $k$-redundancy for $\trsclass$ is similar but each intermediate
    problem is relativised to the version with bounded height.
}

The reason we need to simplify the system is because, in the simplified system, we can test redundancy only by inspection of the root node, and all guards only refer to direct neighbours of each node.
The latter simplification makes the reduction to sPDS possible.
\mh{removed sentence about only having access to the labels of $v$, since it's not true...}
\mhchanged{
    As explained in Section~\ref{sec:reduction-bit-toggling}, the constructed sPDS performs a kind of depth-first search over the constructed trees.
}%
Since an sPDS can only see the top of its stack, it is important that it only needs to maintain local information about the node being inspected by the depth-first search.
The former simplification justifies us only maintaining labelling information about the nodes in the original tree (in particular, the root node) without having to (explicitly) add new nodes.

\begin{figure}
\begin{center}
\pstree{\Toval{$\Root$}}{
    \pstree{\Toval{$\Team$}}{
        \Tcircle{$\PlayerOne$}
        \Tcircle{$\PlayerTwo$}
        \Toval{$\Success$}
    }
    \Toval{$\Team$}
    \pstree{\Toval{$\Team$}}{
        \Tcircle{$\PlayerOne$}
    }
}
\end{center}
\caption{A reachable configuration\label{fig:ex}}
\end{figure}

\paragraph{Running Example.}
We provide a running example for our sequence of reductions. Imagine a
tennis double tournament web page which can be used to keep track of a list of
teams (containing exactly two players).
The page allows a user to create a team, and add players to a team. The
page will also indicate a success next to the team details after both players
have been added. An overapproximation of the behavior of the page could be
abstracted as a rewrite system $\trs$ as follows:
\begin{itemize}
    \item The initial tree is the single-node tree with label $\Root$.
    \item The set of node labels is
        $\{\Root,\Team,\PlayerOne,\PlayerTwo,\Success\}$.
    \item A team can be added to the tournament, i.e., there is
        a rule $(\Root,\addchild{\Team}) \in \trs$.
    \item Player 1 can be added to a team, i.e., there is a rule
        $(\Team,\addchild{\PlayerOne}) \in \trs$
    \item Player 2 can be added to a team, i.e., there is a rule
        $(\Team,\addchild{\PlayerTwo}) \in \trs$
    \item
        Success after
        \acmeasychair{}{both of the required players}
        Player 1 \& Player 2 are added, i.e., there is a rule
        $(\langle \downarrow \rangle \PlayerOne
        \wedge \langle \downarrow \rangle \PlayerTwo,\addchild{\Success})
          \in \trs$
\end{itemize}
The set $S$ of guards that we want to check for redundancy is $\{\Success\}$.
A snapshot of a reachable configuration is provided in Figure \ref{fig:ex}.

\subsection{Simplifying the rewrite system}
\label{sec:simplifying-rewrite}

We will make the following two simplifications: (1) restricting the
problem to only checking redundancy at the \emph{root} node,
(2) restrict the guards to be used.

To achieve simplification (1), one can simply define a new set of guards
from $S$ as $\{ \langle \downarrow^\ast \rangle g: g \in S \}$.
Then, for each tree $\tree = (\treedomain,\treelabeling) \in
\trees(\treelabels)$ and guard $g$, it is the case that $(\exists v\in\treedomain: v,\tree
\models g)$ iff $\epsilon,\tree \models \langle\downarrow^\ast\rangle g$.
\OMIT{
\begin{convention}
In this section, the redundancy and $k$-redundancy problems will be restricted
to decide whether a set $S$ of guards is never matched at the \emph{root}
of reachable trees.
\end{convention}
}

We now proceed to simplification (2).
A guard over the node labeling $\treelabels = 2^\CLASS$
is said to be \defn{simple} if it is of one of the following two forms
\[
    \bigwedge_{i=1}^m c_i
    \text{\qquad or \qquad}
    \langle d \rangle \bigwedge_{i=1}^m c_i
\]
for some $m \in \N$,
where each $c_i$ ranges over $\CLASS$ and $d$ ranges over
$\{\uparrow,\downarrow\}$. [Note: if $m = 0$, then $\bigwedge_{i=1}^m c_i
\equiv \top$.] For notational convenience, if $X =
\{c_1,\ldots,c_m\}$, we shall write $X$ (resp. $\langle d \rangle X$) to mean
$\bigwedge_{i=1}^m c_m$ (resp. $\langle d\rangle \bigwedge_{i=1}^m c_m$).
A rewrite system $\trs \in \trsclass$ is said to be
\defn{simple} if (i) all guards occurring in $\trs$ are simple, and (ii) if
$(\langle d \rangle X,\chi) \in \trs$, then $\chi$ is of the form
$\addclass{Y}$.

We define $\Rtrsclass \subseteq \trsclass$ to be the set of simple rewrite
systems.
The redundancy
(resp. $k$-redundancy) problem for $\Rtrsclass$ is defined in the same way
as for $\trsclass$ \emph{except that} all the guards in the set of guards
are restricted to be a subset of $\CLASS$.

The following lemma shows that the redundancy (resp. $k$-redundancy) problem
for $\trsclass$ can be reduced in polynomial time to the redundancy (resp.
$k$-redundancy) problem for $\Rtrsclass$.
\mhchanged{
    Essentially, the reduction works by introducing new classes representing (non-simple) subformulas of the guards.
    New rewrite rules are introduced that inductively calculate which subformulas are true.
    That is, if a subformula $g$ is true at $v$, then the labelling of $v$ will include a class representing $g$.

    Note, in the lemma below, $S'$ is a set of atomic guards.
    That is, each guard in $S'$ is of the form $c$ for some $c \in \CLASS'$.
}
\begin{lemma}
    Given a $\trs \in \trsclass$ over $\treelabels = 2^\CLASS$ and a set
    $S$ of guards over $\treelabels$, there exists $\trs' \in \Rtrsclass$ over
    $\treelabels' = 2^{\CLASS'}$ (where $\CLASS \subseteq \CLASS'$) and a set
    \mhchanged{%
        $S'$ of atomic%
    }
    guards such that:
\begin{description}
\item[(P1)] For each $k \in \N$, $S$ is $k$-redundant for $\trs$ iff $S'$ is
$k$-redundant for $\trs'$.
\item[(P2)] $S$ is redundant for $\trs$ iff $S'$ is redundant for $\trs'$.
\end{description}
Moreover, we can compute $\trs'$ and $S'$ in polynomial time.
\label{lm:simple}
\end{lemma}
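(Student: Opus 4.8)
The plan is to realise the informal description preceding the lemma: for every subformula $\phi$ occurring in a guard of $\trs$ or of $S$, introduce a fresh class $c_\phi \in \CLASS' \setminus \CLASS$ meant to record ``$\phi$ holds at this node'', and add rewrite rules that propagate these marks in accordance with the inductive clauses defining $\models$. Concretely, I would set $c_c := c$ for atomic $c \in \CLASS$, and for compound subformulas add the rules $(c_{\phi_1}\wedge c_{\phi_2},\addclass{\{c_{\phi_1\wedge\phi_2}\}})$; the two rules $(c_{\phi_i},\addclass{\{c_{\phi_1\vee\phi_2}\}})$ for $i\in\{1,2\}$; $(\langle \downarrow\rangle c_{\phi'},\addclass{\{c_{\langle\downarrow\rangle\phi'}\}})$ and its $\langle\uparrow\rangle$ analogue; and for the starred modalities a reflexive rule $(c_{\phi'},\addclass{\{c_{\langle\downarrow^\ast\rangle\phi'}\}})$ together with a transitive rule $(\langle\downarrow\rangle c_{\langle\downarrow^\ast\rangle\phi'},\addclass{\{c_{\langle\downarrow^\ast\rangle\phi'}\}})$ (and symmetrically for $\langle\uparrow^\ast\rangle$). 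Each original rule $(g,\chi)\in\trs$ is replaced by $(c_g,\chi)$, and I would put $S':=\{c_g : g\in S\}$. A check against the definition of simple rules confirms that every guard above is either a conjunction of classes or of the form $\langle d\rangle X$ with $d\in\{\uparrow,\downarrow\}$, and that every $\langle d\rangle X$-guard is paired with an $\addclass$ operation, so $\trs'\in\Rtrsclass$; since there are only linearly many subformulas, $\trs'$ and $S'$ are computable in polynomial time. No rule for $\neg$ is needed because guards are positive.

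The correctness argument rests on the forgetful projection $\pi$ that deletes all marking classes, sending a $\treelabels'$-tree to the $\treelabels$-tree with the same domain. I would prove two invariants by induction on runs of $\trs'$ from $T_0':=T_0$. \emph{Soundness}: in every reachable $T'$, if $c_\phi \in \treelabeling'(v)$ then $v,\pi(T')\models \phi$. The inductive step uses Lemmas~\ref{lm:trleq} and~\ref{lem:monotonic}: since $\trs'$ only performs $\addclass$ and $\addchild$, any previously established mark stays valid, and each marking rule is a faithful one-step image of a clause of $\models$. \emph{Projection-reachability}: $\pi(T')\in\post^*_\trs(T_0)$, because marking rules leave $\pi$ unchanged while each modified rule $(c_g,\chi)$, enabled by $c_g$ at $v$, is mirrored by the original $(g,\chi)$ at $v$ (legal by soundness). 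Together these give the ``only if'' directions: a match of $c_g\in S'$ yields $v,\pi(T')\models g$ with $\pi(T')$ reachable, hence a match of $g\in S$.

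For the converse I would establish a \emph{saturation} lemma: because marking rules never touch original classes, $\pi$ is invariant along marking steps, so from any $T'$ one can fire marking rules until the mark set stabilises, reaching $T'_{\mathrm{sat}}$ in which $c_\phi\in\treelabeling'(v)$ iff $v,\pi(T')\models\phi$ for all $v,\phi$. This is exactly the least-fixpoint computation of $\models$ over the fixed tree $\pi(T')$: a structural pass handles $\wedge,\vee$ and the one-step modalities, and the reflexive/transitive pair computes the descendant(-or-self)/ancestor(-or-self) closure, terminating since the tree is finite and marks only grow. Given an original run $T_0\to_\trs\cdots\to_\trs T_\ell$ witnessing $v,T_\ell\models g$, I simulate it step by step maintaining $\pi(T'_j)=T_j$: before each step I saturate marks, so $c_g$ becomes available precisely when $v\models g$, then fire $(c_g,\chi)$; a final saturation exposes $c_g$ at $v$. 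Combined with the previous paragraph, the bijection $g\mapsto c_g$ preserves the matched/unmatched status of each guard, hence preserves the whole redundant subset, yielding (P2) (and the decision version).

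The $k$-bounded case (P1) needs only the observation that marking rules add no nodes, so $T'$ and $\pi(T')$ always share a domain and thus a height; the simulation and the projection therefore respect the bound $k$, and the argument goes through verbatim with $\post^*_{\trs,k}$ and $\post^*_{\trs',k}$ in place of the unbounded operators. The main obstacle I anticipate is precisely the interleaving of marking and rewriting under nondeterministic firing: one must argue that marks can never ``lie'' (soundness) yet can always be ``brought up to date'' (saturation), and both hinge on monotonicity and on marking steps being invisible to the projection $\pi$.
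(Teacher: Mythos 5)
Your proposal is correct and follows essentially the same route as the paper's own proof: the paper likewise introduces a class for each (non-atomic) subformula, adds the same marking rules (conjunction/disjunction/one-step modalities, plus reflexive and transitive rules replacing $\langle d^\ast\rangle$), replaces each rule's guard by its marker class, and proves correctness via the same two invariants, which it calls \emph{consistency} (your soundness: marks never lie) and \emph{maximal consistency} (your saturation), together with the forgetful projection $\tree \cap \treelabels$. The only difference is presentational --- the paper builds an intermediate system $\trs_1$ still containing starred modalities and then eliminates them, whereas you construct the final simple system in one pass.
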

We show how to compute $\trs'$. The set $\CLASS'$ is defined as the union of
$\CLASS$ with the set $G$ of all
\mhchanged{%
    non-atomic%
}
subformulas (i.e. occurring in the parse tree) of guard formulas in $S$ and $\trs$.
In the sequel, to avoid potential confusion, we will often underline members of $G$
in $\CLASS'$, e.g., write $\myclass{\langle \downarrow \rangle c}$ instead of
$\langle \downarrow \rangle c$.
\mhchanged{
    Note that $\myclass{c} = c$ for all $c \in \CLASS$.
}

We now define the simple rewrite system $\trs'$. Initially, we will
define a rewrite system $\trs_1$ that allows the operators $\langle \uparrow^\ast
\rangle$ and $\langle \downarrow^\ast \rangle$;
later we will show how to remove them. We first add the following
``intermediate'' rules to $\trs_1$:
\begin{enumerate}
\item  $(\{\myclass{g},\myclass{g'}\},\addclass{\myclass{g \wedge g'}})$, for each $(g \wedge
        g') \in G$.
\item  $(\myclass{g},\addclass{\myclass{g \vee g'}})$ and $(\myclass{g'},\addclass{\myclass{g
    \vee g'}})$, for each $(g \vee g') \in G$.
\item
    \label{item:modality-addition}
    $(\langle d \rangle \myclass{g},\addclass{\myclass{\langle d \rangle g}})$, for each
$\langle d \rangle g \in G$.
\item
    \label{item:general-rule-addition}
    $(\myclass{g},\chi)$, for each $(g,\chi) \in \trs$.
\end{enumerate}
Note that in Rule~(\ref{item:modality-addition}) the guard $\langle d \rangle \myclass{g}$ is understood
to mean a non-atomic guard over $2^{\CLASS'}$
\mhchanged{%
    -- that is, $\myclass{g}$ is an atomic guard.
}
Finally, we define $S' := \{ \myclass{g} : g \in S \}$. Notice that
each guard in $S'$ is atomic. The aforementioned algorithm computes $\trs_1$ and
$S'$ in linear time.


We now show how to remove the operators $\langle \uparrow^\ast \rangle$ and
$\langle \downarrow^\ast \rangle$ (i.e. rules of type (3)).
\mhchanged{
    To do this we will introduce new rules $\trs_2$ that essentially compute the required transitive closures using the $\uparrow$ and $\downarrow$ operators.
    Our final rewrite system $\trs'$ will be $\trs_1 \cup \trs_2$.
    We define $\trs_2$ to be the set containing the following rules for each rule
    $(\langle d^\ast \rangle \myclass{g}, \chi) \in \trs_1$
    where
    $d \in \{\uparrow,\downarrow\}$:
}
\begin{enumerate}[(a)]
\item $(\myclass{g},
\addclass{\underline{\langle d^\ast\rangle g}})$.
\item $(\langle d \rangle\underline{\langle d^\ast\rangle g},
\addclass{\underline{\langle d^\ast\rangle g}})$.
\end{enumerate}
Note that
\mhchanged{%
    from Rule~(\ref{item:general-rule-addition})
}
$\trs_1$ contains the rule $(\myclass{\langle d^\ast \rangle
g},\chi)$, where $\myclass{\langle d^\ast \rangle g}$ is understood to mean an atomic
guard over $2^{\CLASS'}$.
Intuitively, this simplification can be done because $v,\tree \models \langle d^\ast
\rangle g$ iff
at least one of the following
cases holds: (i) $v,\tree \models g$, (ii)
there exists a node $w$ in $\tree$ such that $w,\tree \models
\langle d^\ast \rangle g$ and $w$ can be reached from $v$ by following the
direction $d$ for one step. The aforementioned computation step again can be done
in linear time.
The proof of correctness (i.e. \textbf{(P1)} and \textbf{(P2)}) is provided in the
\shortlong{full version}{appendix}. In particular, for all $g \in S$, it is the case that $g$ is redundant in
$\trs$ iff $\underline{g}$ is redundant in $\trs'$.

\paragraph{Running Example.}
In our example, initially, $S$ is changed into
$\{\langle \downarrow^\ast \rangle \Success \}$ after the first simplification.
To perform the second, we first obtain the rewrite system $\trs_1$ containing the following rules
(recall we equate $\myclass{c}$ and $c$ for each class):
\begin{itemize}
    \item $(\Success,\addclass{\myclass{\langle \downarrow^\ast \rangle
        \Success}})$
    \item $(\{\langle \downarrow \rangle \PlayerOne,
            \langle \downarrow \rangle \PlayerTwo\},
            \addclass{\myclass{
            \langle \downarrow \rangle \PlayerOne
            \wedge \langle \downarrow \rangle \PlayerTwo
            }})$
        \item $(\langle \downarrow \rangle \myclass{P},
            \addclass{\myclass{\langle \downarrow \rangle P}})$ for each
        $P \in \{\PlayerOne,\PlayerTwo\}$
\end{itemize}
and from Rule~(\ref{item:general-rule-addition}) we also have in $\trs_1$.
\begin{itemize}
\item
    $(\Root,\addchild{\Team})$,
\item
    $(\Team,\addchild{\PlayerOne})$,
\item
    $(\Team,\addchild{\PlayerTwo})$,
\item
    $(\myclass{\langle \downarrow \rangle \PlayerOne
               \wedge
               \langle \downarrow \rangle \PlayerTwo},
      \addchild{\Success})$.
\end{itemize}
Now, $\trs_2$ contains the following rules:
\begin{itemize}
    \item $(\Success,\addclass{ \myclass{\langle \downarrow^\ast \rangle \Success}})$
    \item $(\langle \downarrow \rangle \myclass{\langle \downarrow^\ast \rangle
        \Success},\addclass{\myclass{\langle
        \downarrow^\ast \rangle \Success}})$
\end{itemize}
Finally, we define $S'$ to be
$\set{\myclass{\langle \downarrow^\ast \rangle \Success}}$.

\subsection{Redundancy $\rightarrow$ class-adding}
\label{sec:reduction-class-adding}

We will show that redundancy
for $\Rtrsclass$ can be solved in polynomial
time assuming an oracle to the ``class
adding problem'' for $\Rtrsclass$.  The class-adding problem is a
reachability problem for $\Rtrsclass$
involving only single-node input trees possibly with a parent node that only
provides a ``context'' (i.e. cannot be modified). As we will see in the following
subsection, the class-adding problem for $\Rtrsclass$
lends itself to a fast
reduction to the bit-toggling problem for
sPDS.  Similarly, $k$-redundancy can be solved via the same routine, where intermediate problems are restricted to their bounded height equivalents.
\OMIT{
Furthermore,
since the complexity of these reachability problems
are $\EXP$ and $\PSPACE$ respectively (by reduction to problems for symbolic
pushdown systems),
}

\paragraph{High-level idea.}
We first provide the high-level idea the reduction.
After simplifying the rewrite system in the previous step, we only need to
check redundancy at the root node of the tree.
Our approach is a ``saturation'' algorithm that exploits the monotonicity property: we begin with an initial tree and then repeatedly apply a ``saturation step'' to build the tree where each node is labelled by all classes that may label the node during any execution.
The saturation step examines the tree built so far and the rules of the rewrite system.
If it finds that it is possible to apply a sequence of rewrite rules to the tree to add a class $c$ to some node $v$, it updates the tree by adding $c$ to the label of $v$.
In this way, larger and larger trees are built.
Once it is no longer possible to add any new classes to the existing nodes of the tree, the algorithm terminates.
In particular, we have all classes that could label the root node, and thus we can detect which classes are redundant by inspecting the labelling of the root.

Given a
rewrite system $\trs \in \trsclass$, an initial tree $\tree = (\treedomain,
\treelabeling)
\in
\trees(\treelabels)$ with $\treelabels = 2^\CLASS$, and a set $S$ of guards,
we try to ``saturate'' each node $v \in \treedomain$ with the classes
that may be added to $v$.
Our saturation step is able to reason about the addition of nodes when determining if a class $c$ can be added to $v$, but does not need to remember which new nodes needed to be added to $\tree$ to add $c$ to $v$.
This is due to monotonicity: since each saturation step begins with a larger tree than the previous step, additional nodes can be regenerated on-the-fly if needed.

Each saturation step proceeds as follows.
Let $\tree_1 = (\treedomain,\treelabeling_1)$ be the tree before the saturation step, and $\tree_2= (\treedomain,\treelabeling_2)$ be the tree after applying then saturation step.
The tree domain does not change and there exists a node $v \in \treedomain$ such that $\treelabeling_1(v) \subset \treelabeling_2(v)$ (i.e. some classes are added to $\treelabeling_1(v)$).
In particular, we have $\tree_1 \prec \tree_2$.

There are two saturation rules that are repeatedly applied until we reach a fixpoint.
\begin{itemize}
\item
    The first saturation rule corresponds to the application of a rewrite rule $(g,\addclass{X}) \in
    \trs$ at $v$.
    We simply set $\treelabeling_2(v) = \treelabeling_1(v) \cup X$.
    Note, in this case we do not need to reason about the addition of nodes to the tree.

\item
    \mhchanged{
        The second saturation rule is a call to the class-adding subroutine and asks whether some class $c$ can be added to node $v$.
        This step incorporates the behaviour of rewrite rules of the form $(g, \addchild{X})$.
    }
    In this case we need to reason about whether the node added by these rules could lead to the addition of $c$ to $v$.
    To do this, we construct a pushdown system $\pds$ that explores the possible impact of these new nodes.
    This is discussed in the next section.
    If it is found that $c$ could be added to $v$, we set
    $\treelabeling_2(v) = \treelabeling_1(v) \cup \set{c}$.
\end{itemize}
In sum, our ``reduction'' is in fact an algorithm for the ($k$-) redundancy
problem for $\trsclass$ that runs in polynomial-time with an oracle to the bit-toggling (resp. for bounded stack height) for sPDS.

\paragraph{The formal reduction.}
Before formally defining the class-adding problem, we first need the definition
of an ``assumption function'', which plays the role of the possible
parent context node but \alchanged{is} treated as a \emph{separate} entity
from the input tree.
As we shall see, this leads to a more natural formulation of
the computational problem. More precisely, an \emph{assumption function} $f$ over
the alphabet
$\treelabels = 2^\CLASS$ is a function mapping each element of $\{\Root\} \cup
\CLASS$ to $\{0,1\}$. The boolean value of $f(\Root)$ is used to indicate
whether the input single-node tree is a root node\footnote{Note that the guard
$\langle \uparrow \rangle \top$ evaluates to false on the root node}.
\OMIT{Intuitively, suppose that a tree $\tree \in
\trees(\treelabels)$ is a subtree of a bigger tree $\tree' =
(\treedomain',\treelabeling') \in \trees(\treelabels)$ rooted at a node
$v.i \in \N^*$ of $\tree'$. Then, an assumption function
can be used to record the node label $\treelabeling'(v)$ of the parent of
$v.i$, which can be used to check if the guard of the form $\langle
\uparrow\rangle X$ is satisfied at $v$.}
Given a tree $\tree = (\treedomain,\treelabeling) \in \trees(\treelabels)$, a
node $v \in \treedomain$, and a simple guard $g$ over $\treelabels$, we write
$v,\tree \models_f g$ if one of the following three cases holds: (i) $v \neq
\epsilon$ and $v,\tree \models g$, (ii) $v = \epsilon$, $g$ is not of
the form $\langle \uparrow \rangle X$, and $v,\tree \models g$, and
(iii) $v = \epsilon$, $g = \langle \uparrow \rangle X$, $f(\Root) = 0$,
and $X \subseteq \{ c \in \CLASS : f(c) = 1 \}$.  In other words,
$v,\tree \models_f g$ checks whether $g$ is satisfied at node $v$ assuming
the assumption function $f$ (in particular, if $f(\Root) = 0$, then
any guard referring to the parent of the root node of $\tree$ is checked against
$f$).

Given a simple rewrite system $\trs \in \Rtrsclass$ over $\treelabels$ and an
assumption function $f$,
we may define the rewriting relation $\to_{\trs,f}\ \subseteq
\trees(\treelabels) \times \trees(\treelabels)$ in the same way as we define
$\to_\trs$, except that $\models_f$ is used to check guard satisfaction.
The \defn{class-adding (reachability) problem} is defined as follows:
given a single-node tree $\tree_0 = (\{\epsilon\}, \treelabeling_0) \in
\trees(\treelabels)$ with $\treelabels = 2^\CLASS$, an assumption function
$f: (\{\Root\} \cup \CLASS) \to \{0,1\}$, a class $c \in \CLASS$, and a simple
rewrite system $\trs$, decide if there exists a tree $\tree =
(\treedomain,\treelabeling)$ such that $\tree_0 \to_{\trs,f}^* \tree$ and
$c \in \treelabeling(\epsilon)$.
Similarly, the \defn{$k$-class-adding} problem is defined in the same way as
the class-adding problem except that the
reachable trees are restricted to height $k$ ($k$ is part of the input).
\begin{lemma}
\label{lm:reduce-class-adding} The redundancy (resp. $k$-redundancy) problem for
simple rewrite systems is $\P$-time solvable assuming oracle calls to the
class-adding (resp. $k$-class-adding) problem.
\end{lemma}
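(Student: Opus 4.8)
The plan is to exhibit a \emph{saturation} algorithm and prove it correct using the monotonicity property of Lemma~\ref{lem:monotonic}. By the reduction to the root node from the previous subsection and by Lemma~\ref{lem:removenode}, it suffices to decide, for each atomic guard $c \in S$, whether $c$ labels the root $\epsilon$ of some tree in $\post_\trs^*(\tree_0)$, and we may assume every rule uses only $\addclass{X}$ or $\addchild{X}$. The algorithm maintains a labelling over the \emph{fixed} domain $\treedomain_0$ of $\tree_0$, initialised to $\treelabeling_0$, and repeatedly applies two saturation rules at a node $v \in \treedomain_0$ until a fixpoint is reached: (i) if some rule $(g,\addclass{X}) \in \trs$ satisfies $v,\tree \models g$ in the \emph{current} tree $\tree$ (where the guard, being simple, inspects only the current labels of $v$, its parent, and its children), then add $X$ to the label of $v$; and (ii) for each class $c$, invoke the class-adding oracle on the single-node tree $(\{\epsilon\},\treelabeling(v))$ with the assumption function $f_v$ recording whether $v$ is the root and which classes its parent currently carries, and add $c$ to the label of $v$ if the oracle answers yes. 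The first rule propagates information among the nodes already present in $\tree_0$, whereas the oracle call in the second rule accounts for classes obtainable by growing a \emph{fresh} subtree beneath $v$; this is sound precisely because simple guards never inspect siblings and the assumption function supplies the only relevant external context, namely the parent.

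For \textbf{termination and complexity}, observe that every saturation step strictly enlarges some node's label, so the number of steps is bounded by $|\treedomain_0|\cdot|\CLASS|$; each step performs polynomially many simple-guard checks and at most $|\treedomain_0|\cdot|\CLASS|$ oracle calls. Hence the procedure runs in polynomial time with polynomially many oracle calls, giving the claimed $\P$-time reduction. For the $k$-bounded case the same algorithm applies verbatim, except that when saturating a node $v$ at depth $d$ in $\tree_0$ we invoke the $k$-class-adding oracle with the residual height budget $k-d$, so that fresh subtrees built beneath $v$ never push the global height above $k$.

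\textbf{Correctness} amounts to the invariant that, at the fixpoint, $c$ belongs to the computed label of $v$ iff $c$ is reachable at the position $v$ in some tree of $\post_\trs^*(\tree_0)$; applied at $v=\epsilon$ this decides redundancy. \emph{Soundness} (every class the algorithm adds is genuinely reachable) follows by translating each saturation step into a concrete rewrite sequence: a positive oracle answer is by definition witnessed by a real derivation on a subtree, which by Lemma~\ref{lem:monotonic} can be replayed on top of the current global tree while preserving all previously established labels, since labels only ever grow. \emph{Completeness} (every reachable class is added) is proved by induction on the length of a witnessing global derivation, repeatedly using Lemmas~\ref{lem:preserve-vDash} and~\ref{lem:monotonic} to localise each step: a single rewrite that adds a class to a node depends only on that node's direct neighbours, so it is captured either by rule (i), when the enabling neighbour already lies in $\treedomain_0$, or by rule (ii), when the enabling child is one that a fresh subtree could generate. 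The same monotonic replay argument shows that the possibly distinct executions achieving different classes at a node can be merged into one, so that a conjunctive guard $X\subseteq\CLASS$ is matched at the root exactly when all of its classes appear in the saturated label.

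\textbf{The main obstacle} I anticipate is the completeness direction, specifically justifying the \emph{locality} underlying rule (ii): that replacing the true subtree beneath $v$ (together with all siblings and the remainder of the tree) by a freshly grown single-rooted subtree plus the parent assumption function loses no reachable class. The argument hinges on two facts already available to us --- that simple guards look at most one step up or down, so siblings and distant nodes are irrelevant, and that $\trleq$ is monotonic, so any structure a real derivation relied upon can be regenerated on demand below $v$. Making this exchange rigorous, while keeping the interaction between the whole-tree propagation of rule (i) and the fresh-subtree exploration of rule (ii) consistent at the fixpoint, is the crux of the proof.
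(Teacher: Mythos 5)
Your proposal is correct and takes essentially the same approach as the paper: the identical saturation algorithm over the fixed domain $\treedomain_0$ (rule (i) applying $\addclass{X}$ rules directly, rule (ii) calling the class-adding oracle on the single-node tree for $v$ with an assumption function recording the parent's current label and root status), the same $|\treedomain_0|\cdot|\CLASS|$ bound on saturation steps, and the same correctness argument — soundness by replaying each saturation step as a genuine derivation dominating the saturated tree under $\trleq$, and completeness by induction on a witnessing derivation with the case split on whether the enabling neighbour lies in $\treedomain_0$, both via Lemmas~\ref{lem:preserve-vDash} and~\ref{lem:monotonic}. Your explicit residual height budget $k-d$ for the $k$-bounded case is a detail the paper leaves implicit, but otherwise the two proofs coincide.
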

\OMIT{
In fact, we shall see later that the class-adding (resp. $k$-class-adding)
problem is solvable in $\EXP$ (resp. $\PSPACE$), which together with the
above lemma immediately imply the desired upper bound.
}
Given a tree is $\tree_0 = (\treedomain_0,\treelabeling_0) \in
\trees(\treelabels)$ with $\treelabels = 2^\CLASS$, a simple rewrite system
$\trs$ over $\treelabels$, and a set $S \subseteq \CLASS$, the task is
to decide whether $S$ is redundant (or $k$-redundant). We shall give the
algorithm for the redundancy problem; the $k$-redundancy problem can be
obtained by simply replacing oracle calls to the class-adding problem
by the $k$-class-adding problem.

The algorithm is a fixpoint computation. Let $\tree =
(\treedomain,\treelabeling) \\
:= \tree_0$. At each step, we can apply
any of the following ``saturation rules'':
\begin{itemize}
\item if $(g,\addclass{B})$ is applicable at a node $v \in \treedomain_0$
    in $\tree$, then $\treelabeling(v) := \treelabeling(v) \cup B$.
\item
    If the class-adding
    problem has a positive answer on input $\langle \tree_v,f,c,\trs\rangle$,
    then $\treelabeling(v) := \treelabeling(v) \cup \{c\}$, where
    $v \in \treedomain$, $c \in \CLASS \setminus \treelabeling(v)$,
    and $\tree_v := (\epsilon,\treelabeling_v)$ with
    $\treelabeling_v(\epsilon) = \treelabeling(v)$, where we define
    $f: (\{\Root\}\cup\CLASS) \to \{0,1\}$ with $f(\Root)=1\LRA v = \epsilon$
    and, if $f(\Root) = 0$ and $u$ is the parent of $v$, then
    $f(a)=1 \LRA a\in \treelabeling(u)$.
\end{itemize}
Observe that saturation rules can be applied at most
$\CLASS \times |\treedomain|$ times. Therefore, when they can be applied no
further, we check whether $S \cap \treelabeling(\epsilon) \neq
\emptyset$ and terminate. Assuming constant-time oracle calls to the
class-adding problem, the algorithm easily runs in polynomial time.
Furthermore, since each saturation rule only adds new classes to a node
label, the correctness of the algorithm can be easily proven using Lemma
\ref{lem:preserve-vDash} and Lemma \ref{lem:monotonic}; see the \shortlong{full
version}{appendix}.

\paragraph{Running Example}
The application of the saturation algorithm to our running example is quite simple.
Since the only node in the initial tree is the root node, we can solve the redundancy problem with a single call to the class-adding problem.
That is, is it possible to add the class
$\myclass{\langle \downarrow^\ast \rangle \Success}$
to the root node?

\subsection{Class-adding $\rightarrow$ bit-toggling}
\label{sec:reduction-bit-toggling}

We now show how to solve the class-adding problem for a given node $v$ and
class $c$.
Let $\treelabeling(v)$ be the labelling of node $v$.
We reduce the class-adding problem to the bit-toggling problem as follows.
The pushdown system performs a kind of ``depth-first search'' of the trees that
could be built from the new node and the rewrite rules.
It starts with an initial stack of height 1 containing the node
\mhchanged{
    being inspected.  More precisely,
}%
the single item on this stack is the set $\treelabeling(v)$ (possibly with some extra ``context'' information).
It then ``simulates'' each possible branch that is spawned from $v$ by pushing
items onto the stack when a new node is created (i.e. at each given moment, the
stack contains a single branch in a reachable configuration).
It pops these nodes from the stack when it wishes to backtrack and search other potential branches of the tree.

The pushdown system is an sPDS that keeps one boolean variable for each class $c \in \CLASS$.
If $\pds$ reaches a single item stack (i.e. corresponding to the node $v$)
where the item contains $c$ then the answer to the class adding problem is positive.
That is, the sPDS has explored the application of the rewrite rules to the possible children of $v$ and determined that the label of $v$ can be expanded to include $c$.

The reason why it suffices to only keep track of the labelling of $v$ (instead
of the entire subtree rooted at $v$ that $\pds$ explored) is monotonicity of
$\trsclass$ (cf. Lemma \ref{lem:monotonic}), i.e., that the labelling of $v$
contains sufficient information to ``regrow'' the destroyed subtree.

\OMIT{
To achieve $\EXP$ (resp. $\PSPACE$) algorithm for the class-adding
(resp. $k$-class-adding) problem, it suffices to give an efficient reduction to
the bit-toggling problem over sPDS.
}
\begin{lemma}
The class-adding (resp. $k$-class-adding) problem for $\Rtrsclass$ is
polynomial-time
reducible to the bit-toggling (resp. bounded bit-toggling) problem for sPDS.
\label{lm:reduce}
\end{lemma}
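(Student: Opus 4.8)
The plan is to build an sPDS $\pds$ whose single stack encodes one root-to-leaf branch of a tree produced by $\trs$, and to let $\pds$ simulate a depth-first exploration of the trees reachable from the single-node input. First I would fix the bit layout. The stack alphabet $\varsetW$ carries, for the node currently on top: one bit per class $c \in \CLASS$ recording whether $c$ lies in that node's label; a block of \emph{context} bits recording a snapshot of the parent's label together with the value of $f(\Root)$, so that modal guards $\langle \uparrow \rangle X$ can be evaluated by inspecting the top symbol alone; a distinguished $v$-flag marking the bottom symbol as the node $v$ we are testing; and a goal bit $y_{\text{goal}}$. The control state $\varsetV$ carries a small mode indicator together with one bit per class used to transport the final label of a child that has just been popped back up to its parent. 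Initialisation places a single symbol on the stack encoding $\treelabeling_0(\epsilon)$, with context bits set from $f$ and the $v$-flag on, and the bit-toggling target is $y_{\text{goal}}$.

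Second, I would translate each simple rule into a handful of sPDS rules acting on the top symbol. A self rule $(X,\addclass{Y})$ and an up-rule $(\langle\uparrow\rangle X,\addclass{Y})$ become replace-top moves ($i=1$) whose formula tests $X$ against the label bits, respectively the context bits, of the top symbol and unions $Y$ into its label; recall that by simplicity every $\langle d\rangle$-guarded rule carries an $\addclass$ operation, so these are the only modal rules to treat via context and transport bits. A child-creating rule $(X,\addchild{Z})$ becomes a push move ($i=2$) enabled when the top label contains $X$, where the pushed symbol has label $Z$, context bits copied from the current top label, and its $v$-flag off. Returning from a child is a pop move ($i=0$) that copies the popped symbol's label into the transport bits of $\varsetV$, followed by a replace-top move at the parent that, for every down-rule $(\langle\downarrow\rangle X,\addclass{Y})$ whose $X$ is contained in the transported label, unions $Y$ into the parent's label. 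Whenever any such move adds the target class $c$ to a symbol whose $v$-flag is set, it also sets $y_{\text{goal}}$; since the $v$-symbol sits at the bottom of the stack it is on top exactly when the height is one, so $y_{\text{goal}}$ is togglable iff $c$ can be placed on $v$.

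The main obstacle is correctness under the mutual dependency between a node and its children: a child may acquire classes from its parent through $\langle\uparrow\rangle$-rules and in turn enable $\langle\downarrow\rangle$-rules at the parent, which may further enlarge the parent and thereby unlock more classes at the child. A single branch cannot keep a child ``live'' while the parent grows. Here I would lean on monotonicity (Lemma~\ref{lem:monotonic}) and on the fact that simple guards never relate siblings ($d \in \{\uparrow,\downarrow\}$), so a child only ever interacts with its own parent. Because operations only add nodes and classes ($\addchild{X}$, $\addclass{X}$ by the Convention), once a child attains a label it never loses it, and its achievable label is independent of its siblings. I would therefore argue that it is sound to process children one at a time: push a fresh child, explore it under the parent's current context, pop it while ``cashing in'' its contribution to the parent, and repeat with another fresh child after the parent has grown. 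Each round presents the next child with a $\trleq$-larger context, so by monotonicity it achieves at least as much as before, and since $\CLASS$ is finite the process reaches a fixpoint equal to the true reachable labels, which is exactly what the nondeterministic depth-first search can realise. Soundness (every $\pds$ run mirrors a genuine rewrite sequence, regrowing the transient children via monotonicity) and completeness (any real rewrite sequence can be mimicked) both follow; I would record the precise invariant relating a stack to the branch of a reachable tree, and verify it by induction on run length.

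Finally, the construction is visibly polynomial: $\varsetV$ and $\varsetW$ use $O(|\CLASS|)$ bits, and each of the $O(|\trs|)$ rule-schemas yields a constant number of sPDS rules whose boolean formulas are conjunctions and disjunctions over these bits of size polynomial in $\trs$ (and, as already noted, with immediate small BDDs). For the $k$-class-adding problem I would impose the stack-height bound of the bounded bit-toggling instance: since the stack height equals the number of nodes on the current branch, bounding it by $k+1$ restricts the simulated trees to height $k$, matching $\post_{\trs,k}^*$, and the same correctness argument applies verbatim.
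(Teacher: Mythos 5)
Your construction is essentially the paper's own proof: the same branch-on-stack depth-first simulation, with label bits and parent-context bits carried on each stack symbol (the paper's $y_c$, $z_c$, $\Root$), transport bits plus a pop flag in the control state (the paper's $x_c$, $\Pop$), push/replace-top/pop rules for $\addchild{X}$, $\addclass{X}$ and backtracking, monotonicity (Lemma~\ref{lem:monotonic}) to justify destroying and regrowing children, and the stack-height bound for the $k$-bounded case. The one detail you should state explicitly (your ``mode indicator'' presumably plays this role) is that every push must reset the pop flag and transport bits --- the paper's push rule asserts $\neg\Pop' \wedge \bigwedge_{c\in\CLASS}\neg x_c'$ --- since otherwise a freshly created, childless node could use a stale transported label of some previously popped node to fire a $\langle\downarrow\rangle$-rule, breaking soundness.
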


We prove the lemma above.
Fix a simple rewrite system $\trs$ over the node labeling $\treelabels =
2^\CLASS$, an assumption function $f: (\{\Root\} \cup \CLASS) \to \{0,1\}$,
a single node tree $\tree_0 = (\{\epsilon\},\treelabeling_0) \in
\trees(\treelabels)$, and a class $\alpha \in \CLASS$.

We construct an sPDS $\pds = (\varsetV,\varsetW,\pdsrules)$. Intuitively,
the sPDS $\pds$ will simulate $\trs$ by exploring all branches in all trees
reachable from
$\tree_0$ while accumulating the classes that are satisfied at the root node.
Define $\varsetV := \{ x_c : c \in \CLASS \} \cup \{\Pop\}$, and
$\varsetW := \{ y_c,z_c : c \in \CLASS \} \cup \{\Root\}$. Roughly speaking,
we will use the variable $y_c$ (resp. $z_c$) to remember whether the class
$c$ is satisfied at the current (resp. parent of the current) node being
explored. The variable $x_c$ is needed to remember whether the class $c$ is
satisfied at a child of the current node (i.e. after a pop operation). The
variable $\Root$ signifies whether the current node is a root node, while
the variable $\Pop$ indicates whether the last operation that changed the
stack height is a pop. We next define $\pdsrules$:
\begin{itemize}
\item
\mhchanged{
    For each $(A,\addclass{B}) \in \trs$, add the rule $(1,\varphi)$, where $\varphi$ asserts
    \begin{enumerate}[(a)]
    \item
        $\bigwedge_{a \in A} y_a$ ($A$ satisfied), and
    \item
        $\bigwedge_{b \in B} y_b^1$ ($B$ set to true), and
    \item
        all other variables remain unchanged, that is we assert
        $\Pop \leftrightarrow \Pop'$,
        $\Root \leftrightarrow \Root^1$,
        $\bigwedge_{c \in \CLASS \setminus B} (y_c \lra y_c^1)$,
        $\bigwedge_{c \in \CLASS} (z_c \lra z_c^1)$, and
        $\bigwedge_{c \in \CLASS} (x_c \lra x_c')$.
    \end{enumerate}
}
\item
For each $(\langle \uparrow \rangle A,\addclass{B}) \in \trs$, add
the rule $(1,\varphi)$ where $\varphi$ asserts
\mhchanged{
    \begin{enumerate}[(a)]
    \item
        $\bigwedge_{a \in A} z_a$ ($A$ satisfied in the parent), and
    \item
        $\bigwedge_{b \in B} y_b^1$ ($B$ set to true), and
    \item
        all other variables remain unchanged, that is
        $\Pop \lra \Pop'$,
        $\Root \lra \Root^1$,
        $\bigwedge_{c \in \CLASS} (x_c \lra x_c')$,
        $\bigwedge_{c \in \CLASS} (z_c \lra z_c^1)$, and
        $\bigwedge_{c \in \CLASS\setminus B} (y_c \lra y_c^1)$.
    \end{enumerate}
}
\item
For each $(\langle \downarrow \rangle A,\addclass{B}) \in \trs$, add
the rule $(1,\varphi)$ where $\varphi$ asserts
\mhchanged{
    \begin{enumerate}[(a)]
    \item
        $\Pop$ (we popped from a child), and
    \item
        $\bigwedge_{a \in A} x_a$ ($A$ satisfied in child), and
    \item
        $\bigwedge_{b \in B} y_b^1$ ($B$ set to true), and
    \item
        all other variables remain unchanged, that is
        $\Pop \lra \Pop'$,
        $\Root \lra \Root^1$,
        $\bigwedge_{c \in \CLASS} (x_c \lra x_c')$,
        $\bigwedge_{c \in \CLASS} (z_c \lra z_c^1)$, and
        $\bigwedge_{c \in \CLASS\setminus B} (y_c \lra y_c^1)$.
    \end{enumerate}
}
\item
For each $(A,\addchild{B}) \in \trs$, add the rule $(2,\varphi)$, where
$\varphi$ asserts
\mhchanged{
    \begin{enumerate}[(a)]
    \item
        $\bigwedge_{a \in A} y_a$ ($A$ satisfied), and
    \item
        $\bigwedge_{b \in B} y_b^2$ ($B$ true in new child), and
    \item
        $\bigwedge_{c \in \CLASS\setminus B} \neg y_c^2$ (new child has no other classes), and
    \item
        $\bigwedge_{c \in \CLASS} (y_c \lra z_c^2)$ (new child's parent classes), and
    \item
        $\neg \Root^2$ (new child is not root), and
    \item
        $\neg \Pop'
         \land
         \bigwedge_{c \in \CLASS} \neg x_c'$ (new child does not have a child), and
    \item
        the current node is unchanged, that is
        $\bigwedge_{c \in \CLASS} (y_c \lra y_c^1)$,
        $\bigwedge_{c \in \CLASS} (z_c \lra z_c^1)$, and
        $(\Root \lra \Root')$
    \end{enumerate}
}
\item
Finally, add the rule $(0,\varphi)$, where $\varphi$ asserts
\mhchanged{
    \begin{enumerate}[(a)]
    \item
        $\neg \Root$ (can return to parent), and
    \item
        $\Pop'$ (flag the return), and
    \item
        $\bigwedge_{c \in \CLASS} (y_c \lra x_c')$ (return classes to parent).
    \end{enumerate}
}
\end{itemize}
These boolean formulas can easily be represented as BDDs
\mhchanged{
    of linear size
}
(see \shortlong{full version}{appendix}).

Continuing with our translation, the bit that needs to be toggled on is
$y_\alpha$. We now construct
the initial configuration for our bit-toggling problem.
For each subset $X \subseteq \CLASS$ and a function $q: \varsetV \to
\{0,1\}$, define the function $I_{X,f,q}:
(\varsetV \cup \varsetW) \to \{0,1\}$ as follows:
$I_{X,f,q}(\Root) := f(\Root)$, $I_{X,f,q}(\Pop) := q(\Pop)$, and for each
$c \in \CLASS$:
(i) $I_{X,f,q}(x_c) := q(x_c)$, (ii) $I_{X,f,q}(y_c) = 1$ iff $c \in X$,
and (iii) $I_{X,f,q}(z_c) := f(c)$. We shall write $I_{X,f}$ to mean
$I_{X,f,q}$ with $q(x) = 0$ for each $x \in \varsetV$.
Define the initial configuration $I_0$ as the
function $I_{\treelabeling_0(\epsilon),f}$.

Let us now analyse our translation. The translation is easily seen to run
in polynomial time. In fact, with a more careful analysis,
one can show that the output sPDS is of linear size and that
the translation can be implemented in polynomial time.
Correctness of our translation immediately follows from the
following technical lemma:

\begin{lemma}
For each subset $X \subseteq \CLASS$, the following are equivalent:
\begin{description}
\item[(A1)] There exists a tree $\tree = (\treedomain,\treelabeling)
    \in \trees(\treelabels)$ such that $\tree_0 \to_{\trs,f}^* \tree$
    and $\treelabeling(\epsilon) = X$.
\item[(A2)] There exists $q': \varsetV \to \{0,1\}$ such that
$I_{\treelabeling_0(\epsilon),f} \to_\pds^* I_{X,f,q'}$.
\end{description}
\label{lm:spds-correctness}
\end{lemma}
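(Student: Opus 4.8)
The plan is to prove the two implications separately, in each case by exhibiting a step-by-step simulation governed by a single coupling invariant that ties an sPDS configuration to a branch of a reachable tree together with the local context needed to evaluate simple guards. The key point --- and the reason the reduction is sound even though $\pds$ stores only one root-to-node branch at a time --- is that simple guards refer only to the node itself, its parent, or one of its children, and that $\trsclass$ is monotone (Lemma~\ref{lem:monotonic}): a popped subtree never has to be remembered, because it can be regrown on demand, and regrowing it in a richer context can only help (Lemma~\ref{lem:preserve-vDash}).

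Concretely, the invariant I would maintain says: at any point in the run the stack $s_1 \cdots s_h$ (bottom-to-top) corresponds to a path $v_1,\ldots,v_h$ from the root $v_1$ of some tree $\tree'$ with $\tree_0 \to_{\trs,f}^\ast \tree'$, where the $y$-component of $s_j$ equals $\treelabeling'(v_j)$, the $z$-component of $s_j$ equals $\treelabeling'(v_{j-1})$ (with $s_1$'s $z$-component and $\Root$-flag fixed by $f$), and, when the control flag $\Pop$ is set, the variables $x_c$ record exactly the label of the child of $v_h$ most recently explored and popped. Under this invariant the five families of sPDS rules correspond bijectively to the admissible rewrite steps and navigations: a type-(1)/$\langle\uparrow\rangle$/$\langle\downarrow\rangle$ rule fires precisely when the matching $\addclass{B}$-rule is enabled at $v_h$ (the guard being certified by the $y$-, $z$-, or $x$-variables, respectively), an $\addchild{B}$ rule corresponds to creating a fresh $B$-labelled child of $v_h$ and descending into it, and the pop rule corresponds to ascending to $v_{h-1}$ while handing $v_h$'s label up through the $x$-variables.

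For the direction (A2)$\Rightarrow$(A1) this invariant essentially writes the proof itself: I would induct on the length of the sPDS run, mapping each transition to at most one application of $\to_{\trs,f}$ and checking that the invariant is restored in each of the five cases. The only thing to observe is that when $\pds$ ``re-explores a child'' it is in fact free to create a brand-new child in $\tree'$ (unranked trees admit arbitrarily many children, and positive guards are preserved under adding them by Lemma~\ref{lem:preserve-vDash}), so nothing is lost. Reaching the single-item configuration $I_{X,f,q'}$ then forces $h=1$ with $\treelabeling'(v_1)=X$, which is exactly (A1).

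The hard direction is (A1)$\Rightarrow$(A2), and I expect its scheduling argument to be the main obstacle. Given a derivation $\tree_0 \to_{\trs,f}^\ast \tree$ with $\treelabeling(\epsilon)=X$, I would induct on the height of $\tree$ and, at each node, use an inner fixpoint argument that reorders that node's additions into depth-first form. The subtlety is the mutual dependence between a node and its children: a $\langle\downarrow\rangle$-rule at $v$ needs a child that already carries the required classes, while that child may have earned them from $v$ through a $\langle\uparrow\rangle$-rule. Monotonicity dissolves the apparent circularity: since labels only grow, the global order of $\tree$'s derivation already respects every such dependency, so I can replay the additions to $v$ in their induced order, and whenever the next addition used a $\langle\downarrow\rangle$-guard I pause, push a fresh copy of the relevant child, recursively grow it in the now-current parent context (snapshotted into its $z$-variables by the $\addchild{}$ rule) until it satisfies the guard, pop it so its label lands in the $x$-variables, and fire the paused rule. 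Because pushing, growing a child, and popping never touch the $y$-component of $v$'s frame, the label of $v$ evolves through exactly the additions taken from $\tree$ and therefore reaches $X$ on the nose rather than overshooting. The crux to get right is that this interleaving is always realisable --- that at the moment each $\langle\downarrow\rangle$-guard must be met, the context already available to the child dominates the context $\tree$ used --- which is exactly where the monotonicity and well-foundedness of $\trleq$ are invoked to guarantee that the finitely many, only-growing re-explorations both terminate and suffice.
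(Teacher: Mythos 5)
Your proposal is correct, and it rests on the same pillars as the paper's argument --- locality of simple guards, growth-only monotonicity (Lemmas \ref{lem:preserve-vDash} and \ref{lem:monotonic}), and the correspondence between stack frames and a root-to-node branch --- but the harder direction is organized along a genuinely different induction. For (A2)$\Rightarrow$(A1) the difference is cosmetic: the paper strengthens the claim to arbitrary height-one start and end configurations (its Lemma \ref{lm:sound}) and inducts on run length, factorizing each push at its matching pop; that is exactly the content of your coupling invariant, packaged as a strengthened induction hypothesis rather than a step-indexed invariant over arbitrary stack heights. For (A1)$\Rightarrow$(A2), however, the paper inducts on the \emph{length of the rewrite derivation}: it performs surgery on the path so that the first root-modifying $\addclass{B}$ rule either fires immediately (plain and $\langle\uparrow\rangle$ guards) or fires right after its witnessing child --- normalized to be the unique child, grown in isolation --- is complete ($\langle\downarrow\rangle$ guards); it then applies the induction hypothesis twice and must glue the two resulting sPDS runs with an auxiliary monotonicity lemma for the sPDS itself (Lemma \ref{lm:mon-sPDS}), because the second run is built from the all-zero control state while the gluing point already has $\Pop$ and some $x$-bits set. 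Your route instead inducts on \emph{tree height} and reorders the whole derivation into depth-first form, re-pushing a fresh child each time a $\langle\downarrow\rangle$-guarded addition comes due; your domination argument (the frozen parent context at regrowth time majorizes every earlier context the child actually used, so positive guards survive) does the work of both the paper's path surgery and its run-gluing lemma. What your organization buys is that no sPDS-level monotonicity lemma is needed, since each rule fires exactly when its certifying bits are freshly set, and the root's label provably traverses exactly the additions of the original derivation, hitting $X$ on the nose; what you still owe is a precise statement of the regrowth invariant (replay order, domination, and termination --- which is cleanest to justify by recursion on subtree height, not by well-foundedness of $\trleq$), which is precisely where the paper's Case 2 spends its effort.
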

This lemma intuitively states that the constructed sPDS performs a ``faithful
simulation'' of $\trs$. Moreover, the direction (A2)$\Rightarrow$(A1) gives \emph{soundness}
of our reduction, while (A1)$\Rightarrow$(A2) gives \emph{completeness}
of our reduction.
The proof is very technical,
which we relegate to the \shortlong{full version}{appendix}.

\OMIT{
We first prove the direction
(A2)$\Rightarrow$(A1). We will prove this by induction on the length of
paths. The problem is that our induction hypothesis is not strong enough
to get us off the ground (since the state component of
$I_{\treelabeling_0(\epsilon,f)}$ is always $(0,\ldots,0)$). Therefore,
we will first prove this by induction with a stronger induction
hypothesis:
\begin{lemma}
Given $\pds$-configurations $(q,\alpha)$ and $(q',\alpha')$ with
$q,q': \varsetV \to \{0,1\}$ and $\alpha,\alpha': \varsetW \to
\{0,1\}$ (i.e. stack of height 1), suppose that $(q,\alpha) \to_\pds^*
(q',\alpha')$.
Define $\tree =
(\treedomain,\treelabeling) \in \trees(\treelabels)$ as
(1) $D = \{\epsilon\} \cup \{ 0 : q(\Pop) = 1 \}$,
(2) $\treelabeling(\epsilon) =
\{ c \in \CLASS : \alpha(y_c) = 1 \}$ and
$\treelabeling(0) = \{ c \in \CLASS : q(x_c) = 1 \}$.
Define the
assumption function $f_\alpha: (\CLASS\cup\{\Root\}) \to \{0,1\}$ with
$f(c) = \alpha(z_c)$ for $c \in \CLASS$ and $f(\Root) = \alpha(\Root)$.
Then, there exists $\tree' = (\treedomain',\treelabeling')\in\trees(
\treelabels)$ such that
$\treelabeling'(\epsilon) =
\{ c \in \CLASS : \alpha'(y_c) = 1\}$
and $\tree \to_{\trs,f_\alpha}^* \tree'$.
\label{lm:sound}
\end{lemma}
Observe that the direction (A2)$\Rightarrow$(A1) is immediate from this
Lemma: simply apply it with
$(q,\alpha) = I_{\treelabeling_0(\epsilon),f}$ and
$(q',\alpha') = I_{X,f,q'}$.

\OMIT{
\begin{lemma}
Suppose that $I$ is a configuration of $\pds$ with stack height 1, i.e.,
$I: (\varsetV\cup\varsetW) \to \{0,1\}$.
Define $\tree =
(\treedomain,\treelabeling) \in \trees(\treelabels)$ satisfying
(1) if $I(\Pop) = 0$, then $\treedomain = \{\epsilon\}$; if $I(\Pop) = 1$,
then $\treedomain = \{\epsilon,0\}$, (2) $\treelabeling(\epsilon) =
\{ c \in \CLASS : I(y_c) = 1 \}$ and
$\treelabeling(0) = \{ c \in \CLASS : I(x_c) = 1 \}$. Define the
assumption function $f_I: (\CLASS\cup\{\Root\}) \to \{0,1\}$ with
$f(c) = I(z_c)$ for $c \in \CLASS$ and $f(\Root) = I(\Root)$
Then the following hold:
\begin{enumerate}
\item Suppose $I \to_{(1,\varphi)} I'$ (i.e. $I': (\varsetV\cup\varsetW) \to
\{0,1\}$) and $(1,\varphi)$ was obtained from the rewrite rule
$\sigma = (g,\addclass{A})$. Then, if $\tree \to_{\sigma,f_I} \tree' =
(\treedomain', \treelabeling')$ with $\sigma$ applied to the root node, then
$\treelabeling'(\epsilon) = \{c \in \CLASS: I'(y_c) = 1\}$.
\item Suppose $I \to_{(2,\varphi)} I'$ with $I'= (q,\alpha\beta)$ with
$q: \varsetV \to \{0,1\}$, and $\alpha,\beta: \varsetW \to \{0,1\}$. Then,
$I'$ and $\alpha$ coincide on domain $\varsetW$ (i.e. bottom stack symbol was
unchanged). In addition, there
exists $\sigma := (A,\addchild{B}) \in \trs$ such that $\tree \to_\sigma
\tree' = (\treedomain',\treelabeling')$ with $\treedomain' = \treedomain
\cup \{1\}$ and
\[
    \treelabeling'(v) = \left\{ \begin{array}{cc}
                                \treelabeling(v) & \text{\quad
                                        if $v \in \{\epsilon,0\}$} \\
                                \{ c : \beta(y_c) =  1\} & \text{\quad
                                    otherwise.}
                            \end{array}
                        \right.
\]
\end{enumerate}
\end{lemma}
}

\OMIT{
\begin{proof}
\textbf{(A2) implies (A1)}.
The proof is by induction on the length
$k$ of the witnessing path
$\Path: I_{\treelabeling_0(\epsilon),f} \to_\pds^* I_{X,f}$. If
$k = 0$, then the only case to check is when $X = \treelabeling_0(\epsilon)$.
In this case, we can just set $\tree = \tree_0$ and (A1) is satisfied.
Assume now that $k > 0$. The path $\Pi$ can be written as
$I_{\treelabeling_0(\epsilon),f} = I_0 \to \cdots \to I_m = I_{X,f}$.
We have several cases to consider.

Suppose that $I_0 \to_{\sigma} I_1$, where $\sigma = (1,\varphi)$. We
consider only the case when $\sigma$ was generated from a rule in $\trs$
of the form $\gamma := (A,\addclass{B})$; the other two cases (when $\sigma$ was
generated
from a rule of the form $(\langle d\rangle A,\addclass{B})$ can be handled
in the same way. Then, $I_1 = (q,\alpha)$ with $q: \varsetV \to \{0,1\}$
and $\alpha: \varsetW \to \{0,1\}$.
Observe now that the rule $\sigma$ toggles on the variables $y_c$
preserves the value of all variables
in $\varsetV \cup \varsetW$. So, if $T_1 = (\{\epsilon\},\treelabeling_1)$
with $\treelabeling_1(\epsilon) := \treelabeling_0(\epsilon) \cup B$, then
\end{proof}
}

\OMIT{
Suppose the input rewrite system is $\trs$, the input tree is $\tree =
(\treedomain,\treelabeling)$, and the input guard to be checked for redundancy
is $g$. [If there are several input guards to be checked for redundancy, then
we simply go through each of them sequentially.] Our algorithm will check
for $k$-redundancy ($k$ is given in the input in unary).

Let
}

We now prove the direction (A1)$\Rightarrow$(A2) of the above claim. The
proof is by induction on the length $k$ of the path $\Path: \tree_0
\to_{\trs,f}^* \tree$. The base case is when $k = 0$, in which case
we set $q'(x) := 0$ for all $x \in \varsetV$ and (A2) is
satisfied since $I_{\treelabeling_0(\epsilon),f} = I_{X,f,q'}$.

We proceed to the induction case, i.e., when $k > 0$.
We may assume
that $\treelabeling(\epsilon)$ is a \emph{strict} superset of
$\treelabeling_0(\epsilon)$; otherwise, we may replace $\Path$ with an
empty path, which reduces us to the base case. Therefore,
some rewrite rule $(g,\addclass{B})$ with $B \cap
\treelabeling_0(\epsilon) \neq \emptyset$
is applied at $\epsilon$ at some point in $\Path$. We have two cases
depending on the first application of such a rewrite rule $\sigma =
(g,\addclass{B})$ at $\epsilon$ in $\Path$ (say, at a
tree
$\tree' = (\treedomain',\treelabeling')$ with $\Path: \tree_0 \to_{\trs,f}^*
\tree' \to_{\trs,f}^* \tree$):

\textbf{(Case 1)} We have $g = A$ or $g = \langle \uparrow \rangle A$ for some
$A \subseteq \CLASS$. Then we have $\treelabeling_0(\epsilon) =
\treelabeling'(\epsilon)$ and $\sigma$ can already be applied at $\epsilon$
in $\tree_0$ yielding a tree $\tree_1 = (\{\epsilon\},\treelabeling_1)$
with $\treelabeling_1(\epsilon) = \treelabeling_0(\epsilon) \cup B$. For
this reason, we may apply the same sequence of rewrite rules
witnessing $\Path$ but with $\tree_1$ as an initial configuration. Furthermore,
we may remove any application of $\sigma$ at $\epsilon$ in this
sequence (since $B$ is always a subset of the label of $\epsilon$), which
gives a path $\Path': \tree_1 \to_{\trs,f} \tree$ of length $\leq k-1$.
Altogether, we obtain another path $\tree_0 \to_{\trs,f} \tree_1
\to_{\trs,f}^* \tree$ of length $\leq k$ such that the first
rewrite rule is of the form $g = A$ or $g = \langle \uparrow \rangle A$.
In addition, we have $I_{\treelabeling_0(\epsilon),f} \to_\pds
I_{\treelabeling_1(\epsilon),f}$ by applying the rule
$(1,\varphi) \in \pdsrules$ generated from $\sigma$ in the construction of
$\pds$. Now we may apply the induction hypothesis to the path
$\Path': \tree_1 \to_{\trs,f}^* \tree$ (of length $< k$) and obtain
$q': \varsetV
\to \{0,1\}$ such that $I_{\treelabeling_1(\epsilon),f} \to_\pds^*
I_{\treelabeling(\epsilon),f,q'}$, which gives us a path from
$I_{\treelabeling_0(\epsilon),f}$ to $I_{\treelabeling(\epsilon),f,q'}$,
as desired.

\textbf{(Case 2)} We have $g = \langle \downarrow \rangle A$ for some
$A \subseteq \CLASS$. Then, we have $\Path: \tree_0 \to_{\trs,f}^* \tree_1
\to_{\gamma,f} \tree_2 \to_{\trs,f}^* \tree_3 \to_{\sigma,f} \tree_4
\to_{\trs,f}^* \tree$ for some $\gamma = (A,\addchild{Y})$ and trees
$\tree_i = (\treedomain_i,\treelabeling_i)$ ($i \in \{1,4\}$) such that:
(1) $\treelabeling_0(\epsilon) = \treelabeling_i(\epsilon)$ for all
$i \in \{1,\ldots,3\}$, (2) At $\tree_1$, the rule $\gamma$ is applied at
$\epsilon$ and so
$\treedomain_2 = \treedomain_1 \cup \{j\}$ for some $j \in \N \setminus
\treedomain_1$, (3) At $\tree_3$, the rule $\sigma$ is applied at
$\epsilon$, and so $\treedomain_4 = \treedomain_3$ and
$\treelabeling_4(v) = \treelabeling_3(v)$ for $v \neq \epsilon$ and
$\treelabeling_4(\epsilon) = \treelabeling_3(\epsilon) \cup B$. Owing to
(1), the sequence of rewrite rules applied in the subpath of $\Path$ from
$\tree_0$
and $\tree_4$ may be reordered so long as we preserve the ordering of the
rewrite rules applications within \emph{each} child of $\epsilon$. Therefore,
there exists a path $\Path': \tree_0 \to_{\gamma,f} \tree_1' \to_{\trs,f}^*
\tree_2' \to_{\gamma,f} \tree_3' \to_{\trs,f}^* \tree_4 \to_{\trs,f}^* \tree$
\mh{$\gamma$ is applied twice here, but $\sigma$ is not.}
of the same length as $\Path$ for some trees $\tree_i' =
(\treedomain_i',\treelabeling_i')$ ($i = 1,2,3$) such that: (a) $\treedomain_1
= \{\epsilon,0\}$,
(b) the path $\nu: \tree_1' \to_{\trs,f}^* \tree_2'$
does not apply any rule to $\epsilon$ (i.e., modifies only
the subtree of rooted at 0), and (c) At $\tree_2'$, $\gamma$ is applied
at $\epsilon$.
Let $(2,\varphi) \in \pdsrules$ (resp. $(1,\psi) \in \pdsrules$)
be the rule generated by $\gamma$ (resp. $\sigma$). Then,
$I_{\treelabeling_0(\epsilon),f} \to_{(2,\varphi)}
(q_1,\alpha_1\beta_1)$\mh{you mean $\treelabeling_2$?} for some $q_1: \varsetV \to \{0,1\}$ and
$\alpha_1,\beta_1: \varsetW \to \{0,1\}$ satisfying:
(i) $q_1(x) = 0$ for all $x \in \varsetV$, (ii)
$\alpha_1(x) = I_{\treelabeling_0(\epsilon),f}(x)$ for each $x \in
\varsetW$, and (iii) $\alpha_2(x) = I_{\treelabeling_1(0),f'}(x)$, where
$f': (\CLASS\cup\{\Root\}) \to \{0,1\}$ with $f'(\Root) = 0$ and
$f'(c) = 1 \LRA c \in \treelabeling_1(\epsilon)$.
By restricting the path $\nu$ to subtrees rooted at 0\mh{what does this mean? -- ah, you chop off the root node of all trees in the path -- perhaps rephrase it in terms of ``without the root node'' as this is more intuitive (at least to me)}, we may apply
the induction hypothesis and obtain $q_2: \varsetV \to \{0,1\}$ such that
$(q_1,\beta_1) = I_{\treelabeling_1'(0),f'} \to_\pds^*
I_{\treelabeling_2'(0),f',q_2}$. Let $\beta_2: \varsetW \to\{0,1\}$ be such
that $(q_2,\beta_2) = I_{\treelabeling_2'(0),f',q_2}$.  \mh{say that since this run exists, the same sequence of operations will take you from $(q_1, \alpha_1\beta_1)$ to $(q_2, \alpha_1\beta_2)$.}
We may now apply the (unique) rule $(0,\theta)$ from $(q_2,\alpha_1\beta_2)$
obtaining the configuration $(q_3,\alpha_1)$, where $q_3: \varsetV \to
\{0,1\}$ such that $q_3(\Pop) = 1$ and $q_3(x_c) = \beta_2(y_c)$.
Since $A \subseteq \treelabeling_2'(\epsilon)
= \treelabeling_0(\epsilon)$, we may now fire $(1,\psi)$ from
$(q_3,\alpha_1)$ obtaining the configuration $(q_3,\alpha_2)$ with
$\alpha_2(z_c) = \alpha_1(z_c)$, and $\alpha_2(y_c) = 1 \LRA \alpha_1(y_c)
\vee c \in B$. Note that $\alpha_2$ records $\treelabeling_3'(\epsilon)$
and $f$.

Now consider the sequence of $\trs$-rules applied in the path $\Path'$ with
each occurence of $\sigma$ removed. The resulting sequence of length
$< k$ gives rise to a path from $\tree' = (\{\epsilon\},\treelabeling_3')$ to
$\tree$. By induction, there exists $q': \varsetV \to \{0,1\}$ such that
$I_{\treelabeling_3'(\epsilon),f} \to_\pds^* I_{\treelabeling(\epsilon),f,q'}$.
Now, observe that $I_{\treelabeling_3'(\epsilon),f} = (q,\alpha_2)$, where
$q(x) = 0$ for each $x \in \varsetV$. Lemma \ref{lm:mon-sPDS} below implies
that $(q_3,\alpha_2) \to_\pds^* I_{\treelabeling(\epsilon),f,q'}$. Finally,
we can connect all the $\pds$-paths that we constructed and obtain
a $\pds$-path from $I_{\treelabeling_0(\epsilon),f}$ to
$I_{\treelabeling(\epsilon),f,q'}$ as desired. This completes the
proof of our claim.
\mh{This proof looks horrible.  It might be correct, i just don't want to look at it long enough to make sure\dots  Would we be better moving it to the appendix where it can be formatted legibly, and replace it with some intuition?}

\begin{lemma}
Suppose $p,q: \varsetV \to \{0,1\}$ such that $p(x) \leq q(x)$ for each
$x \in \varsetV$. Then, for all stack content $v = \alpha_1\cdots \alpha_m$
where each $\alpha_i: \varsetW \to \{0,1\}$, if $(p,v) \to_{\sigma} (p',w)$ with
$\sigma \in \pdsrules$, then $(q,v) \to_{\sigma} (q',w)$ for some $q':
\varsetV \to \{0,1\}$ such that $p'(x) \leq q'(x)$ for each $x \in \varsetV$.
\label{lm:mon-sPDS}
\end{lemma}
\begin{proof}
Easy to verify by checking each $\pds$-rule.
\end{proof}

\OMIT{
Therefore, we may assume that $g = \langle \downarrow \rangle A$
for some $A \subseteq \CLASS$. In this case,
}
}

\paragraph{Running Example}

We show how the sPDS constructed can determine that
\acmeasychair{}{the required class}
\alchanged{$\myclass{\langle \downarrow^\ast \rangle \Success}$}
can be added to the root node of the tree, thus implying that there are no redundant selectors.

We write configurations of a symbolic pushdown system using the notation
$\sconfig{X}{Y_1 \ldots Y_m}$ where $X$ is the set of classes $c$ such that the
variable $x_c$ is true, and $Y_1 \ldots Y_m$ is a stack of sets of classes
(with the top on the right) such that each $Y_i$ is a set of classes $c$ such
that $y_c$ is true at stack position $i$.  Note, we do not show the $z_c$
variables' values since they are only needed to evaluate $\langle \uparrow
\rangle$ modalities, which do not appear in our example.

First we apply the rule $(\Root,\addchild{\Team})$ and then $(\Team, \addchild{\PlayerOne})$.  Each step pushes a new item onto the stack.  By executing these steps the sPDS is exploring a branch from $\Root$ to $\PlayerOne$.
\[
    \begin{array}{l}
        \sconfig{\emptyset}{\set{\Root}}
        \rightarrow \\
        \sconfig{\emptyset}{\set{\Root} \set{\Team}}
        \rightarrow \\
        \sconfig{\emptyset}{\set{\Root} \set{\Team} \set{\PlayerOne}}
    \end{array}
\]

At this point there's nothing more we can do with the $\PlayerOne$ node.  Hence,
the sPDS backtracks, remembering in its control state the classes contained in
the child it has returned from.  Once this information is remembered in its
control state, it knows that a child is labelled $\PlayerOne$ and hence $(\langle \downarrow \rangle \PlayerOne, \addclass{\myclass{\langle \downarrow \rangle \PlayerOne}})$ can be applied.
\[
    \begin{array}{l}
        \sconfig{\emptyset}{\set{\Root} \set{\Team} \set{\PlayerOne}}
        \rightarrow \\
        \sconfig{\set{\PlayerOne}}{\set{\Root} \set{\Team}}
        \rightarrow \\
        \sconfig{\set{\PlayerOne}}
                {\set{\Root}
                 \set{\Team,
                      \myclass{\langle \downarrow \rangle \PlayerOne}}}
    \end{array}
\]
Now the sPDS can repeat the analogous sequence of actions to obtain that $\myclass{\langle \downarrow \rangle \PlayerTwo}$ can also label the $\Team$ node.
\[
    \begin{array}{c}
        \sconfig{\set{\PlayerOne}}
                {\set{\Root}
                 \set{\Team,
                      \myclass{\langle \downarrow \rangle \PlayerOne}}}
        \rightarrow \\
        \sconfig{\emptyset}
                {\set{\Root}
                 \set{\Team,
                      \myclass{\langle \downarrow \rangle \PlayerOne}}
                 \set{\PlayerTwo}}
        \rightarrow \\
        \sconfig{\set{\PlayerTwo}}
                {\set{\Root}
                 \set{\Team,
                      \myclass{\langle \downarrow \rangle \PlayerOne}}}
        \rightarrow \\
        \sconfig{\set{\PlayerTwo}}
                {\set{\Root}
                 \set{\Team,
                      \myclass{\langle \downarrow \rangle \PlayerOne},
                      \myclass{\langle \downarrow \rangle \PlayerTwo}}}
    \end{array}
\]
Now we are able to deduce that $\Success$ can also label the $\Team$ node.
\[
    \begin{array}{c}
        \sconfig{\set{\PlayerTwo}}
                {\set{\Root}
                 \set{\Team,
                      \myclass{\langle \downarrow \rangle \PlayerOne},
                      \myclass{\langle \downarrow \rangle \PlayerTwo}}}
        \rightarrow \\
        \sconfig{\set{\PlayerTwo}}
                {\set{\Root}
                 \set{\ldots,
                      \myclass{
                          \langle \downarrow \rangle \PlayerOne
                          \land
                          \langle \downarrow \rangle \PlayerTwo
                      }}}
        \rightarrow \\
        \sconfig{\emptyset}
                {\set{\Root}
                 \set{\ldots}
                 \set{\Success}}
    \end{array}
\]
We are then able to backtrack to the root node, accumulating the information that $\langle \downarrow^\ast \rangle \Success$ holds at the root node, and thus
$(\langle \downarrow^\ast \rangle \Success)$ is not redundant.
\[
    \begin{array}{c}
        \sconfig{\emptyset}
                {\set{\Root}
                 \set{\ldots}
                 \set{\Success}}
        \rightarrow \\
        \sconfig{\emptyset}
                {\set{\Root}
                 \set{\ldots}
                 \set{\ldots, \langle \downarrow^\ast \rangle \Success}}
        \rightarrow \\
        \sconfig{\set{\ldots, \langle \downarrow^\ast \rangle \Success}}
                {\set{\Root}
                 \set{\ldots}}
        \rightarrow \\
        \sconfig{\set{\ldots, \langle \downarrow^\ast \rangle \Success}}
                {\set{\Root}
                 \set{\ldots, \langle \downarrow^\ast \rangle \Success}}
        \rightarrow \\
        \sconfig{\set{\ldots, \langle \downarrow^\ast \rangle \Success}}
                {\set{\Root}}
        \rightarrow \\
        \sconfig{\set{\ldots}}
                {\set{\ldots,
                      \myclass{\langle \downarrow^\ast \rangle \Success}}}
    \end{array}
\]

\section{Experiments}
\label{sec:experiments}

We have implemented our approach in a new tool TreePed which is available
\anonornot{for download~\cite{treeped-anon}}{for download~\cite{treeped}}.  We
tested it on several case studies.  Our implementation contains two main
components: a proof-of-concept translation from HTML5 applications using jQuery
to our model, and a redundancy checker (with non-redundancy witness generation)
for our model.  Both tools were developed in Java.  The redundancy checker uses
jMoped~\cite{SBSE07} to analyse symbolic pushdown systems.  In the following
sections we discuss the redundancy checker, translation from jQuery, and the
results of our case studies.

\subsection{The Redundancy Checker}

The main component of our tool implements the redundancy checking algorithm for
proving Theorem~\ref{th:exp}.  Largely, the algorithm is implemented directly.
The most interesting differences are in the use of jMoped to perform the
analysis of sPDSs to answer class-adding checks.  In the following, assume a
tree $\tree_v$, rewrite rules $\trs$, and a set $\CLASS$ of classes.

\paragraph{Optimising the sPDS}
For each class $c \in \CLASS$ we construct an sPDS.  We can optimise by
restricting the set of rules in $\trs$ used to build the sPDS.  In
particular, we can safely ignore all rules in $\trs$ that cannot appear in a
sequence of rules leading to the addition of $c$.  To do this, we begin with
the set of all rules that either directly add the class $c$ or add a child to
the tree (since these may lead to new nodes matching other rules).  We then add
all rules that directly add a class $c'$ that appears in the guard of any rules
included so far.  This is iterated until a fixed point is reached.  The rules
in the fixed point are the rules used to build the sPDS.

\paragraph{Reducing the number of calls.}
We re-implemented the global backwards reachability analysis (and witness
generation) of Moped~\cite{schwoon-thesis} in jMoped.  This means that a single call
to jMoped can allow us to obtain a BDD representation of all initial
configurations of the sPDSs obtained from class-adding problems $\langle
\tree_v,f,c,\trs\rangle$ that have a positive answer to the class-adding
problem for a given class $c \in \CLASS$.  Thus, we only call jMoped once per
class.

\subsection{Translation from HTML5}

The second component of our tool provides a proof-of-concept prototypical
translation from HTML5 using jQuery to our model. We provide a detailed
description in the Appendix
\mhchanged{
    and provide a small example below.
}
There are three main parts to the translation.
\begin{itemize}
    \item
        The DOM tree of the HTML document is directly translated to a tree in
        our model.  We use classes to encode element types (e.g. \verb+div+ or
        \verb+a+), IDs and CSS classes.

    \item
        We support a subset of CSS covering the most common selectors and all
        selectors in our case studies.  Each selector in the
        CSS stylesheet is translated to a guard to be analysed for redundancy.
        For
        pseudo-selectors such as \texttt{g:hover} and \texttt{g:before} we
        simply check for the redundancy of \texttt{g}.

    \item
        Dynamic rules are extracted from the JavaScript in the document
        by identifying jQuery calls and generating rules as outlined in
        Section~\ref{sec:trs-jquery}. Developing a translation tool that covers
        all covers all aspects of such an extremely rich and complex language as
        JavaScript is a difficult problem~\cite{AM14}. Our proof-of-concept
        prototype covers many, but by no means all, interesting features of the
        language.  The implemented translation is described in more detail in
        the \shortlong{full version}{appendix}.
\end{itemize}
Sites formed of multiple pages with common CSS files are supported by
automatically collating the results of independent page analyses and reporting
site-wide redundancies.

\mh{the paragraph below is new but not in a change block because the verb
command doesn't like it}
    To give a flavour of the translation of a single line we recall one of the
    rules from the example in Figure~\ref{fig:sanwebe-example}, except we adjust
    it to a call \verb+addClass()+ instead of \verb+remove()+ (since calls to
    \verb+remove()+ are ignored by our abstraction).
\begin{minted}{javascript}
    $('.input_wrap').find('.delete')
                    .parent('div')
                    .addClass('deleted');
\end{minted}
    We can translate this rule inductively.  From the initial jQuery call
    \verb+$('.input_wrap')+ we obtain a guard that is simply
    $\text{\texttt{.input\_wrap}}$ that matches any node with the class
    \verb+input_wrap+.  We can then extend this guard to handle the
    \verb+find()+ call.  This looks for any child of the currently matched node
    that has the class \verb+delete+.  Thus, we build up the guard to
    \[
        \text{\texttt{.delete}} \wedge {\langle \uparrow^+ \rangle}
        \text{\texttt{.input\_wrap}}
    \]
    Next, because of the call to \verb+.parent()+ we have to extend the guard
    further to match the parent of the currently selected node, and enforce that
    the parent node is a \verb+div+.
    \[
        \text{\texttt{div}} \wedge \langle \downarrow \rangle(
        \text{\texttt{.delete}} \wedge {\langle \uparrow^+ \rangle}
        \text{\texttt{.input\_wrap}} )
    \]
    Finally, we encounter the call to \verb+addClass()+ which we translate to an
    $\addclass{\set{\text{\texttt{deleted}}}}$ rule.
    \acmeasychair{
        \begin{multline*}
            (\text{\texttt{div}} \wedge \langle \downarrow \rangle(
            \text{\texttt{.delete}} \wedge {\langle \uparrow^+ \rangle}
            \text{\texttt{.input\_wrap}} ), \\
            \addclass{\set{\text{\texttt{deleted}}}})
        \end{multline*}
    }{
        \[
            (\text{\texttt{div}} \wedge \langle \downarrow \rangle(
            \text{\texttt{.delete}} \wedge {\langle \uparrow^+ \rangle}
            \text{\texttt{.input\_wrap}} ),
            \addclass{\set{\text{\texttt{deleted}}}})
        \]
    }
\mh{end of new paragraph}

\subsection{Case Studies}

We performed several case studies.  One is based on the Igloo example
from the benchmark suite of the dynamic CSS analyser Cilla~\cite{MesbahM12},
and is described in detail below.  Another (and the largest) is based
on the Nivo Slider plugin~\cite{nivo-slider} for animating transitions between
a series of images.  The remaining examples are hand built and use jQuery to
make frequent additions to and removals from the DOM tree.  The first
\texttt{bikes.html} allows a user to select different frames, wheels and
groupsets to build a custom bike, \texttt{comments.html} displays a comments
section that is loaded dynamically via an AJAX call, and
\texttt{transactions.html} is a finance page where previous transactions are
loaded via AJAX and new transactions may be added and removed via a form.  The
example in Figure~\ref{fig:sanwebe-example} is \texttt{example.html} and
\texttt{example-up.html} is the version without the limit on the number of
input boxes.  These examples are available in the \texttt{src/examples/html}
directory of the tool distribution~\anonornot{\cite{treeped-anon}}{\cite{treeped}}.

All case studies contained non-trivial CSS selectors whose
redundancy depended on the dynamic behaviour of the system.  In each case our
tool constructed a rewrite system following the process outlined above, and
identified all redundant rules correctly. Below we provide the answers
provided by UnCSS \cite{UnCSS} and Cilla \cite{Cilla} when they are
available\footnote{We did not manage
    to successfully set up Cilla \cite{MesbahM12}
    and so only provide the
    output for the Igloo example that has been provided by the authors
in \cite{Cilla}.}.

The experiments were run on a Dell Latitude e6320 laptop with 4Gb of RAM and
four 2.7GHz Intel i7-2620M cores.  We used OpenJDK 7, using the argument
\hbox{``-Xmx''} to limit RAM usage to 2.5Gb.  The results are shown in
Table~\ref{tbl:case-studies}.  \textit{Ns} is the initial number of elements in
the DOM tree, \textit{Ss} is the number of CSS selectors (with the
number of redundant selectors shown in brackets), \textit{Ls} is the number of
Javascript lines reported by \texttt{cloc}, and \textit{Rs} is the number of
rules in the rewrite system obtained from the
JavaScript\footnote{Not including the number of rules required to
represent the CSS selectors.} after simplification (unsimplified rules may
have arbitrarily complex guards).  The figures for the Igloo example are
reported per file or for the full analysis as appropriate.

\mhchanged{
    We remark that the translation from JavaScript and jQuery is the main
    limitation of the tool in its application to industrial websites, since we
    do not support JavaScript and jQuery in its full generality.  However, we
    also note that the number of rules required to model websites is often
    much smaller than the size of the code.  For example, the Nivo-Slider
    example contains 501 lines of JavaScript, but its abstraction only
    requires 21 rules in our model.  It is the number of these rules and the
    number of CSS selectors (and the number of classes appearing in the guards)
    that will have the main effect on the scalability of the tool.

    Although we report the number of nodes in the webpages of our examples,
    checking CSS matching against these pre-existing nodes is no harder than
    standard CSS matching.  The dynamic addition of nodes to the webpage is
    where symbolic pushdown analysis is required, hence the number of rewrite
    rules gives a better indication of the difficulty of an analysis instance.
}

\begin{table}
    \acmeasychair{\footnotesize}{}
    \begin{center}
    \begin{tabular}{lccccc}
            Case Study                   & Ns  & Ss  & Ls & Rs & Time  \\

        \midrule

            \verb+bikes.html+            &  22 &     18 (0) &  97 &  37 & 3.6s \\
            \verb+comments.html+         &   5 &     13 (1) &  43 &  26 & 2.9s \\
            \verb+example.html+          &  11 & \phantom{0}1 (0) &  28 &   4 & .6s  \\
            \verb+example-up.html+       &   8 & \phantom{0}1 (1) &  15 &   3 & .6s  \\
            \verb+igloo/+                &     &   261 (89) &     &     & 3.4s \\
            \verb+  index.html+          & 145 &            &  24 &   1 &      \\
            \verb+  engineering.html+    & 236 &            &  24 &   1 &      \\
            \verb+Nivo-Slider/+          &     &            &     &     &      \\
            \verb+  demo.html+           &  15 &  172 (131) & 501 &  21 & 6.3s \\
            \verb+transactions.html+     &  19 &      9 (0) &  37 &   6 & 1.6s \\
    \end{tabular}
    \end{center}
    \caption{\label{tbl:case-studies}Case study results.}
\end{table}

\alchanged{
\paragraph{Example from Figure \ref{fig:sanwebe-example}}
TreePed suggests that the selector \texttt{.warn} might be reachable and,
therefore, is not deleted. We have run UnCSS on this example,
which incorrectly identifies \texttt{.warn} as unreachable.
}

\paragraph{The Igloo example}
\alchanged{
The Igloo example is a mock company website with a home page
(\texttt{index.html}) and an engineering services page
(\texttt{engineering.html}). There are a total of 261 CSS selectors, out of
which 89 of them are actually redundant (we have verified this by hand).
UnCSS reports 108 of them are redundant rules.

We mention one interesting selector which both Cilla and UnCSS have
identified as redundant, but is actually reachable. The Igloo main page
includes a
search bar that contains some placeholder text which is present only when the search bar is empty and does not have
focus.  Placeholder text is supported by most modern browsers, but not all
(e.g. IE9), and the
page contains a small amount of JavaScript to simulate this functionality when
it is not provided by the browser. The relevant code is shown in Figure
\ref{fig:igloo}.
}
\begin{figure}
\begin{minted}[fontsize=\footnotesize]{javascript}
(function($) {
    function supportsInputPlaceholder() {
      var el = document.createElement('input');
      return 'placeholder' in el;
    }
    $(function() {
      if (!supportsInputPlaceholder()) {
        var searchInput = $('#searchInput'),
          placeholder = searchInput.attr('placeholder');
        searchInput.val(placeholder).focus(function() {
           var $this = $(this);
           $this.addClass('touched');
             ...
        })
             ...
      });
    }
})(jQuery);
\end{minted}
\caption{JavaScript code snippet from Igloo example\label{fig:igloo}}
\end{figure}
In particular, the CSS class \texttt{touched},
and rule
\begin{minted}{css}
    #search .touched { color: #333; }
\end{minted}
are used for this purpose.  
Both Cilla and UnCSS incorrectly claim that the rule is redundant.
Since we only
identify genuinely redundant rules, our tool correctly does not report the rule
as redundant.

In addition, TreePed identified a further unexpected mistake in
Igloo's CSS.  The rule
\begin{minted}{css}
 h2 a:hover, h2 a:active, h2 a:focus
 h3 a:hover, h3 a:active, h2 a:focus { ... }
\end{minted}
is missing a comma from the end of the first line. This results in the redundant
selector ``\verb+h2 a:focus h3 a:hover+'' rather than two separate selectors as
was intended.  Cilla did not report this redundancy as it appears to ignore all
CSS rules with pseudo-selectors.  Finally, we remark that the second line of the
above rule contains a further error: ``\verb+h2 a:focus+'' should in fact be
``\verb+h3 a:focus+''. This raises the question of selector subsumption, on
which there is already a lot of research work (e.g. see
\cite{BGL14,MesbahM12}).  These algorithms may be incorporated into
our tool to obtain a further size reduction of CSS files.

\alchanged{
\paragraph{The Nivo-Slider example}
Nivo-Slider \cite{nivo-slider} is an easy-to-use image slider JavaScript
package that heavily employs jQuery. In particular, it provides some
beautiful transition effects when displaying a gallery of images. In this
case study, we use a demo file that displays a series of four images. The file
contains a total of 172 CSS selectors, out of which 131 are actually redundant
(we verified this by hand). UnCSS reports 152 redundant CSS selectors (i.e.
about 50\% of false positives).

\begin{samepage}
    We mention the following interesting rule that UnCSS reports as redundant:
    }
    \begin{minted}{css}
        .nivoSlider a.nivo-imageLink {
            ...
            z-index:6;
            ...
        }
    \end{minted}
\end{samepage}
\alchanged{
Recall that the \texttt{z-index} of an HTML element specifies the vertical
stack order; a greater stack order means that the element is \emph{in front}
of an element with a lower stack order. This CSS rule, among others, sets the
\texttt{z-index} of an HTML element that matches the selector to a higher
value. In effect, this allows a hyperlink that overlaps with the second image
in the series to be clicked (which takes the user to a different web page).
Removing this rule disables the hyperlink from the page.
The code snippet in Figure~\ref{fig:nivo-html-snippet} provides the relevant part of the HTML document.
}
\begin{figure}
    \begin{minted}[fontsize=\footnotesize]{html}
    ...
    <div id="slider" class="nivoSlider">
      ...
      <a href="http://dev7studios.com">
         <img src="images/up.jpg"
              data-thumb="images/up.jpg"
              alt=""
              title="This is an example of a caption"/>
      </a>
    ...
    <div>
    ...
    \end{minted}
    \caption{\label{fig:nivo-html-snippet}HTML code snippet for Nivo-Slider demo.}
\end{figure}

\alchanged{
In particular, the class \texttt{nivo-imageLink} will be added by the
JavaScript bit of the page to the the depicted hyperlink element above,
as is shown in the JavaScript code snippet in Figure \ref{fig:nivo}.
}
\begin{figure}
\begin{minted}[fontsize=\footnotesize]{javascript}
    var slider = $('#slider');
    slider.data('nivo:vars', vars)
        .addClass('nivoSlider');

    // Find our slider children
    var kids = slider.children();
    kids.each(function() {
       var child = $(this);
       var link = '';
       if(!child.is('img')){
         if(child.is('a')){
           child.addClass('nivo-imageLink');
           link = child;
         }
         ...
       }
       ...
    }
\end{minted}
\caption{JavaScript code snippet for Nivo-Slider demo.\label{fig:nivo}}
\end{figure}
\alchanged{
In contrast to UnCSS, TreePed correctly identifies that the above CSS rule may
be used.
}

\section{Conclusion and Future Work}
\label{sec:conclusion}

At the moment our translation from HTML5 applications to our
tree-rewriting model is prototypical and does not
incorporate many features that such a rich and complex language as JavaScript has,
especially in the presence of libraries.
Therefore,
an important research direction is to develop a more robust translation, perhaps
by building on top of existing JavaScript static analysers like
WALA~\cite{SSDT13,SDCST12} and TAJS~\cite{AM14,JMT09,JMM11}. As
Andreasen and M{\o}ller \cite{AM14} \mhchanged{describe}, a static analysis of JavaScript
in the presence of jQuery is presently a formidable task for existing static
analysers for JavaScript. For this reason, we do not expect the task of building
a more robust translation to our tree-rewriting model to be easy.

\alchanged{
Our technique
should not be seen as competing with dynamic analysis techniques for identifying
redundant CSS rules (e.g. UnCSS and Cilla). In fact, they can be
combined to obtain a more precise CSS redundancy checker. Our tool TreePed
attempts to output the definitely redundant rules, while Uncss/Cilla attempts to
output those that are definitely
non-redundant. The complement of the union of these sets is the ``don’t know
set'', which can (for example) be checked manually by the developer. For future
work, we would like to combine static and dynamic analysis to build a more
precise and robust CSS redundancy checker.
}

Another direction is to find better ways of overapproximating redundancy
problems for the undecidable class $\Gtrsclass$ of rewrite
systems using our monotonic abstractions (other than replacing
non-positive guards by $\top$). In particular, for a general guard,
can we automatically construct a more precise positive guard that
serves as an overapproximation? A more precise abstraction would be useful in
identifying redundant CSS rules with negations in the node selectors, which
can be found in real-world HTML5 applications (though not as common as
those rules without negations).

\mh{Added new paragraph below, but since it uses verbatim environments it can't
be in a changed block}

We may also consider other ways of improving CSS performance using the results
of our analysis.  For example, a descendant selector
\begin{minted}{css}
    .a .b { ... }
\end{minted}
is less efficient than a direct child selector
\begin{minted}{css}
    .a > .b { ... }
\end{minted}
since the descendant selector must search through all descendants of a given
node.  However, it is very common for the former to be used in place of the
latter.  If we can identify that the guard
$\texttt{a} \land \langle \downarrow \rangle \langle \downarrow^+ \rangle \texttt{b}$
is never matched, while
$\texttt{a} \land \langle \downarrow \rangle \texttt{b}$
is matched, then we can safely replace the descendant selector with the direct
child selector.

\acmeasychair{\acks}{\paragraph{Acknowledgments}}
We sincerely thank Max Schaefer for fruitful discussions.
Hague is supported by
the Engineering and Physical Sciences Research Council [EP/K009907/1].  Lin is
supported by a Yale-NUS Startup Grant. Ong is partially supported by a visiting
professorship, National University of Singapore.


\bibliographystyle{plain}
\bibliography{references}

\appendix
    \section{Proof of Proposition \ref{prop:trs-undec}}

We prove the undecidability of the $1$-redundancy problem when negation is allowed in the guards.
The proof is by a reduction from the undecidable control-state reachability
problem for
deterministic two counter machines (2-CM): given a 2-CM $\TM =
(\controls,\pdsrules,q_0, q_F)$ with counters $X$ and $Y$, decide whether there
exist a path from
$(q_0,0,0)$ to some configuration in $\{q_F\} \times \N \times \N$.  We will use $X$ and $Y$ to denote the two counters, and $Z$ to range over $\{X, Y\}$.  The two counter tests will be $=$ and $>$ and we use $\varominus$ to range over these tests.
We may assume that
 \begin{enumerate}[(i)]
 \item counter tests and counter increments / decrements never take place in the same transition, i.e., rules are of form $(q, [Z \varominus 0]) \to q'$ or $q \to (q',\inc(Z))$ or $q \to  (q', \dec(Z))$, and
 \item there are no self-loops in the counter machine i.e.~there is no transition of the form
     $(q,\cdot) \to q$.\mh{What does $\cdot$ mean?}
 \end{enumerate}

The initial tree $\tree_0 = (\treedomain_0,\treelabeling_0)$ is defined
to be $\treedomain_0 = \{\epsilon\}$ and $\treelabeling_0(\epsilon) = q_0$. We now
construct a rewrite system $\trs \in \Gtrsclass$.
The state of the machine is always recorded in the root node. The values of the
two counters $X$ and $Y$ will be encoded by the number of nodes at level 1 (i.e.
children of the root node) that are associated with class $X$ and $Y$,
respectively. More precisely, for each
transition rule $\sigma$ of the machine, we introduce a class $\sigma$. In
addition, we use the following four classes: $X$, $Y$, $\mathit{new}$, and
$\mathit{del}$. The main problem that we have to overcome in our reduction is
to simulate one transition of $\TM$ by several simpler steps of $\tree_0$
owing to the simplicity of rewrite operations that we allow in $\Gtrsclass$.
This is in fact the reason for adding the last two classes $\mathit{new}$ and
$\mathit{del}$, which are just ``flags'' to indicate whether a node (at level 1) is
``new'' or is ``scheduled to be deleted''.
Our rewrite system $\trs$ operates over $\treelabels := 2^\CLASS$
with $\CLASS = \controls \cup \pdsrules \cup \{X,Y,\mathit{new},\mathit{del}\}$.
Next we introduce the guards:
(a) $g(Z=0) := \neg \dir{\downarrow} Z$,
(b) $g(Z>0) := \dir{\downarrow} Z$,
(c) $\theta_1 := \neg \langle \downarrow \rangle \mathit{new} \wedge
    \neg \langle \downarrow \rangle \mathit{del}$, and
(d) $\theta_2 := \bigwedge_{\gamma\in\pdsrules} \neg \gamma$.
As we shall see, the guard $\theta_1 \wedge \theta_2$ means that there is no
``pending job''.
The rewrite rules for $\trs$ are as follows:
\begin{itemize}\small
\item If $\sigma = (q, [Z \varominus 0]) \to  q'$, then add the following rules:
\begin{enumerate}[(1)]
\item $(q \wedge g(Z \varominus 0) \wedge \theta_1 \wedge \theta_2,
    \addclass{\{ \sigma, q' \}})$
\item $(\sigma,\removeclass{\{ q, \sigma \}})$
\end{enumerate}

\item If $\sigma = q \to  (q', \inc(Z))$, then add:
\begin{enumerate}[(1)]
\item $(q \wedge \theta_1 \wedge \theta_2,\addclass{\{\sigma, q'\}}$
\item $(\sigma \wedge \theta_1,
    \addchild{\{ Z, \mathit{new} \}})$
\item $(\sigma \wedge \dir{\downarrow} (Z \wedge \mathit{new}),
    \removeclass{\{\sigma,q\}})$.
\item $(\mathit{new} \wedge \dir{\uparrow} \theta_2, \removeclass{ \{ \mathit{new} \} })$
\end{enumerate}

\item If $\sigma = q \to  (q', \dec(Z))$, then add:
\begin{enumerate}[(1)]
\item $(q \wedge \theta_1 \wedge \theta_2, \addclass{\{ \sigma, q' \}})$
\item $(Z \wedge \dir{\uparrow} (\sigma \wedge \neg \dir{\downarrow}
    \mathit{del}), \addclass{\{ \mathit{del} \}})$
\item $(\sigma \wedge \dir{\downarrow} (Z \wedge \mathit{del}),
    \removeclass{\sigma,q})$
\item $(\mathit{del} \wedge \dir{\uparrow} \theta_2, \removeclass{\{Z,
    \mathit{del} \} })$
\end{enumerate}
\end{itemize}
Observe that the
trees reachable from the initial tree is of height at most 1.
Possible labels of the root are only $q \in \controls$ and $\sigma
\in\pdsrules$. Possible labels of a leaf node (necessarily a child of the
root) are $X, Y, \mathit{new}$ and $\mathit{del}$.

It is not difficult to see that $\trs$ when initialised in $\tree_0$ performs
a faithful simulation of the counter machine $\TM$ from the configuration
$(q,0,0)$.
Our rewrite system $\trs$ is defined in such a way that each group of rewrite rules
introduced in $\trs$ to simulate a transition $\sigma$ of $\CM$ has to
be executed before the next transition of $\CM$ can be simulated by $\trs$.
In particular, to simulate a $\CM$ transition
$\sigma$ from a given state $q \in \controls$, our rewrite system $\trs$ needs to
add $\sigma$ to the root node, which can only be done
if the following properties are satisfied:
(1) the label of the root node is $\{q\}$ for some $q \in
\controls$, and (2) no child of the root has $\mathit{del}$ or $\mathit{new}$ in
the label.
This is owing to the conjuncts $\theta_1\wedge\theta_2$ in the guard.
In particular, this means that the root contains \emph{at most} one $\sigma$.
In addition, if $\sigma$ is a class of the form $q \to
(q',\inc(Z))$ or $q \to (q',\dec(Z))$, then $\sigma$ can only be removed
only after proper simulation of counter increments/decrements has been
performed.

In summary, we have that $\TM$ reaches the control state $q_F$ iff $q_F$ is
not redundant in $\trs$. This concludes our reduction.

    \section{Proof of Lemma \ref{lm:trleq}}

We prove that if $\tree \trleq \tree'$, then $\tree'$ satisfies at least the same guards as $\tree$ at each node $v$.
The proof is by induction on $g$. For the base cases, $g = \top$ is trivial, and that $v,\tree \models X$
implies $f(v),\tree' \models X$ holds since, by \textbf{(H2)}, we have $X \subseteq
\treelabeling(v) \subseteq \treelabeling'(f(v))$. We now proceed to the
inductive case. The cases of conjunction and disjunction are obvious.

Assume that $v,\tree \models \langle d \rangle g$ for some $d \in \{\uparrow,
\downarrow\}$. We will only show the case when
$d = \downarrow$; the other case can be
handled in the same way. This means that
$v.i,\tree \models g$ for some $i \in \N$. By induction, we have
$f(v.i),\tree' \models g$.
By \textbf{(H3)}, we have that $f(v.i) = f(v).j$ for some $j \in \N$.
Altogether, this implies that $f(v),\tree' \models \langle \downarrow\rangle
g$ as desired.

Assume that $v,\tree \models \langle d^\ast \rangle g$ for some $d \in \{\uparrow^\ast,
\downarrow^\ast\}$. We will only show the case when $d = \downarrow^\ast$; the other case
can be handled in the same way. The proof is by induction on the length $n$ of path
from $v$ to a descendant $vw$ (with $w \neq \epsilon$) satisfying $vw,\tree
\models g$. If $n = 0$, then by the previous paragraph we have
$f(v),\tree' \models g$ and so
$f(v),\tree' \models \langle \downarrow^\ast \rangle g$ as desired. If $n > 0$
and $w = iu'$ for some $i \in \N$ and $u' \in \N^*$, then $vi,\tree \models
\langle \downarrow^\ast \rangle g$ with a witnessing path of length $n-1$. Therefore,
by induction, we have $f(vi),\tree' \models \langle \downarrow^\ast \rangle g$.
Since $f(vi) = f(v).j$ for some $j \in \N$, it follows that $f(v),\tree'
\models \langle \downarrow \rangle\langle \downarrow^\ast \rangle g$ and so
$f(v),\tree' \models \langle \downarrow^\ast \rangle g$ as desired.

\section{Proof of Lemma \ref{lem:monotonic}}
\mh{I updated this section to reflect the new notation in lem~\ref{lem:monotonic}}
We prove that, whenever $\tree_1 \trleq \tree_2$ and $\tree_1 \to_\sigma \tree'_1$, then either $\tree'_1 \trleq \tree_2$ or $\tree_2 \to_\sigma \tree'_2$ with $\tree'_1 \trleq \tree'_2$.

Suppose $T_1 \to_\sigma T_2$ where $T_1 = (D_1, \treelabeling_1)$, $T'_1 = (D'_1,
\treelabeling'_1)$ and $\sigma = (g, \chi)$; and $T_1 \preceq T_2 = (D_2,
\treelabeling_2)$ is witnessed by an embedding $f : D_1 \to D_2$. Assume $v, T_1
\vDash g$ for some $v \in D_1$. Then $f(v), T_2 \vDash g$ by
Lemma \ref{lm:trleq}.
We consider each of the four cases of $\chi$ in turn.
\begin{asparaitem}
\item[\emph{Case:} $\chi = \addclass{X}$.] Then set $T_2' = (D_2', \treelabeling_2')$ with $D_2' := D_2$ and $\treelabeling_2' := \treelabeling_2[f(v) \mapsto \treelabeling_2(f(v)) \cup X)]$. 
Hence $T_2 \to_\sigma T_2'$. Moreover we have $T_1' \preceq T_2'$, as witnessed by $f : D'_1 \to D_2'$ (note that $D'_1 = D_1$ and $D_2' = D_2$).

\item[\emph{Case}: $\chi = \addchild{X}$.] Assume $D_1' = D_1 \cup \{v.i\}$ and $\treelabeling'_1 = \treelabeling_1[v.i \mapsto X]$ where $i$ is the number of children of $v$ in $T_1$. Let $i'$ be the number of children of $f(v)$ in $T_2$. Then set $T_2' = (D_2', \treelabeling_2')$ with $D_2' := D_2 \cup \{f(v).i'\}$ and $\treelabeling_2' := \treelabeling_2[f(v).i' \mapsto X]$. Then $T_2 \to_\sigma T_2'$ and $T_1' \preceq T_2'$ is witnessed by $f' : D'_1 \to D_2'$ such that 
$f' := f[v.i \mapsto f(v).i']$.

\item[\emph{Case}: $\chi = \removeclass{X}$ or $\chi = \removenode$.]
    In both cases, the function $f$ (restricted to $\treedomain'_1$) is still an
    embedding from
    $\tree'_1$ to $\tree_2$ and so $\tree'_1 \trleq \tree_2$.
\end{asparaitem}

\section{Proof of Lemma \ref{lem:removenode}}
We show that, when guards are monotonic, the $\removeclass{B}$ and $\removenode$ operations do not affect the redundancy of the guards.
The $\Leftarrow$-direction is trivial. For the other direction,
 assume $T_0 \to_{\sigma_0} T_1
\to_{\sigma_1} \cdots \to_{\sigma_{n-1}} T_n \to_{\sigma_n} T_{n+1} = (D,
\treelabeling)$ where each $\sigma_i = (g_i, \chi_i)$, and $v, T_{n+1} \vDash g$
for some $v \in D$. It suffices to construct trees $T_0', \cdots, T_{n+1}'$ with
$T_0' = T_0$ such that $g$ matches $T_{n+1}'$, and for each $i \in [1..n]$,
$T_i \preceq T_i'$ and $T_i' \to_{\sigma_i'} T_{i+1}'$ where each $\sigma_{i}'$
is \emph{not} a $\removenode$ or a $\removeclass{X}$ operation but may be the
identity operation. We shall construct such a sequence of trees by induction on $i$. The base case follows immediately from the assumptions. Let $i \geq 0$. Suppose $T_i'$ has been constructed such that $T_i \preceq T_i'$. There are two cases. If $\chi_i = \removenode$ or $\chi_i = \removeclass{X}$ then set
$T_{i+1}' := T_i'$ and $\sigma_i'$ is the identity operation; otherwise, thanks to Lemma~\ref{lem:monotonic}, set $\sigma_i' := \sigma_i$ and we can construct $T_{i+1}'$ such that $T_i' \to_{\sigma_i'} T_{i+1}'$ and $T_{i+1} \preceq T_{i+1}'$. Finally, since $T_{n+1} \preceq T_{n+1}'$ and $g$ matches $T_{n+1}$, we have $g$ also matches $T_{n+1}'$ by Lemma~\ref{lm:trleq} as required.

    \section{Proof of Proposition \ref{prop:spds-complexity}}
To show $\EXP$ membership for the bit-toggling problem, we simply
compute an exponential-sized PDS from a given sPDS and apply
the standard polynomial-time algorithm for solving control-state
reachability for PDS (this is the same as emptiness for pushdown automata
over a unary input alphabet $\{a\}$).

To show $\PSPACE$ membership for the bounded bit-toggling problem,
suppose that the input to the problem is $\pds = (\varsetV,\varsetW,\pdsrules)$
and $h \in \N$ with $\varsetV = \{x_1,\ldots,x_n\}$ and
$\varsetW = \{y_1,\ldots,y_m\}$. Each configuration is a pair $(q,w)$ of $q
\in \{0,1\}^n$ and $w$ is a word over the stack alphabet
$\salphabet = \{0,1\}^m$. Since the stack height is always bounded by $h$,
each configuration is of size at most $n + (h \times m)$. Also,
checking whether $\sigma = ((q,a),(q',w))$ is a transition rule of $\pds$ ---
where
$q,q' \in \{0,1\}^n$,
$a \in \{0,1\}^m$, and $w$ is a word of length at most 2 over $\salphabet$
--- can be checked in linear time (and therefore polynomial space) by simply
evaluating
each symbolic rule of the form $(|w|,\varphi)$ with respect to the boolean
assignment $\sigma$. [Evaluating BDDs, boolean formulas, and boolean circuits
can all be done in linear time.] So, we can obtain a nondeterministic
polynomial space algorithm (and therefore membership in $\PSPACE$) by
guessing the symbolic rule that needs to be applied at any given moment.

\section{Proof of Correctness of Lemma \ref{lm:simple}}
We show correctness of the reduction to simple rewrite systems.
To show \textbf{(P1)} and \textbf{(P2)}, it suffices to show the following
four statements:
\begin{description}
\item[(P1a)] For each $k \in \N$, $S$ is $k$-redundant for $\trs$ iff $S'$ is
$k$-redundant for $\trs_1$.
\item[(P1b)] For each $k \in \N$, $S$ is $k$-redundant for $\trs_1$ iff $S'$ is
$k$-redundant for $\trs'$.
\item[(P2a)] $S$ is redundant for $\trs$ iff $S'$ is redundant for $\trs_1$.
\item[(P2a)] $S$ is redundant for $\trs$ iff $S_1$ is redundant for $\trs'$.
\end{description}

We first show \textbf{(P1a)} and \textbf{(P2a)}. To this end,
we say that a tree $\tree = (\treedomain,\treelabeling) \in
\trees(\treelabels')$ is \defn{consistent} if, for each $g \in G$
and each node $v \in \treedomain$ with $\underline{g} \in \treelabeling(v)$,
it is the case that $v,\tree \models g$. We say that $\tree$ is
\defn{maximally consistent} if it is consistent and that, for each
$g \in G$ and each node $v \in \treedomain$, if $v,\tree
\models g$ then $\underline{g} \in \treelabeling(v)$.
Observe that
each tree in $\trees(\treelabels)$ (in particular, the initial tree) is
consistent and that
each of the
\mh{removed ``above''} rules (1--3) preserves consistency. Therefore, for all
trees $\tree \in \trees(\treelabels)$ and $\tree' \in \trees(\treelabels')$,
if $\tree \to_{\trs_1}^* \tree'$ then $\tree'$ is consistent.
In addition,
for a tree $\tree \in \trees(\treelabels')$, we write $\tree \cap \treelabels$ to
denote the the tree that is obtained from $\tree$ by restricting to node labels
from $\treelabels$.
Let $\Theta \subseteq \trs_1$ denote the set of intermediate rules added above.
Since the class $G$ is closed under taking subformulas, for each tree $\tree
\in \trees(\treelabels)$,
it is the case that $\tree \to_{\Theta}^* \tree'$ for some maximally consistent
tree $\tree' \in \trees(\treelabels')$ satisfying $\tree' \cap \treelabels =
\tree$.
Now, by a simple induction on the length of paths, it follows that
that $\tree_1 = (\treedomain_1,\treelabeling_1) \to_\trs^* \tree_2 =
(\treedomain_2,\treelabeling_2)$ iff
$\tree_1 \to_{\trs_1}^* \tree_2'$ for some maximally consistent tree $\tree_2'$
such that $\tree_2' \cap \treelabels = \tree_2$.
Since $S'$ contains precisely all $\underline{g}$ for which $g \in S$,
\textbf{(P1a)} and \textbf{(P2a)} immediately follow.

We now show \textbf{(P1b)} and \textbf{(P2b)}.
Observe that both rules (a) and (b) preserve tree-consistency since
$v,\tree \models \langle d^\ast \rangle g$ iff
at least one of the following
cases holds: (i) $v,\tree \models g$, (ii)
there exists a node $w$ in $\tree$ such that $w,\tree \models
\langle d^\ast \rangle g$ and $w$ can be reached from $v$ by following the
direction $d$ for one step.
As before, by a simple induction on the length of paths, for all consistent trees
$\tree_1,\tree_2 \in
\trees(\treelabels')$, it is the case that $\tree_1 \to_{\trs'}^* \tree_2$ iff
$\tree_1 \to_{\trs_1}^* \tree_2$.
%
This implies implies \textbf{(P1b)} and \textbf{(P2b)}, i.e., the correctness of
our entire construction.

Note, in particular, for all $g \in S$, it is the case that $g$ is
redundant in $\trs$ iff $\underline{g}$ is redundant in $\trs'$.

\section{Proof of Algorithm for Lemma~\ref{lm:reduce-class-adding}}

We argue the correctness of the fixpoint algorithm in the proof of
Lemma~\ref{lm:reduce-class-adding}, which shows that the redundancy (resp.
$k$-redundancy) problem is polynomial time solvable assuming oracle calls to the
class-adding (resp. $k$-class-adding) problem.

\paragraph{Soundness}
The proof is by induction over the number of applications of the fixpoint rules.
Fix some sequence of rule applications reaching a fixpoint and let $\tree_i =
(\treedomain_0, \treelabeling_i)$ be the tree obtained after $i$th application.
We prove by induction that we have a tree $\tree$ such that $\tree_0 \to_\trs^*
\tree = (\treedomain', \treelabeling')$ and $\tree_i \trleq \tree$ and thus our
algorithm is sound.

In the base case, when $i = 0$, $\tree = \tree_0$ and the proof is trivial.
Hence, assume by induction we have a tree $\tree$ such that $\tree_0
\to_{\trs,f}^* \tree$ and $\tree_i \trleq \tree$.  There are two cases.
\begin{itemize}
    \item
        When we apply a rule $\sigma = (g,\addclass{B})$ at a node $v \in
        \treedomain_0$, then $\tree_i \to_\sigma \tree_{i+1}$ and by
        Lemma~\ref{lem:monotonic} (Monotonicity) we obtain $\tree'$ satisfying
        the induction hypothesis.

    \item
        When the class-adding problem has a positive answer on input $\langle
        \tree_v,f,c,\trs\rangle$ and we update $\treelabeling(v) :=
        \treelabeling(v) \cup \{c\}$, then there is a sequence of rule
        applications $\sigma_1,\ldots,\sigma_n$ witnessing the positive solution
        to the class-adding problems.  By $n$ applications of
        Lemma~\ref{lem:monotonic} we easily obtain $\tree'$ satisfying the
        induction hypothesis as required.
\end{itemize}

\paragraph{Completeness}
Take any sequence of rules $\sigma_1, \ldots, \sigma_n$ such that
\[
    \tree_0 \to_{\sigma_1} \tree_1 \to_{\sigma_2} \cdots \to_{\sigma_n} \tree_n
\]
where for each $i$, $\tree_i = (\treedomain_i, \treelabeling_i)$.  Note
$\treedomain_0 \subseteq \treedomain_i$ for each $i$.  Let $\tree_F =
(\treedomain_0, \treelabeling)$ be the tree obtained as the conclusion of the
fixpoint calculation.  We show $\treelabeling_i(v) \subseteq \treelabeling(v)$
for all $v \in \treedomain_0$ and hence our algorithm is complete.

We induct over $i$.  In the base case $i = 0$ and the result is trivial.  For
the induction, if $\sigma_i$ was applied to $v \notin \treedomain_0$ or
$\sigma_i$ is an $\addchild{B}$ rule, the result is also trivial.  Else
$\sigma_i$ is of the form $\sigma = (g, \addclass{B})$ and applied to $v \in
\treedomain_0$.  There are several cases.
\begin{itemize}
    \item
        When $g = A$ let $v' = v$ and when $g = \langle\uparrow\rangle A$ let
        $v'$ be the parent of $v$.  Since $v \in \treedomain_0$ we also have $v'
        \in \treedomain_0$ and thus $\treelabeling_i(v') \subseteq
        \treelabeling(v)$ and thus $\sigma$ is applicable at $v'$ and the first
        fixpoint rule applies.  Since $\tree_F$ is a fixpoint, we necessarily
        have $B \subseteq \treelabeling(v)$.

    \item
        When $g = \langle\downarrow\rangle A$ there are two cases.
        \begin{itemize}
            \item
                If the guard was satisfied by matching with $v' \in
                \treedomain_0$, then we have $B \subseteq \treelabeling(v)$
                exactly as above.

            \item
                Otherwise, the guard was satisfied in the run to $\tree'$ by
                matching with $v' \notin \treedomain_0$.  Thus we can pick out
                the sub-sequence of $\sigma_1, \ldots, \sigma_n$ beginning with
                the $\addchild{C}$ rule (applied at $v$) that created $v'$ and
                all rules applied at $v'$ or a descendant.  By
                Lemma~\ref{lem:monotonic} (Monotonicity) and the fact that all
                guards are simple, this sequence is a witness to the positive
                solution of the class-adding problem $\langle \tree_v, f, c,
                \trs \rangle$ for all $c \in B$ and thus $B \subseteq
                \treelabeling(v)$ as required.
\end{itemize}
\end{itemize}

\section{BDD representation in the Proof of Lemma \ref{lm:reduce}}
We first recall the definition of BDDs. A \defn{binary decision diagram (BDD)} is
a symbolic representation of boolean functions. A \defn{BDD over the variables
$\varsetV = \{x_1,\ldots,x_n\}$} is a rooted directed acyclic graph
$G = (V,E)$ together with a mapping $\treelabeling: (V \cup E) \to
(\varsetV \cup \{0,1\})$ such that:
\begin{enumerate}[(i)]
    \item
        if $v \in V$ is an internal node, then $\treelabeling(v) \in \varsetV$
        and it has \emph{precisely two} outgoing edges $e_1, e_2$ labeled by
        $\treelabeling(e_1) = 0$ and $\treelabeling(e_2) = 1$, respectively,
        and
    \item(ii)
        if $v \in V$ is a leaf node, then it is labeled by $\treelabeling(v)
        \in \{0,1\}$.
\end{enumerate}
Given an assignment $\nu: \varsetV \to \{0,1\}$, we can determine the truth
value of $G$ under $\nu$ as follows: start at the root node of $G$; if
the current node is $v$ labeled by $x \in \varsetV$, follow the outgoing
$\nu(x)$-labeled from $v$ to determine the next node; and if
the current node is a leaf, then the truth value is given by the label of
this leaf.

An \defn{ordered BDD (oBDD)} is a BDD together with an ordering of
$\varsetV$ (say $x_1 < \cdots < x_n$) such that if $v$ is a predecessor
of $u$ in $G$ where both $v$ and $u$ are internal nodes, then
$\treelabeling(v) < \treelabeling(u)$. So, no variable $x\in\varsetV$
occurs more than once in each path from the root to a leaf in an oBDD.
Finally, an oBDD is reduced if (i) it has no two isomorphic subgraphs,
and (ii) it has no internal node whose two children are the same node.
In the
sequel, we use the term BDD to mean reduced oBDD.

We now demonstrate how to represent boolean formulas in the reduction as BDDs.
Observe that each boolean formula in our $\pds$-rules is a conjunction of
formulas, in which \emph{no two conjuncts share any common variable}. This
allows an easy conversion to BDD representation. For example, the
BDD $G = (V,E,\treelabeling)$ for the formula $\varphi$ in the first type of
rules in our sPDS construction in Section \ref{sec:upper} can be defined as
follows. Each conjunct of the form $s \lra t$ is turned
into the following BDD
\begin{center}
    \epsfig{file=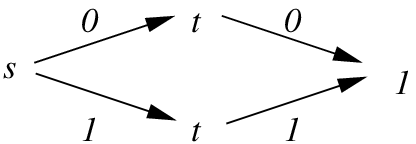,width=3.5cm}
\end{center}
where a missing edge means that it goes to a 0-labeled leaf (also
omitted from diagram). Similarly,
a conjunct of the form $s$ is turned into the following BDD
\begin{center}
    \epsfig{file=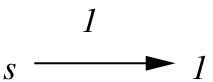,width=2cm}
\end{center}
where, again, a missing edge means that it goes to a 0-labeled leaf. Now,
if $G_1$ and $G_2$ are BDDs for the conjuncts $\varphi_1$ and $\varphi_2$
with different sets of variables, then the BDD for $\varphi_1 \wedge
\varphi_2$ can be constructed
by gluing the 1-labeled leaf in $G_1$ by the root of $G_2$ (using the label
of the root of $G_2$). Since no two conjuncts in $\varphi$ share a common
variable, the BDD for $\varphi$ can be obtained by linking these simpler BDDs
in this way. The resulting BDD is of size linear in the size of $\varphi$.

\section{Proof of Lemma \ref{lm:spds-correctness}}
We first prove the direction
(A2)$\Rightarrow$(A1). We will prove this by induction on the length of
paths. The problem is that our induction hypothesis is not strong enough
to get us off the ground (since the state component of
$I_{\treelabeling_0(\epsilon),f}$ is always $(0,\ldots,0)$). Therefore,
we will first prove this by induction with a stronger induction
hypothesis:
\begin{lemma}
Given $\pds$-configurations $(q,\alpha)$ and $(q',\alpha')$ with
$q,q': \varsetV \to \{0,1\}$ and $\alpha,\alpha': \varsetW \to
\{0,1\}$ (i.e. stack of height 1), suppose that $(q,\alpha) \to_\pds^*
(q',\alpha')$.
Define $\tree =
(\treedomain,\treelabeling) \in \trees(\treelabels)$ as
(1) $D = \{\epsilon\} \cup \{ 0 : q(\Pop) = 1 \}$,
(2) $\treelabeling(\epsilon) =
\{ c \in \CLASS : \alpha(y_c) = 1 \}$ and
$\treelabeling(0) = \{ c \in \CLASS : q(x_c) = 1 \}$.
Define the
assumption function $f_\alpha: (\CLASS\cup\{\Root\}) \to \{0,1\}$ with
$f_\alpha(c) = \alpha(z_c)$ for $c \in \CLASS$ and $f_\alpha(\Root) = \alpha(\Root)$.
Then, there exists $\tree' = (\treedomain',\treelabeling')\in\trees(
\treelabels)$ such that
$\treelabeling'(\epsilon) =
\{ c \in \CLASS : \alpha'(y_c) = 1\}$
and $\tree \to_{\trs,f_\alpha}^* \tree'$.
\label{lm:sound}
\end{lemma}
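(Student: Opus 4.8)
The plan is to prove the strengthened claim by induction on the length $n$ of the witnessing run $(q,\alpha) \to_\pds^* (q',\alpha')$, exploiting that this run both begins and ends at stack height $1$. First I would record two structural facts. Since no $\pds$-rule fires on an empty stack and a pop from height $1$ would strand the run at height $0$ (from which $(q',\alpha')$, of height $1$, is unreachable), the run never visits height $0$; hence its first transition is either a height-preserving $(1,\varphi)$-rule or a pushing $(2,\varphi)$-rule. Moreover, while the stack has height $\geq 2$ the bottom symbol is never the top and so is untouched, remaining equal to $\alpha$ until the stack first returns to height $1$. The base case $n=0$ is immediate by taking $\tree' = \tree$.

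For the inductive step I split on the first transition. If it comes from $(g,\addclass{B})\in\trs$, I check that the corresponding rewrite rule is enabled at the root $\epsilon$ of $\tree$ under $\models_{f_\alpha}$: for $g=A$ the assertion $\bigwedge_{a\in A} y_a$ gives $A \subseteq \treelabeling(\epsilon)$; for $g = \langle \downarrow \rangle A$ the conjuncts $\Pop$ and $\bigwedge_{a\in A} x_a$ force the child $0$ (present exactly when $q(\Pop)=1$) to carry $A$; and for $g = \langle \uparrow \rangle A$ the conjuncts $\bigwedge_{a\in A} z_a$ and the $\Root$ flag match clause (iii) of $\models_{f_\alpha}$. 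The resulting tree $\tree_1$ is \emph{exactly} the tree associated with the successor configuration $(q_1,\alpha_1)$ — the root label grows by $B$ while the child, the $\Pop$-flag, and the $z$-component are preserved, so $f_{\alpha_1}=f_\alpha$ — and the induction hypothesis on the shorter run $(q_1,\alpha_1)\to_\pds^*(q',\alpha')$ completes this case, with the exact root label preserved.

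The substantial case is a leading push from some $(A,\addchild{B})\in\trs$, handled by isolating the excursion above height $1$. Writing the run as $(q,\alpha)\to_\pds(q_1,\alpha\beta_1)\to_\pds^*(q_3,\alpha\gamma)\to_\pds(q_2,\alpha)\to_\pds^*(q',\alpha')$, where the indicated pop is the first return to height $1$, I project the height-$\geq 2$ portion to a genuine height-$1$ run $(q_1,\beta_1)\to_\pds^*(q_3,\gamma)$ (legitimate since the bottom copy of $\alpha$ is never the top there). This sub-run is strictly shorter, so the induction hypothesis yields a rewriting run from the single node labelled $B$ — whose assumption function $f_{\beta_1}$ records precisely the root labels of $\tree$ as parent context and sets $\Root$ to $0$, matching clauses (d),(e) of the push formula — to a tree with root label $\{c : \gamma(y_c)=1\}$. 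In $\tree$ I apply $(A,\addchild{B})$ at the root (enabled since the push asserts $\bigwedge_{a\in A} y_a$) to spawn a fresh child $j$, and replay that sub-run on the subtree at $j$; since those rewrites touch only $j$ and its descendants and the real parent context at $j$ agrees with $f_{\beta_1}$, this is valid and leaves the root label unchanged, yielding $\hat\tree$ in which $j$ carries $\{c:\gamma(y_c)=1\}=\{c:q_2(x_c)=1\}$.

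It remains to splice in the continuation $(q_2,\alpha)\to_\pds^*(q',\alpha')$, and here I expect the only genuine difficulty: securing the root label \emph{with equality}, not merely as a superset. The tree associated with $(q_2,\alpha)$ — same root label as $\tree$, one child labelled $\{c:q_2(x_c)=1\}$, assumption function $f_\alpha$ (using $\alpha_1=\alpha$ from clause (g)) — embeds into $\hat\tree$ via $\epsilon\mapsto\epsilon,\ 0\mapsto j$. The induction hypothesis on this shorter continuation gives a rewriting run from that tree whose final root label is $\{c:\alpha'(y_c)=1\}$. Invoking Lemma~\ref{lem:monotonic} directly would only propagate $\trleq$ and thus give a superset at the root; instead I would replay the \emph{same} sequence of rule applications on $\hat\tree$, using positivity of simple guards (Lemma~\ref{lm:trleq}) to see that each step stays enabled under the embedding. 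Because exactly the original applications are performed, precisely the same classes are added to the root, giving equality. Chaining $\tree \to_{\trs,f_\alpha}^* \hat\tree \to_{\trs,f_\alpha}^* \tree'$ then produces the required $\tree'$ with $\treelabeling'(\epsilon)=\{c:\alpha'(y_c)=1\}$, and instantiating the lemma at $(q,\alpha)=I_{\treelabeling_0(\epsilon),f}$ and $(q',\alpha')=I_{X,f,q'}$ yields the direction (A2)$\Rightarrow$(A1).
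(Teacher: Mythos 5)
Your proof is correct and follows essentially the same route as the paper's: induction on the length of the run, a case split on whether the first transition is a height-preserving rule generated by an $\addclass{B}$-rule or a push generated by an $\addchild{B}$-rule, projection of the excursion above the untouched bottom stack symbol, and two applications of the induction hypothesis (one to the projected excursion, replayed on the freshly spawned child; one to the continuation from the post-pop configuration) spliced together at the end. Your explicit care about obtaining root-label \emph{equality} rather than mere inclusion — by replaying the identical sequence of rule applications under the embedding, with positivity (Lemma~\ref{lm:trleq}) guaranteeing each guard stays enabled — is exactly what the paper's splicing step does implicitly when it "applies the path $\mu$ on the resulting tree".
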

Observe that the direction (A2)$\Rightarrow$(A1) is immediate from this
Lemma: simply apply it with $(q,\alpha) = I_{\treelabeling_0(\epsilon),f}$ and
$(q',\alpha') = I_{X,f,q'}$.
\begin{proof}
The proof is by induction on the length $k$ of the witnessing path $\Path:
(q,\alpha) \to^* (q',\alpha')$. When $k = 0$, we have that $(q,\alpha) =
(q',\alpha')$. So, we may set $\tree' = \tree$ and the statement is satisfied.

Let us now consider the case when $k > 0$. There are two cases to consider
depending on the second configuration in the path $\Path$:
\begin{enumerate}[(i)]
    \item $\Path = (q,\alpha) \to (q_1,\alpha_1) \to^* (q',\alpha')$,
       for some $q_1: \varsetV \to \{0,1\}$ and $\alpha_1: \varsetW \to \{0,1\}$.
    \item $\Path = (q,\alpha) \to (q_1,\alpha_1\beta_1) \to^* (q',\alpha')$, where
        $q_1: \varsetV \to \{0,1\}$ and $\alpha_1,\beta_1: \varsetW \to \{0,1\}$.
\end{enumerate}
Note that the case when the first action is a pop (i.e. $(q,\alpha) \to
(q_1,\epsilon)$) is not possible since $(q_1,\epsilon)$ is a deadend.
\OMIT{
Suppose that $\Path = (q,\alpha) \to
(q_1,\alpha_1) \to^* (q',\alpha')$,
where $q_1: \varsetV \to \{0,1\}$ and $\alpha_1: \varsetW \to \{0,1\}$.
}

Let us consider Case (i).
The configuration $(q_1,\alpha_1)$ is obtained by applying a
rule $\sigma = (1,\varphi) \in \pdsrules$, which was generated by a rule
$\gamma = (g,\addclass{B}) \in \trs$. Suppose that $g = A \subseteq \CLASS$;
the other two cases (i.e. when $g$ is of the form $(\langle d\rangle
A,\addclass{B})$) can be handled in the same way.
Since $\varphi$ has a conjunct of the form $y_a$ for each $a \in A$,
it is the case that $A \subseteq \treelabeling(\epsilon)$.
Since $y_b^1$ is a conjunct of $\varphi$ for each $b \in B$,
The rule $\sigma$ toggles on the variables $y_b$ for each
$b \in B$ while preserving the value of all other variables
in $\varsetV \cup \varsetW$. So, if $T_1 := (\treedomain,\treelabeling_1)$
with $\treelabeling_1(0) := \treelabeling(0)$ and
$\treelabeling_1(\epsilon) := \treelabeling(\epsilon) \cup B$, then
$T_0 \to_{\trs,f_{\alpha}} T_1$. We may now apply the induction hypothesis on
the subpath $(q_1,\alpha_1) \to^* (q',\alpha')$ of $\Path$ of shorter
length and obtain $T_1 \to_{\trs,f_{\alpha}} T'$. Concatenating paths,
we obtain a path $T \to_{\trs,f_{\alpha}} T'$ as desired.

Let us now consider Case (ii).
\OMIT{
The last case to consider is when $\Path = (q,\alpha) \to
(q_1,\alpha_1\beta_1) \to^* (q',\alpha')$, where
$q_1: \varsetV \to \{0,1\}$ and $\alpha_1,\beta_1: \varsetW \to \{0,1\}$.
}
The configuration $(q_1,\alpha_1\beta_1)$ is obtained by applying a rule
$\sigma = (2,\varphi) \in \pdsrules$, which was generated by a
rewrite rule $(A,\addchild{B}) \in \trs$. Therefore, it must be the case that
$(q_1,\beta_1) \to_\pds^* (q_2,\beta_2) \to_{\sigma_2} (q_3,\epsilon)$ and
$(q_3,\alpha_1) \to_\pds^* (q',\alpha')$
for some configurations $(q_2,\beta_2)$ and $(q_3,\epsilon)$ with stack
height 1 and 0, respectively, and some $\pds$-rule $(0,\varphi')$. As in the
previous case, we have
$A \subseteq \treelabeling(\epsilon)$ and so, applying $(A,\addchild{B})$
at $\epsilon$, we obtain $\tree_1 = (\treedomain_1,\treelabeling_1)$ with
$\treedomain_1 := \treedomain \cup \{1\}$ and $\treelabeling_1$ is an
extension of $\treelabeling$ to $\treedomain_1$ with $\treelabeling_1(1) :=
B$. Let $\tree_1' = (\{\epsilon\},\treelabeling_1')$ be the subtree of
$\tree_1$ rooted at node 1. Applying induction on the path
$(q_1,\beta_1) \to_\pds^* (q_2,\beta_2)$, we obtain a tree
$\tree_2' = (\treedomain_2',\treelabeling_2')$ such that:
(1) $\nu: \tree_1' \to_{\trs,f}^* \tree_2'$ with assumption function
$f := f_{\beta_1}$, and (2) $\treelabeling_2'(\epsilon) = \{ c \in
\CLASS : \beta_2(y_c) = 1 \}$. Similarly, if we let $\tree_3 =
(\{\epsilon,0\},\treelabeling_3)$ be the tree with $\treelabeling_3(\epsilon)
= \treelabeling(\epsilon)$ and $\treelabeling_3(0) =
\treelabeling_2'(\epsilon)$, we apply induction hypothesis on
the path $(q_3,\alpha_1) \to_\pds^* (q',\alpha')$ and obtain a path
$\mu: \tree_3 \to_{\trs,f_\alpha}^* \tree''$ (here $f_\alpha = f_{\alpha_1}$) for
some
tree $\tree'' = (\treedomain'',\treelabeling'')$ with $\treelabeling''(\epsilon)
= \{ c \in \CLASS : \alpha'(y_c) = 1\}$. Now let $\tree'$ be the tree
obtained from $\tree''$ by: (1) removing the node 0 of $\tree''$ and replacing
it by the tree $\tree_2'$ (note: $\tree''(0) = \tree_2'(\epsilon)$), and
(2) adding a node labeled by $\treelabeling(0)$ as a
child of the root node of $\tree''$. We have $\tree'(\epsilon) =
\tree''(\epsilon) = \{c \in \CLASS : \alpha'(y_c) = 1 \}$. Finally,
we see that $\tree \to_{\trs,f_\alpha} \tree_1 \to_{\trs,f_\alpha}^* \tree'$,
where the path $\tree_1 \to_{\trs,f_\alpha}^* \tree'$ is obtained by
first applying the path $\nu$ on the subtree of $\tree_1$ rooted at 1 and
then applying the path $\mu$ on the resulting tree (in the second step, we
keep the subtree of $\tree_1$ rooted at the node 0 unchanged).
\end{proof}

\OMIT{
\begin{lemma}
Suppose that $I$ is a configuration of $\pds$ with stack height 1, i.e.,
$I: (\varsetV\cup\varsetW) \to \{0,1\}$.
Define $\tree =
(\treedomain,\treelabeling) \in \trees(\treelabels)$ satisfying
(1) if $I(\Pop) = 0$, then $\treedomain = \{\epsilon\}$; if $I(\Pop) = 1$,
then $\treedomain = \{\epsilon,0\}$, (2) $\treelabeling(\epsilon) =
\{ c \in \CLASS : I(y_c) = 1 \}$ and
$\treelabeling(0) = \{ c \in \CLASS : I(x_c) = 1 \}$. Define the
assumption function $f_I: (\CLASS\cup\{\Root\}) \to \{0,1\}$ with
$f(c) = I(z_c)$ for $c \in \CLASS$ and $f(\Root) = I(\Root)$
Then the following hold:
\begin{enumerate}
\item Suppose $I \to_{(1,\varphi)} I'$ (i.e. $I': (\varsetV\cup\varsetW) \to
\{0,1\}$) and $(1,\varphi)$ was obtained from the rewrite rule
$\sigma = (g,\addclass{A})$. Then, if $\tree \to_{\sigma,f_I} \tree' =
(\treedomain', \treelabeling')$ with $\sigma$ applied to the root node, then
$\treelabeling'(\epsilon) = \{c \in \CLASS: I'(y_c) = 1\}$.
\item Suppose $I \to_{(2,\varphi)} I'$ with $I'= (q,\alpha\beta)$ with
$q: \varsetV \to \{0,1\}$, and $\alpha,\beta: \varsetW \to \{0,1\}$. Then,
$I'$ and $\alpha$ coincide on domain $\varsetW$ (i.e. bottom stack symbol was
unchanged). In addition, there
exists $\sigma := (A,\addchild{B}) \in \trs$ such that $\tree \to_\sigma
\tree' = (\treedomain',\treelabeling')$ with $\treedomain' = \treedomain
\cup \{1\}$ and
\[
    \treelabeling'(v) = \left\{ \begin{array}{cc}
                                \treelabeling(v) & \text{\quad
                                        if $v \in \{\epsilon,0\}$} \\
                                \{ c : \beta(y_c) =  1\} & \text{\quad
                                    otherwise.}
                            \end{array}
                        \right.
\]
\end{enumerate}
\end{lemma}
}

\OMIT{
\begin{proof}
\textbf{(A2) implies (A1)}.
The proof is by induction on the length
$k$ of the witnessing path
$\Path: I_{\treelabeling_0(\epsilon),f} \to_\pds^* I_{X,f}$. If
$k = 0$, then the only case to check is when $X = \treelabeling_0(\epsilon)$.
In this case, we can just set $\tree = \tree_0$ and (A1) is satisfied.
Assume now that $k > 0$. The path $\Pi$ can be written as
$I_{\treelabeling_0(\epsilon),f} = I_0 \to \cdots \to I_m = I_{X,f}$.
We have several cases to consider.

Suppose that $I_0 \to_{\sigma} I_1$, where $\sigma = (1,\varphi)$. We
consider only the case when $\sigma$ was generated from a rule in $\trs$
of the form $\gamma := (A,\addclass{B})$; the other two cases (when $\sigma$ was
generated
from a rule of the form $(\langle d\rangle A,\addclass{B})$ can be handled
in the same way. Then, $I_1 = (q,\alpha)$ with $q: \varsetV \to \{0,1\}$
and $\alpha: \varsetW \to \{0,1\}$.
Observe now that the rule $\sigma$ toggles on the variables $y_c$
preserves the value of all variables
in $\varsetV \cup \varsetW$. So, if $T_1 = (\{\epsilon\},\treelabeling_1)$
with $\treelabeling_1(\epsilon) := \treelabeling_0(\epsilon) \cup B$, then
\end{proof}
}

\OMIT{
Suppose the input rewrite system is $\trs$, the input tree is $\tree =
(\treedomain,\treelabeling)$, and the input guard to be checked for redundancy
is $g$. [If there are several input guards to be checked for redundancy, then
we simply go through each of them sequentially.] Our algorithm will check
for $k$-redundancy ($k$ is given in the input in unary).

Let
}

We now prove the direction (A1)$\Rightarrow$(A2) of the above claim. The
proof is by induction on the length $k$ of the path $\Path: \tree_0
\to_{\trs,f}^* \tree$. The base case is when $k = 0$, in which case
we set $q'(x) := 0$ for all $x \in \varsetV$ and (A2) is
satisfied since $I_{\treelabeling_0(\epsilon),f} = I_{X,f,q'}$.

We proceed to the induction case, i.e., when $k > 0$.
We may assume
that $\treelabeling(\epsilon)$ is a \emph{strict} superset of
$\treelabeling_0(\epsilon)$; otherwise, we may replace $\Path$ with an
empty path, which reduces us to the base case. Therefore,
some rewrite rule $(g,\addclass{B})$ with $B \cap
\treelabeling_0(\epsilon) \neq \emptyset$
is applied at $\epsilon$ at some point in $\Path$. We have two cases
depending on the first application of such a rewrite rule $\sigma =
(g,\addclass{B})$ at $\epsilon$ in $\Path$:

\textbf{(Case 1)} We have $g = A$ or $g = \langle \uparrow \rangle A$ for some
$A \subseteq \CLASS$. Since $\treelabeling_0(\epsilon) =
\treelabeling'(\epsilon)$, the rule $\sigma$ can already be applied at $\epsilon$
in $\tree_0$ yielding a tree $\tree_1 = (\{\epsilon\},\treelabeling_1)$
with $\treelabeling_1(\epsilon) = \treelabeling_0(\epsilon) \cup B$. Owing to
Lemma \ref{lem:monotonic}, we may apply the same sequence of rewrite rules
witnessing $\Path$ but with $\tree_1$ as an initial configuration. Furthermore,
we may remove any application of $\sigma$ at $\epsilon$ in this
sequence (since $B$ is always a subset of the label of $\epsilon$), which
gives a path $\Path': \tree_1 \to_{\trs,f} \tree$ of length $\leq k-1$.
Altogether, we obtain another path $\tree_0 \to_{\sigma,f} \tree_1
\to_{\trs,f}^* \tree$ of length $\leq k$.
In addition, we have $I_{\treelabeling_0(\epsilon),f} \to_\pds
I_{\treelabeling_1(\epsilon),f}$ by applying the rule
$(1,\varphi) \in \pdsrules$ generated from $\sigma$ in the construction of
$\pds$. Now we may apply the induction hypothesis to the path
$\Path': \tree_1 \to_{\trs,f}^* \tree$ (of length $< k$) and obtain
$q': \varsetV
\to \{0,1\}$ such that $I_{\treelabeling_1(\epsilon),f} \to_\pds^*
I_{\treelabeling(\epsilon),f,q'}$, which gives us a path from
$I_{\treelabeling_0(\epsilon),f}$ to $I_{\treelabeling(\epsilon),f,q'}$,
as desired.

\textbf{(Case 2)} We have $g = \langle \downarrow \rangle A$ for some
$A \subseteq \CLASS$. In this case, the first rule applied in $\Path$ must be
of the form $\gamma = (C,\addchild{Y})$ for some $C,Y \subseteq \CLASS$.
That is, we have $\Path: \tree_0
\to_{\gamma,f} \tree_2 \to_{\trs,f}^* \tree_3 \to_{\sigma,f} \tree_4
\to_{\trs,f}^* \tree$ for some
trees $\tree_i = (\treedomain_i,\treelabeling_i)$ ($i \in \{2,3,4\}$) such that:
(1) $\treelabeling_0(\epsilon) = \treelabeling_i(\epsilon)$ for all
$i \in \{2,3\}$, (2)
$\treedomain_2 = \{\epsilon,0\}$, (3) At $\tree_3$, the rule $\sigma$ is applied at
$\epsilon$, and so $\treedomain_4 = \treedomain_3$ and
$\treelabeling_4(v) = \treelabeling_3(v)$ for $v \neq \epsilon$ and
$\treelabeling_4(\epsilon) = \treelabeling_3(\epsilon) \cup B\supset
\treelabeling_3(\epsilon)$. So, this means that some child $i \in \N$ of the
root node in $\tree_3$ satisfies $\treelabeling_3(i) \supseteq A$, which enables
$\sigma$ to be executed at the root node.
Now recall that simple guards allow a node to inspect only its immediate neighbours
(i.e. parent or children). For this reason, we may assume without a loss
of generality that $i = 0$, i.e., that it was the child of $\epsilon$ that was
spawned by the first transition $\gamma$. In fact, we may go one step further to
assume that 0 is the only child of the root node in the subpath $\tree_2
\to_{\trs,f}^* \tree_3$! This is because that without modifying the label of
$\epsilon$, simple guards do not allow the node 0 to ``detect'' the presence of the
other children.

To finish off the proof, we will apply induction hypotheses twice on two different
paths. Firstly, we apply the induction hypothesis on the shorter
path $(\tree_2)_{|0} \to_{\trs,f_\epsilon}^* (\tree_3)_{|0}$, where
$f_\epsilon$ is an assumption function that captures the node label
$\treelabeling_3(\epsilon)$, i.e.,
$f_{\epsilon}(\Root) = 0$ and $f_{\epsilon}(c) = 1$ iff $c \in
\treelabeling_3(\epsilon)$; note that $\treelabeling_0(\epsilon) =
\treelabeling_2(\epsilon) = \treelabeling_3(\epsilon)$. [Recall that
$(\tree_2)_{|0}$ means the subtree of $\tree_2$ rooted at the node 0.] This
gives us the path $I_{\treelabeling_2(0),f_\epsilon} \to_\pds^*
I_{\treelabeling_3(0),f_\epsilon,q''}$ for some $q'': \mathcal{V} \to \{0,1\}$.
Therefore, we have the path $\nu_1: I_{\treelabeling_0(\epsilon),f} \to_\pds
(\bar 0,\mu_\epsilon\mu_0) \to_\pds^* (q'',\mu_\epsilon\mu_0') \to_\pds
(q',\mu_\epsilon) \to (q',\mu_\epsilon')$, where
\begin{itemize}
    \item $\mu_\epsilon(r) = 1$ iff $I_{\treelabeling_0(\epsilon),f}(r) = 1$
        for each $r \in \varsetW = \{ y_c, z_c : c \in \CLASS \} \cup \{\Root\}$.
    \item $\mu_0(\Root) = 0$; $\mu_0(z_c) := \mu_\epsilon(y_c)$;
        and $\mu_0(y_c)= 1$ iff $c \in B$, for each $c \in \CLASS$.
    \item $\mu_0'(r) = I_{\treelabeling_3(0),f_\epsilon,q''}(r)$ for each
        $r \in \varsetW$.
    \item $q'(\Pop) = 1$ and $q'(x_c) = \mu_0'(y_c)$ for each
        $c \in \CLASS$.
    \item $\mu_\epsilon'(r) = 1$ iff $\mu_\epsilon(r) = 1$ or $r \in B$.
\end{itemize}
Note that $(q',\mu_\epsilon')$ is the same as $I_{\treelabeling_4(\epsilon),f,q'}$.
We are now going to apply the induction hypothesis the second time. To this end,
we let $\tree_4' = (\{\epsilon\},\treelabeling_4')$ be the single-node tree
defined by restricting $\tree_4$ to the root, i.e., $\treelabeling_4'(\epsilon)
= \treelabeling_4(\epsilon) \supset \treelabeling_0(\epsilon)$. Since $\tree_0
\trleq \tree_4$, by the proof of Lemma \ref{lem:monotonic}, we have
$\tree_4' \to_{\trs,f}^* \tree$ by following the actions in the path $\Path$, except
for the action $\sigma$. This gives us a path $\Path'$ whose length is
$|\Path| - 1$. So, by induction, we have a path $\nu_2:
I_{\treelabeling_4(\epsilon),f}
\to_\pds^* I_{\treelabeling(\epsilon),f,p}$ for some $p: \varsetV \to \{0,1\}$.
In order to connect $\nu_1$ and $\nu_2$, we need to use the following lemma, which
completes the proof of Lemma \ref{lm:spds-correctness}.

\OMIT{
Owing to
(1), the sequence of rewrite rules applied in the subpath of $\Path$ from
$\tree_0$
and $\tree_4$ may be reordered so long as we preserve the ordering of the
rewrite rules applications within \emph{each} child of $\epsilon$. Therefore,
there exists a path $\Path': \tree_0 \to_{\gamma,f} \tree_1' \to_{\trs,f}^*
\tree_2' \to_{\gamma,f} \tree_3' \to_{\trs,f}^* \tree_4 \to_{\trs,f}^* \tree$
\mh{$\gamma$ is applied twice here, but $\sigma$ is not.}
of the same length as $\Path$ for some trees $\tree_i' =
(\treedomain_i',\treelabeling_i')$ ($i = 1,2,3$) such that: (a) $\treedomain_1
= \{\epsilon,0\}$,
(b) the path $\nu: \tree_1' \to_{\trs,f}^* \tree_2'$
does not apply any rule to $\epsilon$ (i.e., modifies only
the subtree of rooted at 0), and (c) At $\tree_2'$, $\gamma$ is applied
at $\epsilon$.
Let $(2,\varphi) \in \pdsrules$ (resp. $(1,\psi) \in \pdsrules$)
be the rule generated by $\gamma$ (resp. $\sigma$). Then,
$I_{\treelabeling_0(\epsilon),f} \to_{(2,\varphi)}
(q_1,\alpha_1\beta_1)$\mh{you mean $\treelabeling_2$?} for some $q_1: \varsetV \to \{0,1\}$ and
$\alpha_1,\beta_1: \varsetW \to \{0,1\}$ satisfying:
(i) $q_1(x) = 0$ for all $x \in \varsetV$, (ii)
$\alpha_1(x) = I_{\treelabeling_0(\epsilon),f}(x)$ for each $x \in
\varsetW$, and (iii) $\alpha_2(x) = I_{\treelabeling_1(0),f'}(x)$, where
$f': (\CLASS\cup\{\Root\}) \to \{0,1\}$ with $f'(\Root) = 0$ and
$f'(c) = 1 \LRA c \in \treelabeling_1(\epsilon)$.
By restricting the path $\nu$ to subtrees rooted at 0\mh{what does this mean? -- ah, you chop off the root node of all trees in the path -- perhaps rephrase it in terms of ``without the root node'' as this is more intuitive (at least to me)}, we may apply
the induction hypothesis and obtain $q_2: \varsetV \to \{0,1\}$ such that
$(q_1,\beta_1) = I_{\treelabeling_1'(0),f'} \to_\pds^*
I_{\treelabeling_2'(0),f',q_2}$. Let $\beta_2: \varsetW \to\{0,1\}$ be such
that $(q_2,\beta_2) = I_{\treelabeling_2'(0),f',q_2}$.  \mh{say that since this run exists, the same sequence of operations will take you from $(q_1, \alpha_1\beta_1)$ to $(q_2, \alpha_1\beta_2)$.}
We may now apply the (unique) rule $(0,\theta)$ from $(q_2,\alpha_1\beta_2)$
obtaining the configuration $(q_3,\alpha_1)$, where $q_3: \varsetV \to
\{0,1\}$ such that $q_3(\Pop) = 1$ and $q_3(x_c) = \beta_2(y_c)$.
Since $A \subseteq \treelabeling_2'(\epsilon)
= \treelabeling_0(\epsilon)$, we may now fire $(1,\psi)$ from
$(q_3,\alpha_1)$ obtaining the configuration $(q_3,\alpha_2)$ with
$\alpha_2(z_c) = \alpha_1(z_c)$, and $\alpha_2(y_c) = 1 \LRA \alpha_1(y_c)
\vee c \in B$. Note that $\alpha_2$ records $\treelabeling_3'(\epsilon)$
and $f$.

Now consider the sequence of $\trs$-rules applied in the path $\Path'$ with
each occurence of $\sigma$ removed. The resulting sequence of length
$< k$ gives rise to a path from $\tree' = (\{\epsilon\},\treelabeling_3')$ to
$\tree$. By induction, there exists $q': \varsetV \to \{0,1\}$ such that
$I_{\treelabeling_3'(\epsilon),f} \to_\pds^* I_{\treelabeling(\epsilon),f,q'}$.
Now, observe that $I_{\treelabeling_3'(\epsilon),f} = (q,\alpha_2)$, where
$q(x) = 0$ for each $x \in \varsetV$. Lemma \ref{lm:mon-sPDS} below implies
that $(q_3,\alpha_2) \to_\pds^* I_{\treelabeling(\epsilon),f,q'}$. Finally,
we can connect all the $\pds$-paths that we constructed and obtain
a $\pds$-path from $I_{\treelabeling_0(\epsilon),f}$ to
$I_{\treelabeling(\epsilon),f,q'}$ as desired. This completes the
proof of our claim.
}

\begin{lemma}
Suppose $p,q: \varsetV \to \{0,1\}$ such that $p(x) \leq q(x)$ for each
$x \in \varsetV$. Then, for all stack content $v = \alpha_1\cdots \alpha_m$
where each $\alpha_i: \varsetW \to \{0,1\}$, if $(p,v) \to_{\sigma} (p',w)$ with
$\sigma \in \pdsrules$, then $(q,v) \to_{\sigma} (q',w)$ for some $q':
\varsetV \to \{0,1\}$ such that $p'(x) \leq q'(x)$ for each $x \in \varsetV$.
\label{lm:mon-sPDS}
\end{lemma}
\begin{proof}
Easy to verify by checking each $\pds$-rule.
\end{proof}

\OMIT{
Therefore, we may assume that $g = \langle \downarrow \rangle A$
for some $A \subseteq \CLASS$. In this case,
}

\section{A polynomial-time fixed point algorithm for bounded height}
\label{sec:fixed-point}
We now provide a simple fixpoint algorithm for the $k$-redundancy problem that runs
in time
$n^{O(k)}$. We assume that we have simplified our rewrite systems using the
simplification given in Section \ref{sec:upper}.
The input to the algorithm is a simple rewrite system $\trs$ over
$\trees(\treelabels)$ with $\treelabels = 2^\CLASS$, a guard database
$S \subseteq \CLASS$, and an initial $\treelabels$-labeled tree $\tree_0 =
(\treedomain_0,\treelabeling_0) \in \trees(\treelabels)$. The number $k \in \N$
is fixed (i.e. not part of the input). Our algorithm will
compute a tree $\tree_F = (\treedomain_F,\treelabeling_F)$ such that
$\tree_F \in \post_\trs^*(\tree_0)$ and, for \emph{all} trees $\tree \in
\post_\trs^*(\tree_0)$, we have $\tree \trleq \tree_F$. By Lemma \ref{lm:trleq},
if a guard $g \in S$ is matched in some $\tree \in \post_\trs^*(\tree_0)$,
then $g$ is matched in $\tree_F$. The converse also holds since
$\tree_F \in \post_\trs^*(\tree_0)$. Furthermore, we shall see that
$\tree_F$ is of height at most $k$, each of whose nodes has at most
$\sizeof{\tree_0}+|\trs|$ children (and so of size at most
$O(\sizeof{\tree_0}+|\trs|)^{O(k)}$), and is computed in time
$(\sizeof{\tree_0}+|\trs|)^{O(k)}$. Computing the rules
in $S$ that are matched in $\tree_F$ can be easily done in time linear in
$\sizeof{S} \times \sizeof{\tree_F}$.

\newcommand\dom{\mathit{dom}}
\newcommand\calF{{\cal F}}
\newcommand\makeset[1]{\{#1\}}

The algorithm for computing $\tree_F$ works as follows.
\begin{description}
\item[(1)] Set $\tree = (\treedomain,\treelabeling) := \tree_0$;
\item[(2)] Set $W := \emptyset$;
\item[(3)] Add each pair $(v,\sigma)$ to $W$, where $v \in \treedomain$ and
    $\sigma \in \trs$ is a rule that is applicable at $v$,
    i.e.~$v, T \models g$ where $\sigma = (g, \chi)$ for some $\chi$; all
    pairs in $W$ are initially ``unmarked'';
\item[(4)] Repeat the following for each unmarked pair $(v,\sigma) \in W$
    with $\sigma = (g,\chi)$:
    \begin{description}
    \item[(a)] if $\chi = \addclass{X}$, then $\treelabeling(v) :=
        \treelabeling(v) \cup X$;
    \item[(b)] if $\chi = \addchild{X}$, $|v| \leq k - 1$, and
        $X \not\subseteq \treelabeling(v.i)$ for \emph{each} $v.i \in
        \treedomain$ child of $v$,
        then
        set $\treedomain := \treedomain \cup \{v.(i+1)\}$  where
        $i := \max\{ j \in \N : v.j \in \treedomain\} + 1$ (by convention,
        $\max\emptyset = 0$), and $\treelabeling(v.i) := X$;
    \item[(c)] Mark $(v,\sigma)$;
    \item[(d)] For each $(w,\gamma) \in (\treedomain \times
        \trs) \setminus W$ that is applicable, add $(w,\gamma)$ to
        $W$ unmarked;
    \end{description}
\item[(5)] Output $\tree_F := \tree$;
\end{description}

\noindent
\textbf{Proof of Correctness.} We will show that $\tree \trleq \tree_F$
for each $\tree \in \post_\trs^*(\tree_0)$. The proof is by induction on
the length $m$ of path from $\tree_0$ to $\tree$. We write $\tree_F =
(\treedomain_F,\treelabeling_F)$. The base case is when $m=0$,
in which case $\tree = \tree_0 = (\treedomain_0,\treelabeling_0)$. Since
our rewrite rules only add classes to nodes and add new nodes, it is
immediate that $\treedomain_0 \subseteq \treedomain_F$ and that
$\treelabeling_0(v) \subseteq \treelabeling_F(v)$ for each $v \in
\treedomain_0$. That is, $\tree_0 \trleq \tree_F$. We now proceed to the
inductive step and assume that $\tree = (\treedomain,\treelabeling) \trleq
\tree_F$ with a witnessing embedding $f: \treedomain \to \treedomain_F$.
Let $v \in \treedomain$ and $\sigma = (g,\chi)$ be a $\trs$-rule that is
applicable on the node $v$ of $\tree$. By Lemma \ref{lm:trleq}, it follows that
$f(v), \tree_F \models g$. Let us suppose that after applying $\sigma$ on
$v$, we obtain $\tree' := (\treedomain',\treelabeling')$.
We have two cases now:
\begin{enumerate}
\item $\chi = \addclass{X}$. In this case, $\treedomain' = \treedomain$,
$\treelabeling'(u) = \treelabeling(u)$ whenever $u \neq v$, and
$\treelabeling'(v) = \treelabeling(v) \cup X$. Since $\tree \trleq
\tree_F$, it follows that $\treelabeling(v) \subseteq \treelabeling_F(f(v))$.
Since $f(v),\tree_F \models g$, it follows that $(f(v),(g,\addclass{X}))$ is marked
in $W$ (by the end of the computation of $\tree_F$).
This means that $X\subseteq \treelabeling_F(f(v))$. Altogether,
$\tree' \trleq \tree_F$.
\item $\chi = \addchild{X}$ and $|v| \leq k-1$. In this case, $\treedomain' =
\treedomain
\cup \{ v.i \}$ for some $i \in \N$ with $v.i \notin \treedomain$,
$\treelabeling'(u) = \treelabeling(u)$ for each $u \in \treedomain$,
and $\treelabeling'(v.i) = X$. Since $\tree \trleq \tree_F$, the properties
(H1) and (H3) of embeddings imply that
$v$ and $f(v)$ have the same levels in $\tree$ and $\tree_F$ respectively.
Also, since $v,\tree_F \models g$, it follows that $(f(v),(g,\addchild{X}))$ is
marked in $W$ (by the end of the computation of $\tree_F$). This means that
there exists a child $f(v).j$ in $\treedomain_F$ with $X \subseteq
\treelabeling_F(f(v).j)$. 
Hence, the extension of $f$ with $f(v.i) := f(v).j$
is a witness for $\tree' \trleq \tree_F$, as desired.
\end{enumerate}

\noindent
\textbf{Running time analysis.}
We first analyse the time complexity of each individual step in the
above algorithm. Checking whether a simple guard $g$
is satisfied at a node $v$ can be done by inspecting the node labels of
$v$ and its neighbours (i.e. parent and children), which can be performed in time
$O(|\CLASS| \times
(\text{number of neighbours of $v$}))$. So, since each node in $\tree_0$ has
most $|\tree_0|$ children, then step (3) can be achieved in time
$O(|\CLASS| \times |\tree_0|^2)$. Similarly,
modifying a node label can be done in time $O(|\CLASS|)$, which is the
running time of Step (4a).  Since no rewrite rule in $\trs$ can be
applied twice at a node, each node in $\tree_F$ has
at most $m := |\tree_0| + |\trs|$ children, i.e., either that child was already
in $\tree_0$ or that it was generated by a rewrite rule in $\trs$.
Therefore, Step (4b) takes time $O(|\CLASS| \times m)$.
Since Step (4d) needs to check only $(w,\gamma)$ where $w$ is a neighbor
(i.e. children or parent) of $v$, then it can be done in time
$O(|\CLASS| \times m)$.
Since the tree $\tree_F$ has height at most $k$ and each node has at most
$m$ children, the tree $\tree_F$ has at most $O(m^k)$ nodes. So, the
entire Step (4) runs in time $O(|\CLASS| \times |\tree_0|^{k+1} \times
|\trs|^{k+1})$, which is also a time bound for the running time of the
entire algorithm.

    \section{$\W[1]$-hardness for $k$-redundancy problem}
Before proving our $\W[1]$-hardness, we first briefly review some standard
concepts from parameterised complexity theory \cite{grohe-book}.
A \defn{parameterized
problem} $\mathcal{C}$ is a computational problem with input of the form
$\langle v;k\rangle$,
where $v \in \{0,1\}^*$ and $k \in \N$ (represented in unary) is called the
\emph{parameter}.
An \defn{fpt-algorithm} for $\mathcal{C}$ runs in time $O(f(k) . n^c)$ for
some constant $c$, and some computable function
$f: \N \to \N$ (both independent of input). Here, $n$ is the size of the input.
So, an algorithm that
runs in time $O(n^k)$ is not an fpt-algorithm. The problem $\mathcal{C}$
is said to be \defn{fixed-parameter tractable} if it can be solved by an
fpt-algorithm. There are classes of problems in parameterized complexity
that are widely believed not to be solvable by fpt-algorithms, e.g.,
$\W[1]$ (and others in the $\W$-hierarchy). To show that a problem
$\mathcal{C}$ is $\W[1]$-hard, it suffices to give an fpt-reduction
to $\mathcal{C}$ from the \defn{short acceptance problem of nondeterministic
Turing machines (NTM)}: given a tuple $\langle (\TM,w);k \rangle$,
where $\TM$ is an NTM, $w \in \{0,1\}^*$ is an input word, and $k$ is a positive
integer represented in unary, decide whether $\TM$ accepts $w$ in $k$
steps. Among others, standard reductions in complexity theory that satisfy the
following two conditions are fpt-reductions: (1) fpt-algorithms, (2) if
$k$ is the input parameter, then the output parameter is $O(k)$.
\OMIT{
The set of $k$-fold exponential functions is defined inductively.
A $0$-fold exponential function is any polynomial function $f: \N \to \N$; for
each $k \in \N_{> 0}$, a $k$-fold exponential function is any function of the
form $2^{f(n)}$, where $f$ is any $(k-1)$-fold exponential function.
We use $k$-$\EXPSPACE$ to mean the set of problems solvable by Turing
machines equipped with $k$-fold exponential bounded space.
}

We now show that $k$-redundancy is $\W[1]$-hard. To this end, we reduce from the
short acceptance problem of nondeterministic Turing machines.
The input to the problem is $\langle \TM,w;k\rangle$, where $\TM$ is an NTM,
$w$ is an input word for $\TM$, and $k \in \N$ given in unary.
Suppose that $\TM =
(\oalphabet,\talphabet,\blank,\controls,\pdsrules,q_0,q_{acc})$,
where $\oalphabet = \{0,1\}$ is the input alphabet, $\talphabet =
\{0,1,\blank\}$ is the tape alphabet, $\blank$ is the blank symbol, $\controls$
is the set of control states, $\pdsrules \subseteq (\controls \times \talphabet)
\times (\controls \times \talphabet \times \{L,R\})$ is the transition relation
($L$/$R$ signify go to left/right), $q_0 \in \controls$ is the initial
state, and $q_{acc}$ is the accepting state. Each configuration of $\TM$ on $w$
of length $k$ can be represented as a word
\[
    (0,a_0,c_0)(1,a_1,c_1)\cdots (k,a_k,c_k)
\]
where $a_i \in \talphabet$ for each $i \in [0,k]$, and for some $j \in [0,k]$
we have $c_j \in \controls$ and for all $r \neq j$ it is the case that
$c_j = \star$. Note that this is a word (of a certain kind) over $\Omega := \Omega_k := [0,k] \times \talphabet
\times (\controls \cup \{\star\})$. In the sequel, we denote by
$\CONF_k$ the set of all configurations of $\TM$ of length $k$.
Suppose that $w = a_1\cdots a_n$. If
$n < k$, we append some blank symbols at the end of $w$ and assume that
$|w| \geq k$. The initial configuration will be
$I_0(w) := (0,\blank,q_0)(1,a_1,\star)\cdots (k,a_k,\star)$\mh{Why start with a blank square, especially when you add another blank cell below.}. [Notice that
if $n > k$, then some part of the inputs will be irrelevant.]
In the following, we will append the symbol $(-1,\blank,\star)$ (resp.
$(k+1,\blank,\star)$) before the start (resp. after the end) of $\TM$
configurations to aid our definitions. Therefore, define
$\Omega_e := \Omega_{k,e} := [-1,k+1] \times \talphabet
\times (\controls \cup \{\star\})$.


We now construct a tree-rewrite system $\pds$ to simulate $\TM$.
First off,
for each $\pdsrules$-rule $\sigma$,
define a relation $R_\sigma \subseteq \Omega_e^4$ such that
$(\vecV_1,\vecV_2,\vecV_3, \vecW) \in R$ describes the update by
rule $\sigma$ of cell $\vecV_2$ to $\vecW$ given the neighbouring cells
$\vecV_1$ and $\vecV_3$.  That is, $(\vecV_1,\vecV_2,\vecV_3, \vecW) \in R$
iff a configuration $C_2 = \vecU_{-1} \cdots
\vecU_{k+1}$ of $\TM$, where $\vecU_{i+1} = \vecW$ for some $i$, can be
reached in one step using $\sigma$ from another
configuration
$C_1 = \vecU_{-1}' \cdots \vecU_{k+1}'$, where $\vecU_i'\vecU_{i+1}'\vecU_{i+2}' =
\vecV_1\vecV_2\vecV_3$. Let $R = \bigcup_{\sigma\in\pdsrules} R_\sigma$.
In this case, we say that rule $\sigma$ is a witness
for the tuple $(\vecV_1,\vecV_2,\vecV_3,\vecW)$. Notice that, given a rule
$\sigma$ of $\pds$ and $\vecV_1,\vecV_2,\vecV_3$, there exists
\emph{at most one} $\vecW$ such that $(\vecV_1,\vecV_2,\vecV_3,\vecW)
\in R$ and is witnessed by $\sigma$. In addition,
it is easy to see that checking whether
$(\vecV_1,\vecV_2,\vecV_3,\vecW) \in R$ can be done in polynomial time
(independent of $k$) since this is a simple local check of $\TM$
rules.

The rewrite system $\pds$ works over trees of labels $\Xi = \Omega_e
\cup \pdsrules$.
The initial tree is a tree with a single node labeled by
$I_0(w)$. We add the following rules to $\pds$:
\begin{enumerate}
\item For each $\sigma \in \pdsrules$, we add $(\top,\addchild{\sigma})$
    to $\pds$.
\item For each $(\vecV_{i-1},\vecV_i,\vecV_{i+1},\vecW) \in R$ witnessed by
    $\sigma \in \pdsrules$, for $i \in [0,k]$,
    we add
    \[
        (\sigma \wedge \langle \uparrow
        \rangle\{\vecV_{i-1},\vecV_i,\vecV_{i+1}\}, \addclass{\vecW})
    \]
    to $\pds$.
\item For each $\sigma \in \pdsrules$, we add
    $(\sigma,\addclass{(-1,\blank,\star)})$ and
    $(\sigma,\addclass{(k+1,\blank,\star)})$ to
    $\pds$.
\end{enumerate}
Rule 1 first guesses the rule $\sigma$ to be applied to obtain the next
configuration $C_2$ of the NTM $\TM$. This rule is added as a class in the
child $v$ of the current node $u$ containing the current configuration $C_1$.
Rule 3 is then applied to obtain the left and right delimiter. After that,
configuration $C_2$ is written down in node $v$ by applying Rule 2 for $k$
times. In particular, the system will look up the rule $\sigma$ that was
guessed before and the content of the parent node $u$ (i.e. $C_1$). If we
let $S = [0,k] \times \talphabet \times \{q_{acc}\}$, then $S$ is not
$k$-redundant iff $\TM$ accepts $w$ within $k$ steps. Notice that this is
an fpt-preserving reduction. In particular, it runs in time $O(k^c n^d)$
for some constant $c$ and $d$.


    \section{Proof of $\PSPACE$-hardness in Theorem \ref{th:pspace}}
We reduce to the $k$-redundancy problem, when $k$ forms part of the
input, from membership of a linear bounded Turing machine. Note, since
the runs of a linear bounded Turing machine may be of exponential length, the
encoding in the previous section no longer works.

Suppose the input TM is $\TM =
(\oalphabet,\talphabet,\blank,\controls,\pdsrules,q_0,q_{acc})$,
where $\oalphabet = \{a,b\}$ is an input alphabet, $\talphabet =
\{a,b,\blank\}$ is the tape alphabet, $\blank$ is the blank symbol, $\controls$
is the set of control states, $\pdsrules \subseteq (\controls \times \talphabet)
\times (\controls \times \talphabet \times \{L,R\})$ is the transition function
($L$/$R$ signify go to left/right), $q_0 \in \controls$ is the initial
state, and $q_{acc}$ is the accepting state. We are only interested in runs
of $\TM$ that use $N := dn$ space for some constant $d$. Each configuration of
$\TM$ on $w$ can be represented as a word of length $N+1$
in the alphabet $\Omega := \Omega_N$, which was defined in the proof of
$\W[1]$-hardness. The initial configuration $I_0(w)$ is defined to be
\[
    (0,a_0,c_0)(1,a_1,c_1)\cdots (N,a_N,c_N)
\]
where $a_0a_1\cdots a_N = \blank w \blank^{(d-1)n}$, $c_0 = q_0$, and
$c_1 = \cdots = c_N = \star$. In the following, we will also use extension
$\Omega_e := \Omega_{N,e}$ of $\Omega$ and
the logspace-computability of the 4-ary relation $\Delta \subset \Omega_e^4$
defined in that proof.

We now construct our rewrite system $\trs$, which works over the
alphabet $\Xi := \{R,0,1\} \cup \Omega_e$. Let $M :=
2(N+\log|\controls|+\log N)$.
Firstly, it suffices to consider runs
of $\TM$ of at most length $3^N \times |\controls| \times N <
2^M$ since exceeding this length there must
be a repeat of configurations by the pigeonhole principle. Our initial
configuration is $\tree_0 = (\treedomain_0,\treelabeling_0)$, where
$\treedomain = \{\epsilon\}$ and $\treelabeling_0(\epsilon) = R$.

We start
by adding the following rules to $\trs$: $(g,\addchild{0})$ and
$(g,\addchild{1})$, where $g = (\langle \uparrow\rangle)^i R$ and $i \in
[0,M-1]$. These rules simply build trees of height $M$, into which the binary
tree of height $M$ can be embedded. The label in any path from the root to a
leaf node gives us a number $i$ in $[0,2^M]$ written in binary, which will in
turn have a child representing the $i$th configuration of $\TM$ from $I_0(w)$
(which is unique since $\TM$ is deterministic).

We now add the rule \mh{also removed reference to T2} $(g,\addchild{\{(-1,\blank,\star),(N+1,\blank,\star)\}})$, where
$g = (\langle \uparrow \rangle)^M R$. This rule initialises the content
of each configuration (of a given branch).

\begin{notation}
    For a guard $g$, a class $c$, and direction $d \in \{\uparrow,\downarrow\}$, we
    write $c\langle d\rangle g$ to mean $c \wedge \langle d \rangle g$.
\end{notation}
The following set of rules defines the first configuration $I_0(w)$ in the
tree. For each $i \in [0,N]$, we add the rule $(g,\addclass{(i,a_i,c_i)})$
$g = \langle \uparrow \rangle (0
\langle \uparrow \rangle)^{M-1} R$.

We now add a set of rules for defining subsequent configurations in the
tree. The guard will inspect a previous configuration of the current
configuration. Based on this, it will add an appropriate content of each
tape cell. Each of these rules will be of the form $(g,\addchild{\vecV})$,
where $\vecV \in \Omega$. For each $(\vecV_1,\vecV_2,\vecV_3,\vecW)$ and
$i \in [0,M-1]$, we add a rule $(g,\addchild{\vecW})$ where
$g := \top \langle \uparrow \rangle \sigma_i \langle
\downarrow \rangle \{\vecV_1,\vecV_2,\vecV_3\}$ where
$\sigma_i = (0\langle \uparrow \rangle)^i 1 \langle \uparrow \rangle \top
\langle \downarrow \rangle 0 (\langle \downarrow \rangle 1)^i$. The guard
$g$ selects precisely the nodes containing the $j$th configuration of $\TM$
(there can be at most one $j$th configuration of $\TM$ since $\TM$ is
deterministic), where $j$ is a number whose binary representation ends with
$10^i$. The guard traverses the tree upward trying to find the first bit
that is turned on (i.e. of form $10^i$) and then looks at the complement of this
bit pattern (i.e. of form $01^i$).

Finally, if we set $S = \{ (i,a,q_F) : i \in [0,N], a \in \talphabet \}$,
then $S$ is not $(M+1)$-redundant iff $\TM$ accepts $w$. The reduction
is easily seen to run in polynomial time. This completes our reduction.

    \section{Proof of $\EXP$-hardness in Theorem \ref{th:exp}}
We show that the redundancy problem for $\trs$ is $\EXP$-hard. To this end,
we reduce membership for alternating linear bounded Turing machines, which
is $\EXP$-complete. An alternating Turing machine is a tuple $\TM = 
(\oalphabet,\talphabet,\blank,\controls,\pdsrules,q_0,q_{acc})$,
where $\oalphabet = \{0,1\}$ is the input alphabet, $\talphabet = 
\{0,1,\blank\}$ is the tape alphabet, $\blank$ is the blank symbol, $\controls$
is the set of control states, $\pdsrules \subseteq (\controls \times \talphabet)
\times (\controls \times \talphabet \times \{L,R\})^2 \times \{\wedge,\vee\}$
is the transition relation. Given an $\TM$-configuration $C$, a transition 
$((q,a),(q_L,a_L,d_L),(q_R,a_R,d_R),
s)$ spawns two new $\TM$-configurations $C_L$ and $C_R$ such that 
$C_L$ (resp. $C_R$) is obtained from $C$ by applying a normal NTM
transition $((q,a),(q_L,a_L,d_L))$ (resp. $((q,a),(q_R,a_R,d_R))$). In this
way, a run $\Path$ of $\TM$ is a binary tree, whose nodes are labeled by
$\TM$-configurations and additionally each internal (i.e.
non-leaf) node is labeled by a transition from $\pdsrules$. To determine
whether the run $\Path$ is accepting, we construct a boolean circuit from
$\Path$ as follows: (1) each internal node labeled by a transition 
$((q,a),(q_L,a_L,d_L),(q_R,a_R,d_R), s)$ is replaced by a boolean operator
$s$, and (2) assign 1 (resp. 0) to a leaf node
if the configuration is (resp. is not) in the state $q_{acc}$. The run
$\Path$ is accepting iff the output of this circuit is 1.

We are only interested in tape configurations 
of size $d.n$ for a constant integer $d$, where $n$ is the length of the
input word $w$. Therefore, each configuration of $\TM$ on input $w$ is a word in
$\CONF_N$ defined in the proof of $\W[1]$-hardness. As before, the initial
configuration $I_0(w)$ is
\[
    (0,a_0,c_0)(1,a_1,c_1)\cdots (N,a_N,c_N)
\]
where $a_0a_1\cdots a_N = \blank w \blank^{(d-1)n}$, $c_0 = q_0$, and
$c_1 = \cdots = c_N = \star$. As before, we will also use extension 
$\Omega_e := \Omega_{N,e}$ of $\Omega$ and put a delimiter at the left/right
end of each configuration. 
For each $\sigma = ((q,a),(q_L,a_L,d_L),(q_R,a_R,d_R),s) \in \pdsrules$
and $t \in \{L,R\}$, we define a relation $R_{\sigma,t} \subseteq
\Omega_e^4$ as the log-computable relation $R_{\sigma'}$ defined in the 
previous 
$\W[1]$-hardness proof, where $\sigma'$ is the NTM transition rule
$((q,a),(q_t,a_t,d_t))$.

Our rewrite system $\trs$ will work over the set 
$\CLASS := \{\Root,\Success,L,R\} \cup \Omega_e \cup \pdsrules$ of classes. 
The initial configuration is
$\tree_0 = (\treedomain_0,\treelabeling_0)$ with $\treedomain_0 = \{\epsilon\}$
and $\treelabeling_0(\epsilon) = \{\Root,(-1,\blank,\star),(N+1,\blank,\star)\} 
\cup \{ (i,a_i,c_i) : i \in [0,N]\}$. Define $g := \{\Root,\Success\}$,
which is the guard we want to check for redundancy. Roughly speaking, our 
rewrite system $\trs$ will guess an accepting run $\Path$ of $\TM$ on the
input word $w$. Each accepting configuration (at the leaf) will then be
labeled by $\Success$. This label $\Success$ will be propagated to the root
of the tree according to the $\TM$-transition that produce a $\TM$-configuration
in the tree. More precisely, we define $\trs$ by adding the following
rewrite rules:
\begin{enumerate}
\item
For each $t \in \{L,R\}$, 
\[
    \sigma = ((q,a),(q_L,a_L,d_L),(q_R,a_R,d_R),s)
    \in \pdsrules 
\] 
and $i \in [0,N]$,
add the rule 
\[
    ((i,a,q),\addchild{\{\sigma,t,(-1,\blank,\star),
    (N+1,\blank,\star)\}}
\] 
to $\trs$.
\item
For each $\sigma \in \pdsrules$, $t \in \{L,R\}$, 
and $(\vecV_1,\vecV_2,\vecV_3,\vecW) \in R_{\sigma,t}$,
add the rule
\[
    (\{t,\sigma\}\langle\uparrow\rangle\{\vecV_1,\vecV_2,\vecV_3\},\addclass{\vecW})
\]
to $\trs$.
\item
For each $i \in [0,N]$ and $a \in \salphabet$, add the rule
\[
    (\{(i,a,q_{acc}),\addclass{\Success}) 
\] 
to $\trs$.
\item
For each 
\[
    \sigma = ((q,a),(q_L,a_L,d_L),(q_R,a_R,d_R),s)
    \in \pdsrules 
\] 
if $s = \vee$, then add the rule 
\[
    (\langle \downarrow\rangle\{\sigma,\Success\},
    \addclass{\Success}) \ .
\]
If $s = \wedge$, then add the rule 
\[
    (g,\addclass{\Success})
\]
where
\[
    g = \langle\downarrow\rangle\{\sigma,\Success,L\}\wedge
    \langle\downarrow\rangle\{\sigma,\Success,R\}
\]

\end{enumerate}
The first rewrite rule spawns a child labeled by an $\TM$-transition 
$\sigma$ and a symbol $t\in\{L,R\}$ to indicate whether its a left/right
child in the $\TM$-path to be guessed. The second rewrite rule
determines the configuration of the current node (because an $\TM$-rule
$\sigma$ has been guessed). The third rewrite rule adds a label $\Success$
to accepting configurations. The fourth rewrite rule propagates the
label $\Success$ upwards.

Our translation runs in polynomial-time. Furthermore,
it is easy to see that $\TM$ accepts $w$ iff $g$ is not redundant with respect
to the rewrite system $\trs$ and initial tree $\tree_0$. This completes our
reduction.

\section{Implementation Optimisations}

\subsection{Limiting the Number of Rules}

For each class $c \in \CLASS$ we construct a symbolic pushdown system that is
used to determine whether the class $c$ can be added at a certain node.  We
improve the performance of the pushdown analysis by first removing all rules of
the rewrite system with rules $\trs$ that cannot contribute towards the addition
of class $c$ at a node.  We then perform analysis with the reduced set of rules
$\trs'$.  Since a rule is added for each CSS selector, with intermediate rules
being introduced to simplify the guards on the rules, the removal of irrelevant
rules to a particular class can give significant gains.

We identify rules which may contribute to the addition of class $c$ by a
backwards fixed point algorithm.  We begin by including in $\trs'$ all rules
that either directly add the class $c$, or add a child to the tree.  The reason
we add all $\addchild{B}$ rules is because new nodes may lead to new guards
being satisfied, even if the label of the new child does not appear in the guard
(e.g. $\langle\uparrow\rangle a$ cannot be satisfied without a node labelled
with $a$ having a child).

We then search for any rules $\sigma = (g, \addclass{B}) \in \trs$ such that
there exists $c' \in B$ with $c'$ appearing in the guard of some rule in
$\trs'$.  We add all such $\sigma$ to $\trs'$ and iterate until a fixed point is
reached.  The full algorithm is given in Algorithm~\ref{alg:rules-restrict}

\begin{algorithm}
    \caption{\label{alg:rules-restrict}Restricting $\trs$}
    \begin{algorithmic}
        \Require {A class $c \in \CLASS$ and a set $\trs$ of rewrite rules.}
        \Ensure {$\trs'$ contains all rules relevant to the addition of $c$.}

        \State {$\trs' =
                    \begin{array}{l}
                        \setcomp{\sigma \in \trs}
                                {\sigma = (g, \addclass{B}) \land c \in B}\ \cup \\
                        \setcomp{\sigma \in \trs}{\sigma = (g, \addchild{B})}
                    \end{array}$}

        \Repeat
            \ForAll {$\sigma \in \trs'$ and classes $c'$ appearing in $\sigma$}
                \State {$\trs' = \trs' \cup \setcomp{(g, \addclass{B}) \in \trs}{c' \in B}$}
            \EndFor
        \Until {$\trs'$ reaches a fixed point}
    \end{algorithmic}
\end{algorithm}

\begin{proposition}
    Given a single-node tree $\tree_0$, assumption function $f$ and a class $c
    \in \CLASS$, the rewrite system $\trs$ has a positive solution to the
    class-adding problem $\langle \tree_0, f, c, \trs \rangle$ has a positive
    answer iff the class-adding problem $\langle \tree_0, f, c, \trs' \rangle$
    has a positive answer.
\end{proposition}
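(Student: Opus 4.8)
The plan is to prove the two implications separately. The direction from $\trs'$ to $\trs$ is immediate: Algorithm~\ref{alg:rules-restrict} only ever inserts rules that already belong to $\trs$, so $\trs' \subseteq \trs$, and any witnessing sequence $\tree_0 \to_{\trs',f}^* \tree$ with $c \in \treelabeling(\epsilon)$ is \emph{a fortiori} a witness over $\trs$. All the work is in the converse.

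For the converse I would first extract a clean syntactic description of the rules \emph{omitted} from $\trs'$. Let $\mathcal{C}_{\trs'} \subseteq \CLASS$ be the set of classes occurring in the guard of some rule of $\trs'$. Since the initialisation of Algorithm~\ref{alg:rules-restrict} places \emph{every} $\addchild{B}$-rule into $\trs'$, every omitted rule has the form $(g,\addclass{B})$; and since the initialisation also captures every $\addclass$-rule with $c \in B$ while the repeat loop captures every $\addclass$-rule whose added set $B$ meets $\mathcal{C}_{\trs'}$, an omitted rule $(g,\addclass{B})$ must satisfy $B \cap (\mathcal{C}_{\trs'} \cup \{c\}) = \emptyset$. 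In words: omitted rules never add the target class $c$ and never add any class that is tested by a surviving guard; moreover, by the standing convention that simple rewrite systems use only $\addclass{\cdot}$ and $\addchild{\cdot}$ operations, the only rules that can change the tree \emph{domain}, namely the $\addchild{\cdot}$ rules, are all retained in $\trs'$.

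The core of the argument is then a deletion surgery on a witnessing run. Take any witness $\tree_0 = T_0 \to_{\sigma_1} T_1 \to \cdots \to_{\sigma_n} T_n$ over $\trs$ (under $\models_f$) with $c \in \treelabeling_n(\epsilon)$, delete every step whose rule lies outside $\trs'$, and apply the surviving steps $\sigma_{i_1},\ldots,\sigma_{i_\ell}$ in order to $T_0$, producing $T_0' = T_0, T_1', \ldots, T_\ell'$. I would prove by induction on $j$ the invariant that $T_j'$ has the same tree domain as $T_{i_j}$ and agrees with it on the membership of every class in $\mathcal{C}_{\trs'} \cup \{c\}$ at every node. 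The induction is driven by two observations: (i) each deleted step is an $\addclass{B}$ with $B \cap (\mathcal{C}_{\trs'} \cup \{c\}) = \emptyset$, so it alters neither the domain nor the membership of any $\mathcal{C}_{\trs'} \cup \{c\}$-class, whence an entire block of deleted steps separating two surviving steps is invisible to the invariant; and (ii) the surviving step $\sigma_{i_{j+1}} = (g,\chi) \in \trs'$ has a guard $g$ that, by definition of $\mathcal{C}_{\trs'}$, tests only $\mathcal{C}_{\trs'}$-classes, so both its applicability at the relevant node and its effect depend only on data preserved by the invariant. Since the assumption function $f$ is identical in both problems and is never modified by rewriting, the $\langle \uparrow \rangle$ case at the root is handled by $f$ exactly as in the original run. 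Hence $\sigma_{i_{j+1}}$ fires at the same node in $T_j'$ and yields $T_{j+1}'$ maintaining the invariant. Evaluating the invariant at $j=\ell$ gives $c \in \treelabeling_\ell'(\epsilon)$ because $c \in \treelabeling_n(\epsilon)$ and $c \in \mathcal{C}_{\trs'}\cup\{c\}$, so $\langle \tree_0,f,c,\trs'\rangle$ is positive.

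I expect the main obstacle to be the bookkeeping in this inductive invariant, specifically the verification that deleting a contiguous block of omitted $\addclass$-steps cannot disturb either the domain or the satisfaction of a surviving guard. This rests entirely on the fixpoint characterisation $B \cap (\mathcal{C}_{\trs'}\cup\{c\}) = \emptyset$ for omitted rules, which is precisely why all $\addchild{\cdot}$-rules (the only domain-altering operations) and all $c$-adding rules must be retained; a minor subsidiary point is fixing a canonical choice of fresh child index so that the retained $\addchild{\cdot}$ steps keep the domains literally equal along both runs. Because simple rewrite systems carry no $\removeclass{\cdot}$ or $\removenode$ operations, no additional monotonicity reasoning beyond this invariant is required.
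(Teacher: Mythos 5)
Your proof is correct. Both your argument and the paper's rest on the same core observation: any rule omitted from $\trs'$ is an $\addclass{B}$ rule whose added classes are never inspected by the guards of retained rules and do not include $c$, so such steps can be excised from a witnessing run. The difference lies in the decomposition. The paper proves the hard direction by a head-recursion on the run: it peels off the first rule $\sigma_1$ and case-splits on whether $\sigma_1$ adds a child, directly adds $c$, adds a class that is ``used in the matching'' of a guard of some rule in the inductively obtained tail (in which case the fixpoint forces $\sigma_1 \in \trs'$), or none of these (in which case $\sigma_1$ is dropped). You instead derive the static fixpoint characterisation of omitted rules up front --- $B \cap (\mathcal{C}_{\trs'} \cup \{c\}) = \emptyset$ --- and then perform a single-pass deletion, verifying a forward invariant (equal tree domains; agreement at every node on membership of all classes in $\mathcal{C}_{\trs'} \cup \{c\}$). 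What your organisation buys is rigour at exactly the points the paper leaves informal: the paper's backward-looking criterion ``used in the matching'' is never defined, its claim in the case $c \in B$ that ``the sequence $\sigma_1$ suffices'' is only accurate if one (as you effectively do) prepends $\sigma_1$ to the tail rather than discards the tail, and the justification that dropped steps cannot disturb later guard satisfaction is precisely your invariant, which also makes explicit the two boundary details the paper glosses over ($\models_f$ at the root and the choice of fresh child indices so domains stay literally equal). The paper's version is shorter to state; yours is the more self-contained and checkable rendition of the same idea.
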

\begin{proof}
    The ``if'' direction is direct since $\trs$ contains all rules in $\trs'$.
    The ``only-if'' direction can be shown by induction over the length of a
    sequence of rule applications $\sigma_1, \ldots, \sigma_n$ where $\sigma_n$
    adds class $c$ to the tree.  Note, we relax the proposition for the
    induction hypothesis, allowing $\tree_0$ to be any arbitrary tree.

    In the base case $n = 0$ and the proof is immediate.  Hence, assume for the
    tree $\tree_1$ obtained by applying $\sigma_1$ to $\tree_0$ we have, by
    induction, a sequence of rule applications $\sigma'_1, \ldots, \sigma'_m$
    which is a subsequence of $\sigma_1,\ldots,\sigma_m$,
    resulting in the addition of class $c$.  We need to show there exists such a
    sequence beginning at $\tree_0$.  There are several cases.

    When $\sigma_1 = (g, \addchild{B})$ then $\sigma_1 \in \trs'$ and the
    sequence $\sigma_1, \sigma'_1, \ldots, \sigma'_m$ suffices.

    When $\sigma_1 = (g, \addclass{B})$ there are three cases.  If $c \in B$
    then $\sigma_1 \in \trs'$ and the sequence $\sigma_1$ suffices.  If there is
    some $c' \in B$ used in the matching of some guard $g'$ of a rule in
    $\sigma'_1, \ldots, \sigma'_m$, then, since $c'$ appears in a rule
    in $\trs'$ we have also $\sigma_1 \in \trs'$ and the sequence $\sigma_1,
    \sigma'_1, \ldots, \sigma'_m$ suffices.  Otherwise, no $c' \in B$ is used in
    the matching of some guard $g'$ in $\sigma'_1, \ldots, \sigma'_m$ and the
    sequence $\sigma'_1, \ldots, \sigma'_m$ suffices.
\end{proof}

\subsection{Limiting the Number of jMoped Calls}

By performing global backwards reachability analyses with jMoped we in fact only
need to perform one pushdown analysis per class $c \in \CLASS$.  Given a set of
target configurations, global backwards reachability analysis constructs a
representation of the set of configurations from which a target configuration
may be reached.

Thus, we begin with the set of target configurations $(q, a)$ where $a =
(b_1,\ldots,b_m) \in \{0,1\}^m$ with $b_i = 1$ and $b_i = 1$ indicates the
addition of class $c$ to the root node.  The global backwards reachability
analysis of this target set then gives us a representation of all configurations
that can eventually reach one of the targets, that is, add the class $c$.  In
particular, we obtain from jMoped a BDD representing all $\langle
\tree_v,f,c,\trs\rangle$ that have a positive answer to the class-adding
problem.  We can then use this BDD to determine whether a given class-adding
problem for $c$ represents a positive instance without having to use jMoped
again.

\subsection{Matching Guards Anywhere in the Tree}
We can check whether a guard $g$ is not matched anywhere in the tree by checking
whether $\langle \downarrow^\ast \rangle g$ is redundant.  However, during
the simplification step this results in the addition of a new class for each
guard that has to be propagated up the tree.

We avoid this cost by checking directly whether a guard may be matched anywhere
in the tree.  After simplification this amounts to checking whether a given
class may be matched anywhere in the tree.  Hence, after we have saturated
$\tree_v$, we do a final check to detect whether each class $c$ that does not
appear in the saturated tree might still be able to appear in a node created by
a sequence of $\addchild{B}$ rules.  To do this, we perform a similar class
adding check as above, except we test reachability of a configuration of the
form $(q, w)$ where the top character of the stack $w$ is $a = (b_1,\ldots,b_m)
\in \{0,1\}^m$ with $b_i = 1$.  That is, the node where $c$ was added does not
have to be the root node.  We obtain a single BDD per class as above, and check
it against all assumption functions $f$ that can represent nodes in the
saturated tree.

\section{HTML5 Translation}
\label{sec:html5-translation}

We informally describe here our process for extracting rules from HTML5
documents that use jQuery.  Note, this translation is for experimental purposes
only and does not claim to be a rigorous, systematic, or sound approach.  It is
merely to demonstrate the potential for extracting interesting rewrite rules
from real HTML5 and to obtain some representative benchmarks for our
implementation.  A robust dataflow analysis could in principle be built, and
remains an interesting avenue of future work.

We begin by describing how to translate complete
jQuery calls that do not depend on their surrounding code.  E.g.
\begin{minted}{javascript}
  $('.a').next().append('<div class="b"/>');
\end{minted}
does not depend on any other code in the document, whereas
\begin{minted}{javascript}
  $(x).next().append('<div class="b"/>');
\end{minted}
depends on the value of the \verb+x+ variable.  We will describe in a later
section how we track variables, although we will make mention in when
translating jQuery calls section when a particular variable receives a given
value.  First we describe the translation of CSS selectors and jQuery calls.

\subsection{CSS Selectors}

We translate the following subset of CSS selectors.
\begin{itemize}
    \item
        Each class, element type, or ID is directly translated to a class in our
        model with the same name.

    \item
        Conjunctions of guards are translated to sets of classes.  E.g.
        \verb+div.selected.item+ becomes $\{div, selected, item\}$.

    \item
        Descendant relations are encoded using $\langle\uparrow^+\rangle$.
        That is \verb+x y+, where \verb+x+ and \verb+y+ are selectors with
        translations into guards $g$ and $g'$ respectively becomes
        $\langle\uparrow^+\rangle g \wedge g'$.

    \item
        Child relations are encoded using $\langle\uparrow\rangle$.
        That is \verb+x > y+, where \verb+x+ and \verb+y+ are selectors with
        translations into guards $g$ and $g'$ respectively becomes
        $\langle\uparrow\rangle g \wedge g'$.
\end{itemize}

\subsection{jQuery Calls}

Given a jQuery call of the form
\begin{minted}{javascript}
  $(s).f(x)....g(y).h(z)
\end{minted}
where $s$ is a CSS selector string and \verb+f(x)....g(y).h(z)+ is a sequence of
jQuery calls, we can extract new rules as follows.  We recursively translate the
sequence of calls \verb+$(s).f(x)....g(y)+ where \verb+$(s)+ is the base case,
and obtain from the recursion a guard reflecting the nodes matched by the calls.
Then we generate a rule depending on the effect of \verb+h(z)+.

Note that the call to \verb+$(s)+ returns a jQuery object which contains within
it a stack of sets of nodes matched.  When \verb+s+ is just a CSS selector this
stack will contain a single element that is the set of nodes matched by
\verb+s+.  Each function in the sequence \verb+f(x)....g(y).h(z)+ receives as
its \verb+this+ object the jQuery object and obtains a new set of nodes derived
from the nodes on the top of the stack.  This new set is then pushed onto the
stack and passed to the next call in the sequence.  There is a jQuery function
\verb+end()+ that simply pops the top element from the stack and passes the
remaining stack to the next call.  We model this stack as a stack of guards
during our translation.

We start with the base case \verb+$(s)+.  In this case we translate the CSS
selector \verb+s+ into a guard $g$ and return the single-element stack $g$.
Note: this assumes \verb+s+ is a string constant.  Otherwise we set $g$ to be
true (i.e.  matches any node).

In the recursive case we have a call \verb+l.f(x)+ where \verb+l+ is a sequence
of jQuery calls and \verb+f(x)+ is a jQuery call.  We obtain a stack $\sigma$ by
recursive translation of \verb+l+ and then do a case split on \verb+f+.  In each
case we obtain a new stack $\rho$ which is returned by the recursive call.  In
some cases we also add new rules to our translated model.  We describe the
currently supported functions below.  In all cases, let $\sigma = \sigma' g$.
Note, our implementation directly supports both $\langle \downarrow^\ast \rangle
\varphi$ and $\langle \downarrow^+ \rangle \varphi$, and similarly for
$\langle \uparrow^\ast \rangle \varphi$ and $\langle \uparrow^+ \rangle \varphi$.

\begin{itemize}
    \item
        \verb+animate(...)+: we return $\rho = \sigma$.

    \item
        \verb+addBack()+: let $\sigma = \sigma'' g' g$, we set $\rho = \sigma''
        (g \vee g')$.

    \item
        \verb+addClass(+$c$\verb+)+: set $\rho = \sigma$ and add the rule $(g,
        \addclass{\{c\}})$ when $c$ is a string constant (otherwise add all
        classes).

    \item
        \verb+ajax(...)+: we attempt to resolve all functions $f$ in the
        argument to the call to constant functions (usually this coded directly
        in place by the programmer).  If we are able to do this, we translate
        $f$ with the \verb+this+ variable set to $\sigma$.  We return $\rho =
        \sigma$.  If we cannot resolve $f$ we ignore it (the body of $f$ will be
        translated elsewhere with \verb+this+ matching any node).

    \item
        \verb+append(+$s$\verb+)+: we attempt to obtain a constant string from
        $s$.  If $s$ is a concatenation of strings and variables, we attempt to
        resolve all variables to strings.  If we fail, we make the simplifying
        assumption that variables do not contain HTML code (this can be
        controlled via the command line) and omit them from the concatenation.
        If we are able to obtain a constant HTML string in this way we use a
        sequence of new rules and classes to build that tree at the nodes
        selected by $g$.  E.g. $(g, \addchild{\{x, a\}})$, $(\{x\},
        \addchild{\{b\}})$, $(\{x\}, \addchild{\{c\}})$ where $x$ is a fresh
        class adds a sub-tree containing a node with class ``a'' with two
        children with classes ``b'' and ``c'' respectively.  In the case where
        we cannot find a constant string (e.g. if we assume variables may
        contain HTML strings), we simply add rules to construct all possible
        sub-trees.

    \item
        \verb+bind(..., f)+: we attempt to resolve $f$ to a constant function
        (usually this is provided directly by the programmer).  If we are able
        to do this, we translate $f$ with the \verb+this+ variable set to
        $\sigma$.  We return $\rho = \sigma$.  If we cannot resolve $f$ we
        ignore it (the body of $f$ will be translated elsewhere with \verb+this+
        matching any node)

    \item
        \verb+blur(f)+: we treat this like a call to \verb+bind(..., f)+.

    \item
        \verb+children(+$s$\verb+)+: we first obtain from the CSS selector $s$ a
        guard $g'$ (or true if $s$ is not constant), then we set $\rho = \sigma
        (\langle \uparrow \rangle g \wedge g')$.

    \item
        \verb+click(+$f$\verb+)+: same as \verb+blur()+.

    \item
        \verb+closest(+$s$\verb+)+:  we first obtain from the CSS selector $s$ a
        guard $g'$ (or true if $s$ is not constant), then we set $\rho = \sigma
        (\langle \downarrow^\ast \rangle g \wedge g')$.

    \item
        \verb+data(...)+: we return $\rho = \sigma$.

    \item
        \verb+each(+$f$\verb+)+: we attempt to resolve $f$ to a constant, as
        with \verb+click+.  The function $f$ may take 0 or 2 arguments (the
        second being the matched element).  In both cases we translate $f$ with
        \verb+this+ set to $\sigma$.  If it has arguments, then the second
        argument is also set to $\sigma$.  We return $\rho = \sigma$.

    \item
        \verb+eq(...)+: we return $\rho = \sigma g$.

    \item
        \verb+fadeIn(...)+: we return $\rho = \sigma$.

    \item
        \verb+fadeOut(...)+: we return $\rho = \sigma$.

    \item
        \verb+focus(f)+: same as \verb+blur()+.

    \item
        \verb+end()+: we return $\rho = \sigma'$.

    \item
        \verb+find(+$s$\verb+)+: we obtain from the CSS selector $s$ a
        guard $g'$ (or true if $s$ is not constant), then we set $\rho = \sigma
        (\langle\uparrow^+\rangle g \wedge g')$.

    \item
        \verb+filter(...)+: we simply return $\rho = \sigma g$.

    \item
        \verb+first(...)+: we simply return $\rho = \sigma g$.

    \item
        \verb+has(+$s$\verb+)+: we obtain from the CSS selector $s$ a
        guard $g'$ (or true if $s$ is not constant), then we set $\rho = \sigma
        (\langle\downarrow^+\rangle g' \wedge g)$.

    \item
        \verb+hover(+$f_1$\verb+,+$f_2$\verb+)+: we attempt to resolve $f_1$ and
        $f_2$ to constant functions (else we ignore them to be translated
        elsewhere with less precise guards).  Both $f_1$ and $f_2$ take a single
        argument, which is the element being hovered over.  We translate both
        functions with the argument and the \verb+this+ variable set to
        $\sigma$.

    \item
        \verb+html(+$s$\verb+)+: this is handled like \verb+append+.

    \item
        \verb+next(+$s$\verb+)+: we obtain from the CSS selector $s$ a guard
        $g'$ (or true if $s$ is not constant or not supplied), then we set $\rho
        = \sigma (\langle\uparrow\rangle\langle\downarrow\rangle g \wedge g')$.

    \item
        \verb+nextAll(+$s$\verb+)+: we obtain from the CSS selector $s$ a guard
        $g'$ (or true if $s$ is not constant or not supplied), then we set $\rho
        = \sigma (\langle\uparrow\rangle\langle\downarrow\rangle g \wedge g')$.

    \item
        \verb+not(...)+: we ignore the negation and simply return $\rho = \sigma
        g$.

    \item
        \verb+on(+$e$\verb+,+$s$\verb+,+$f$\verb+)+: we obtain from the CSS
        selector $s$ a guard $g'$ (or true if $s$ is not constant or not
        supplied), and we attempt to resolve $f$ to a constant function (like
        \verb+click+).  We then translate $f$ with \verb+this+ set to $\sigma
        (\langle\uparrow^+\rangle g \land g')$.  We return with $\rho = \sigma$.

    \item
        \verb+parent(+$s$\verb+)+: we obtain from the CSS selector $s$ a guard
        $g'$ (or true if $s$ is not constant or not supplied), then we set $\rho
        = \sigma (\langle\downarrow\rangle g \wedge g')$.

    \item
        \verb+parents(+$s$\verb+)+: we obtain from the CSS selector $s$ a guard
        $g'$ (or true if $s$ is not constant or not supplied), then we set $\rho
        = \sigma (\langle\downarrow^+\rangle g \wedge g')$.

    \item
        \verb+prepend(+$s$\verb+)+: this is handled like \verb+append+.

    \item
        \verb+prev(+$s$\verb+)+: we obtain from the CSS selector $s$ a guard
        $g'$ (or true if $s$ is not constant or not supplied), then we set $\rho
        = \sigma (\langle\uparrow\rangle\langle\downarrow\rangle g \wedge g')$.

    \item
        \verb+prevAll(+$s$\verb+)+: we obtain from the CSS selector $s$ a guard
        $g'$ (or true if $s$ is not constant or not supplied), then we set $\rho
        = \sigma (\langle\uparrow\rangle\langle\downarrow\rangle g \wedge g')$.

    \item
        \verb+remove(+$s$\verb+)+: this function is ignored and we return $\rho
        = \sigma$.

    \item
        \verb+removeClass(+$s$\verb+)+: this function is ignored and we return
        $\rho = \sigma$.

    \item
        \verb+resize(f)+: same as \verb+blur()+.

    \item
        \verb+show(...)+: we return $\rho = \sigma$.

    \item
        \verb+slideUp(...)+: we return $\rho = \sigma$.

    \item
        \verb+slideDown(...)+: we return $\rho = \sigma$.

    \item
        \verb+stop(...)+: we return $\rho = \sigma$.

    \item
        \verb+val(...)+: we return $\rho = \sigma g$.
\end{itemize}

We also ignore the following functions that do not return a jQuery object or
manipulate the DOM tree: \texttt{extend()}, \texttt{hasClass()},
\texttt{height()}, \texttt{is()}, \texttt{width()}.

\subsection{More Complex JavaScript}

In reality, it is rare for jQuery statements to be completely independent of the
rest of the code.  We extend our translation by identifying common jQuery
programming patterns to enable the generation of more precise rules.  This is,
in effect, a kind of ad-hoc dataflow analysis over the syntax tree of the
JavaScript.

We remark that current JavaScript static analysers do not track the kind of
information needed to build meaningful guards from the jQuery calls.  Indeed,
implementing a systematic dataflow analysis for propagating jQuery guards
remains an interesting avenue of future research.

Often a call will take the form
\begin{minted}{javascript}
      $(x).f(y)
\end{minted}
for some variable \verb+x+.  In the worst case one can assume any value for
guards derived from \verb+x+ by using the guard $\top$.  In other cases it is
possible to be more precise.

When translating a call such as
\begin{minted}{javascript}
    $('.a').each(function () {
        $(this).addClass('b');
    ));
\end{minted}
for example, is it easy to infer that the selection returned by the \verb+this+
variable is all elements with the class ``a''.  Hence, when we are translating
constant functions such as these, we track the guards in the \verb+this+
variable as described in the previous section.  Similarly, if we wrote
\begin{minted}{javascript}
    $('.a').each(function (i, e) {
        $(e).addClass('b');
    ));
\end{minted}
we pass the stack of guards from the \verb+$('a')+ selector to the \verb+e+
variable, and are thus able to interpret \verb+$(e)+.

Note, we are also able to translate \verb+$(+$s$\verb+, x)+ where \verb+x+ is
\verb+this+ or some variable known to hold a jQuery object with stack $\sigma g$
by using the guard
\[
    \langle \uparrow^+ \rangle g \wedge g'
\]
where $g'$ is the guard obtained from the selector $s$.  Similarly
\verb+$(document)+ can be translated to the root node of the HTML.

We can further improve the analysis by tracking variables declared in the
JavaScript.  To this end, we identify all variables that are assigned outside of
a loop or conditional variable (i.e. assigned only once) and remember their
value for use in later translations.  This captures common cases such as
\begin{minted}{javascript}
    var eles = $('myelements');
    ...
    eles.addClass('b');
    eles.append('<p>Some test.</p>');
\end{minted}

Finally, we also provide some support for user-defined jQuery functions.  In a
preprocessing step, we look for lines of the form
\begin{minted}{javascript}
    $.fn.f = function (args) { ... };
\end{minted}
and keep a map from \verb+f+ to the function definition.  Thus, when we find
elsewhere in the code a jQuery call of the form
\begin{minted}{javascript}
    ...f(params)...
\end{minted}
we can translate the function body of \verb+f+ with the \verb+this+ variable set
to the current jQuery stack, any parameters can be passed to the arguments of
the function.  Note, if we discover a line of \verb+f+ which overwrites a
parameter variable, we will forget the value passed from the call and use a top
value instead.

\end{document}